\newcommand{\itamar}[1] {\todo[color=blue!20,inline]{Itamar: #1}}
\newcommand{\paolo}[1]{\todo[color=green!20,inline]{Paolo: #1}}
\newcolumntype{P}[1]{>{\centering\arraybackslash}p{#1}}
\newcolumntype{M}[1]{>{\centering\arraybackslash}m{#1}}
\algnewcommand{\AND}{\textbf{and}\xspace}
\algnewcommand\algorithmicforeach{\textbf{for each}}
\algrenewcommand\algorithmicindent{1.0em}%
\algnewcommand{\LineComment}[1]{\Statex \hskip\ALG@thistlm \(\triangleright\) #1}
\newbox\statebox
\newcommand{\myState}[1]{%
    \setbox\statebox=\vbox{#1}%
    \edef\thealgruleheight{\dimexpr \the\ht\statebox+1pt\relax}%
    \edef\thealgruledepth{\dimexpr \the\dp\statebox+1pt\relax}%
    \ifdim\thealgruleheight<.75\baselineskip
        \def\thealgruleheight{\dimexpr .75\baselineskip+1pt\relax}%
    \fi
    \ifdim\thealgruledepth<.25\baselineskip
        \def\thealgruledepth{\dimexpr .25\baselineskip+1pt\relax}%
    \fi
    \State #1%
    \def\thealgruleheight{\dimexpr .75\baselineskip+1pt\relax}%
    \def\thealgruledepth{\dimexpr .25\baselineskip+1pt\relax}%
}
\newtheorem{theorem}{Theorem}
\newtheorem{corollary}[theorem]{Corollary}
\newtheorem{lemma}[theorem]{Lemma}
\newtheorem{proposition}[theorem]{Proposition}
\newtheorem{definition}[theorem]{Definition}
\newcommand{\norm}[1]{\left\Vert#1\right\Vert}
\newcommand{\abs}[1]{\left\vert#1\right\vert}
\newcommand{\set}[1]{\left\{#1\right\}}
\DeclareMathOperator*{\argmin}{arg\,min}
\DeclareMathOperator*{\argmax}{arg\,max}
\DeclarePairedDelimiter\floor{\lfloor}{\rfloor}
\newcommand{\Gc}{\mathcal G}
\newcommand{\Hc}{\mathcal H}
\newcommand{\Lc}{\mathcal L}
\newcommand{\Pc}{\mathcal P}
\newcommand{\Sc}{\mathcal S}
\newcommand{\Tc}{\mathcal T}
\newcommand{\Uc}{\mathcal U}
\newcommand{\av}{\mathbf a}
\newcommand{\hv}{\mathbf h}
\newcommand{\yv}{\mathbf y}
\newcommand{\muv}{\bm \mu}
\newcommand{\muvstar}{\bm \mu^*}
\newcommand{\migProb}{DMP}
\newcommand{\cpuall}{GFA}
\newcommand{\bu}{BU}
\newcommand{\pushUp}{PU}
\newcommand{\algtop}{BUPU}
\newcommand{\opt}{LBound}
\newcommand{\ffit}{F-Fit}
\newcommand{\cpvnf}{CPVNF}
\newcommand{\figWidth}{8 cm}
\date{}
\title{Dynamic Service Provisioning in the Edge-cloud\\ Continuum with Provable Guarantees
}
\author{
\IEEEauthorblockN
{
Itamar Cohen\IEEEauthorrefmark{1},
Carla Fabiana Chiasserini\IEEEauthorrefmark{1},
Paolo Giaccone\IEEEauthorrefmark{1}, and
Gabriel Scalosub\IEEEauthorrefmark{2}
}

\IEEEauthorblockA
{
\IEEEauthorblockA{\IEEEauthorrefmark{1}Department of Electronics and Telecommunications, Politecnico di Torino, Italy}\\
\IEEEauthorblockA{\IEEEauthorrefmark{2}School of Electrical and Computer Engineering, Ben-Gurion University of the Negev, Beer Sheva, Israel}
\\itamar.cohen@polito.it, carla.chiasserini@polito.it, paolo.giaccone@polito.it, sgabriel@bgu.ac.il}

}
\begin{document}
\maketitle
\begin{abstract}

We consider a hierarchical edge-cloud architecture in which services are provided to mobile users as chains of virtual network functions. Each service has specific computation requirements and target delay performance, which require placing the corresponding chain properly  and allocating a suitable amount of computing resources. Furthermore, chain migration may be necessary to meet the services' target delay, or convenient to keep the service provisioning cost  low.  We tackle such issues by formalizing the problem of optimal chain placement and resource allocation in the edge-cloud continuum, taking into account migration, bandwidth, and computation costs.  Specifically,  we first envision an algorithm that, leveraging resource augmentation,  addresses the above problem and provides an upper bound to the amount of resources required to find a feasible solution. We use this algorithm as a building block to devise an  efficient approach targeting the minimum-cost solution, while minimizing the required resource augmentation. Our results, obtained through trace-driven, large-scale  simulations, show that our  solution can provide a feasible solution by using  half the amount of  resources required by  state-of-the-art alternatives.
\end{abstract}

\section{Introduction}
%
{Today's networks offer an unprecedented level of resource virtualization, available as a continuum from the edge to the cloud~\cite{FollowMe_J, MoveWithMe, Justifies_path_to_root_n_CLP_vehs, SFC_mig, Justify_CLP_tree_ISPs, micado_orchestrator, orch_cloud2edge_survey}. 
These virtual resources are embodied as a collection of datacenters that host service function chains.
These service chains provide a plethora of applications, including infotainment~\cite{MoveWithMe}, road safety~\cite{Justifies_path_to_root_n_CLP_vehs} and 
virtual network functions
~\cite{Justify_CLP_tree_ISPs, CPVNF_proactive_place_in_CDN, mig_or_reinstall, APSR}. These applications have versatile service requirements; for instance, a road safety application requires low latency, which may dictate processing it in an edge datacenter, close to the user. On the other hand, infotainment tasks are more  computation-intensive but less latency-sensitive, and therefore may be offloaded to the cloud, where computation resources are abundant and cheap~\cite{tong2016, SFC_mig}. 

Deploying service function chains is even more challenging when dynamic traffic conditions exist and/or some of the users are mobile. 
In such cases, service chains may need to be migrated in order to follow the mobile user and, thus, reduce latency~\cite{FollowMe_J,MoveWithMe, mig_correlated_VMs}. 
However, when the system is highly-loaded, there may not be enough available resources in the migration's destination. Hence, providing reliable service may compel using some over-provisioning, or {\em resource augmentation} -- at the cost of increasing the system's capital expenses. 
Existing schemes~\cite{CPVNF_proactive_place_in_CDN, MoveWithMe, mig_correlated_VMs, Companion_Fog, dynamic_sched_and_reconf_t, SFC_mig, Dynamic_SFC_by_rtng_Dijkstra, mig_or_reinstall, Avatar}
 perform well when the system load is not too high, but fail to provide a feasible solution under a high load of service requests.
To the best of our knowledge, no previous work provides guarantees of finding a feasible solution for the problem whenever such a solution exists. 

In this work, we study the combined service {\em Deployment and Migration Problem} (\migProb)  in a multi-tier network, where the  service orchestrator~\cite{micado_orchestrator} has to decide: 
\begin{inparaenum}[(i)]
\item where to deploy a service chain  across the cloud-edge continuum,
\item which resources to allocate for each part of every service chain, and
\item which chains to migrate, and to which datacenter, to fulfill the service requirements while minimizing the overall deployment and migration costs.
\end{inparaenum}
Our main contributions are as follows:
\begin{itemize}
\item We  first formalize the \migProb, and show that even finding a feasible solution to the problem -- regardless of its cost -- is NP-hard.
\item 
We take latency as the main Key Performance Indicator  (KPI)~\cite{Okpi}, as specified by the Service Level Agreement (SLA), and show how to calculate the minimal amount of CPU resources required for placing every service chain on any datacenter, while satisfying the latency  requirements.
\item We develop a placement algorithm that, leveraging some bounded amount of resource augmentation, is guaranteed to provide a feasible solution whenever such a solution exists for the case with no resource augmentation.
\item We present an algorithm that, given a feasible solution, greedily decreases its cost, while keeping the required resource augmentation minimal.
\item We compare the performance of our proposed solution to those of existing alternatives using two large-scale vehicular scenarios and real-world antenna locations. 
Our results show that our algorithm can provide a feasible solution using half the computing resources required by existing alternatives. Our evaluation further highlights several system trade-offs, such as the preferred decision period between subsequent runs of the algorithm. 
\end{itemize}
}
The rest of the paper is organized as follows. After introducing  the system model in Sec.~\ref{sec:system}, we formalize the optimal deployment and migration problem in Sec.~\ref{sec:problem}, and overview our solution concept in Sec.~\ref{sec:alg_concept}. 
The problem is decomposed into a computational resource allocation problem, studied and solved in Sec.~\ref{sec:alloc}, and a placement problem, characterized and solved in Sec.~\ref{sec:bu}. Our overall algorithmic solution is described in Sec.~\ref{sec:top_lvl} and its performance is assessed in Sec.~\ref{Sec:sim}. Finally, Sec.~\ref{sec:related} discusses related work, and Sec.~\ref{sec:conc} draws some conclusions.

\section{Modeling the edge-cloud architecture}\label{sec:system}

This section introduces the  model for the network infrastructure and the services offered to mobile users and describes how we compute the service delay. 

\subsection{Network model}\label{sec:netw_model}
We consider a fat-tree edge-cloud hierarchical network architecture. As described in~\cite{tong2016}, the network comprises: 
\begin{inparaenum}[(i)]
\item {\em datacenters} (denoting generic computing resources),
\item {\em switches} (generic switching nodes, as routers, switches, multiple switches associated with Multi-Chassis Link-Aggregation (MCLA)~\cite{mcla}), and
\item {\em radio Points of Access (PoA)}. 
\end{inparaenum}
Datacenters are connected through switches, and PoAs may
have a co-located datacenter~\cite{Mig_in_Mobile_Edge_Clouds}. 
Each user is connected to the network through a PoA, which may vary as the user moves.
An example of such a system is depicted in Fig.~\ref{fig:net}. 

We denote by $\Sc$ the set of datacenters, and model the logical multi-tier network as a directed graph ${\Gc} = (\Sc,\Lc)$  
where the vertices are the datacenters, while the edges are the directed   virtual links connecting them, i.e., $(i,j)\in \Lc$ with $i,j \in \Sc$.
Let $D(\Gc)$ denote the {\em diameter} of $\Gc$, and $r$ the root of the fat tree topology.
For any two datacenters $i,j$,  $\Pc(i,j)$ denotes the directed path from $i$ to $j$, with  $\Pc(i,j)$ referring to  a sequence of physical links, or vertices, depending on the context.
We consider that such a path is loop-free and uniquely predetermined between any two vertices. 

\subsection{Services and chain deployment}\label{sec:service_model}
{Consider a generic user generating} a {\em service request} $u$, originating at the PoA $p^u$ to which the user is currently connected. 
Each service request is addressed through an instance of   VNF {\em chains}, where each VNF is deployed on a dedicated virtual machine (VM) or container in a datacenter.
For the convenience of presentation, hereinafter we refer to VMs only.
{We refer to the instance of the chain for service request $u$ as $\hv^u=(v^u_1,\ldots,v^u_{h_u})$,  where $h_u$ indicates the number of VMs in $\hv^u$.
Let $\Uc$ denote the set of service requests, and $\Hc$  the set of corresponding chains that are currently deployed, or need to be deployed, in the network.}
Furthermore, for every subset of requests $\Uc' \subseteq \Uc$, we let $\Hc' \subseteq \Hc$ denote the subset of chains corresponding to $\Uc'$.
For simplicity of notation, while referring to VMs and datacenters hosting them, we will drop superscripts and subscripts whenever clear from the context.

To successfully serve chain $\hv^u$,  the chain should be fully deployed on one of the datacenters on the path from its PoA $p^u$ to 
root $r$~\cite{Justifies_path_to_root_n_CLP_IEICE, Justifies_path_to_root_n_CLP_vehs, Justify_CLP_tree_ISPs}. We denote that path by $\Pc(p^u,r)$. 
Distinct deployment decisions incur distinct {\em costs} that we detail in Sec.~\ref{sec:problem}.

Each service is associated with an SLA, which specifies the requirements in terms of KPIs~\cite{Okpi},
and with a maximum amount of resources, e.g., for which the user is willing to pay the network provider.
We consider {\em latency} as the most relevant KPI, although 
our model could be extended to others, like throughput and energy consumption.
We thus associate with each chain $\hv^u$ a {\em target delay} $\Delta(\hv^u)$.

\begin{figure}[!t]
\begin{center}
    \includegraphics[width=0.35\textwidth]{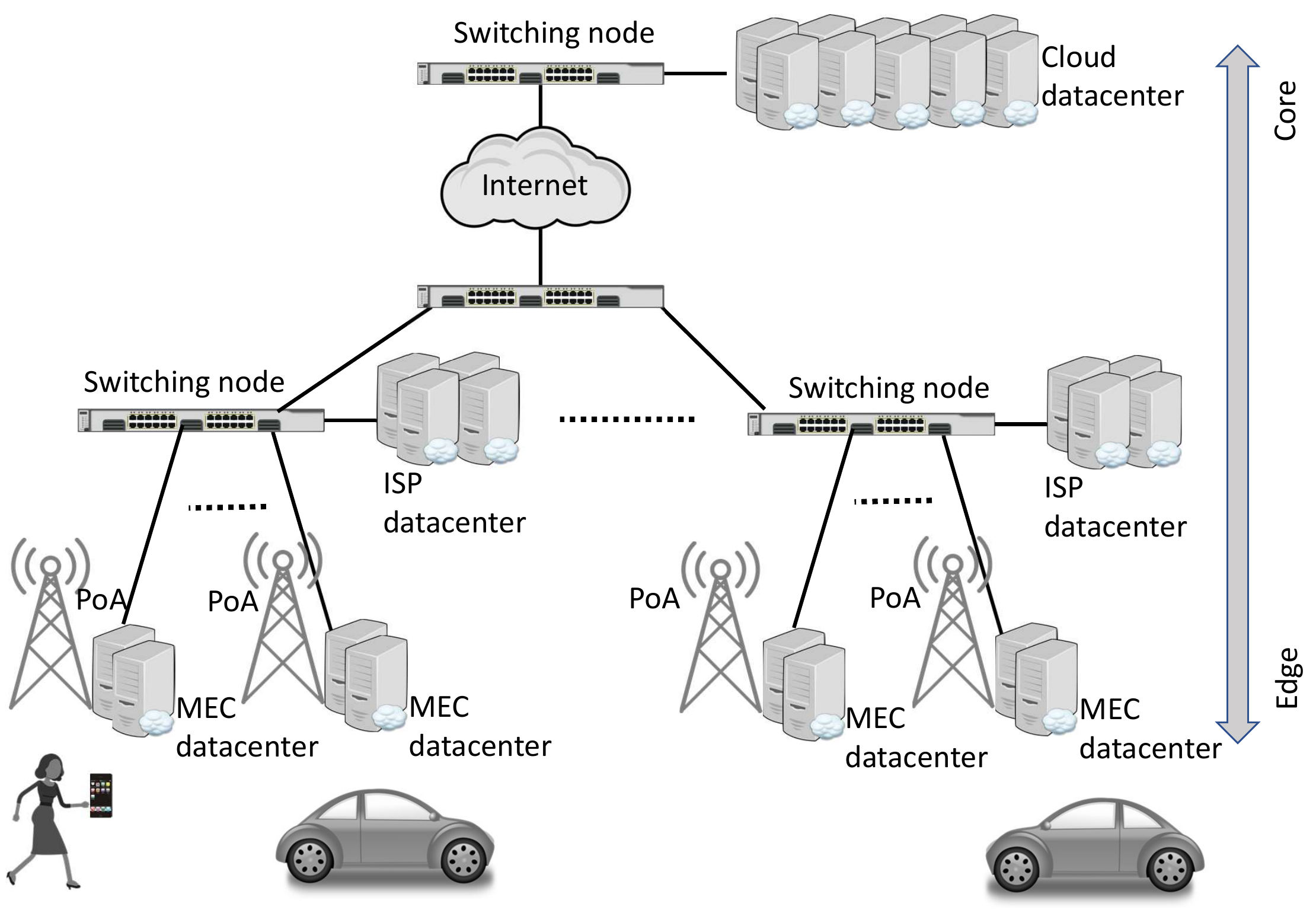}
    \caption{Example of network scenario for mobile service provisioning.  }\label{fig:net}
\end{center}
\vspace{-4mm}
\end{figure}

\subsection{Service delay}\label{sec:delay_model}
The service delay comprises  the computational and the network delays, as detailed below.

\textbf{Computational delay.} 
Given chain $\hv^u$, each VM $v^u_k \in \hv^u$
has some input traffic load, $\lambda^u_k$, expressed in bit/s, which is known a-priori~\cite{mig_correlated_VMs, SFC_mig}.
In particular, $\lambda^u_1$ denotes the input traffic to the chain at the PoA associated with the request.
We let $\theta^u_k$ represent the processing capacity required to handle 
a single unit of traffic corresponding to $v^u_k$, expressed in CPU cycles/bit. Thus, $\lambda^u_k \theta^u_k$ represents the  CPU cycles/s required to process the incoming traffic.
The computation is defined in terms of  single  data units to be processed.
Let $\gamma_k^u$ be the number of bits per data unit, then $\gamma_k^u \theta_k^u$ is the  number of CPU cycles required to process each data unit\footnote{As an example, a single data unit could be a video frame  to process in an  object-recognition VNF. Different image resolutions will result in different values of $\gamma^u_k$ and $\theta_k^u$. In a DPI application, a single data unit would be instead a single data packet.}.

Let chain $\hv^u$ be placed on datacenter $s$.
For each VM $v^u_k \in \hv^u$,  $\mu^{u,s}_k$ denotes the processing capacity allocated to such VM on $s$, expressed in number of CPU cycles/s.
As often done in the  literature~\cite{Okpi, Joint_VNF_placement_n_cpu_alloc_Carla_Francesco, OsS_aware_VNF_placement_using_MM1_model,mm1jsac},  CPU processing at the VM is modeled through an M/M/1 queue. 
The average computational delay at VM $v^u_k$ to process one data unit is given by 
\tagged{long}{\begin{equation}\label{def:per_VM_comp_delay}
D^u_k (\mu^{u,s}_k) = {\gamma_k^u \theta_k^u}/({\mu^{u,s}_k - \theta^u_k \lambda^u_k}),  \end{equation}}
\tagged{short}{$D^u_k (\mu^{u,s}_k) = {\gamma_k^u}/({\mu^{u,s}_k - \theta^u_k \lambda^u_k})$,}%
where we must have $\mu^{u,s}_k > \theta^u_k \lambda^u_k$.
The overall {computational delay} of chain $\hv^u$ is thus given by:
\iftagged{short}{$d^c(\hv^u,\muv^{u,s}) = 
\sum_{k = 1}^{h_u} D^u_k (\mu^{u,s}_k)$.}{
\begin{equation}
\label{Eq:def_chain_comp_delay}
d^c(\hv^u,\muv^{u,s}) = 
\sum_{k = 1}^{h_u} D^u_k (\mu^{u,s}_k).
\end{equation}}
To reflect real-world conditions, when allocating virtual cores to VMs, we consider that $\mu^{u,s}_k$ is an integer multiple of a basic CPU speed, and, thus,   $\mu^{u,s}_k$ is discrete and  will be expressed in  CPU units in the following;  $\theta_k^u$ is coherently expressed in a fractional value of CPU units.

We denote by $\hat{C}_u$ the maximum amount of computing resources that may be allocated to chain $\hv^u$ as per the SLA.
Each datacenter $s\in \Sc$ has a total processing capacity $C_s$, 
expressed in number of CPU cycles/s.
It is fair to assume that $\hat{C}_u$ does not exceed the processing capacity of any single datacenter in the system,
i.e., $\forall s \in \Sc, u \in \Uc, \hat{C}_u \leq C_s$.

\textbf{Network delay.}
{The intra-datacenter communication delay is typically negligible~\cite{justifies_neglecting_intra_DC_comm} compared to the delays  in the network connecting the datacenters. Thus, we will consider here just the delays in the inter-datacenter communications.} 

We consider a deterministic system with token bucket controlled  traffic (as for TSPEC in IntServ) and rate-latency, as in~\cite{netcalc}. 
We consider the delay 
accrued along
the path traversed by chain $\hv^u$'s traffic in the most general case, where such a path includes both uplink and downlink traffic transfers. 
To seamlessly model the downlink traffic,
we denote by $\lambda^u_{h_u+1}$  the traffic from the last VM in the chain back to the PoA of the request.

Each link $(i,j)$  is associated with propagation delay $T_{p}(i,j)$ and bandwidth capacity $C(i,j)$. 
We assume that on each link the bandwidth is partitioned between all the traversing flows, and a link scheduler provides a rate guarantee for each chain $\hv^u$ equal to 
$\lambda_1^u$ for uplinks, and $\lambda_{h_u+1}^u$ for downlinks.
We assume that the bandwidth on each link is
sufficiently provisioned,
thus $C(i,j)$ is large enough to accommodate all the traffic flows between $i$ and $j$, thus avoiding blocking events.

We assume that ingress and egress chain traffic is leaky-bucket regulated, with maximum burstiness $\sigma_u$. 
The link $(i,j)$ scheduler is a rate-latency server (e.g., a PGPS scheduler~\cite{netcalc}) with 
the appropriate rate $\lambda^u$ as specified above
and latency $L_{\max}/C(i,j)$, where $L_{\max}$ is a constant depending upon the adopted scheduler (e.g., equal to the maximum packet size for PGPS). 
Using network calculus~\cite{netcalc} and defining  $\tau(i,j)=L_{\max}/C(i,j)+
T_{p} (i,j)$, the delay experienced on link $(i,j)$ can be upper-bounded by: 
\iftagged{short}{${\sigma_u}/{\lambda^u}+ 
\tau(i,j)$.}
{ 
\begin{equation}\label{eq:single_hop_delay}
\dfrac{\sigma_u}{\lambda^u}+ 
\tau(i,j)   \,.
\end{equation}
}
Then, recalling the ``pay bursts only once'' result~\cite{netcalc},
the {\em network delay} associated with chain $\hv^u$ is:
\begin{align}
\label{Eq:def_chain_netw_delay}
d^n(\hv^u,s) &= \dfrac{\sigma_u}{\lambda_1^u} + \dfrac{\sigma_u}{\lambda_{h_u+1}^u} + 
\sum_{{(i,j) \in \Pc(p^u,s) \cup 
 \Pc(s,p^u)}} \tau(i,j) 
\end{align}
where $s$ is the datacenter on which chain $\hv^u$ is deployed. 

\textbf{Total service delay.}
The total delay of chain $\hv^u$ is then given by the sum of its computational and network delay, i.e.,
\begin{equation}
\label{Eq:def_total_chain_delay}
d(\hv^u, \muv^{u,s}, s) =  d^c(\hv^u,\muv^{u,s}) + d^n (\hv^u,s).
\end{equation}

\section{The deployment and migration problem}\label{sec:problem}

The delay experienced by a chain may vary over time because
\begin{inparaenum}[(i)]
\item the PoA of the request, hence the network delay, has changed, or,  
\item there is a traffic surge/reduction, and the processing time of the chain VMs changes~\cite{Dynamic_user_demands}.
\end{inparaenum}
We assume that a monitoring system predicts the performance of the deployed services every $T$ time units (hereinafter also referred to as the {\em decision period}),
and it%
\tagged{long}{
identifies
the set of}%
\tagged{short}{identifies the set of}
{\em critical} chains, 
whose experienced latency  is expected to  violate the delay constraints due to changes in the requests'
attributes (e.g., PoA, or values of $\lambda^u_k$).
The delay constraint of a critical chain may dictate migrating that chain to reduce its delay. 
{ Every decision period, the service orchestrator  
 decides on the destination datacenter and on the resources to allocate for the new chains and for the critical chains that need to be migrated. Notably, according to our definition, $\Hc$ comprises the new chains, the critical chains, and the remaining currently deployed non-critical chains.}

In what follows, we formulate an optimization problem defining the framework in which such decisions are made with the aim of minimizing the migration and the system operational cost. 
\tagged{long}{We first introduce the problem decision variables, constraints,  system costs, and, then,  our objective function. 
Finally, we discuss the problem  complexity. }

{\bf Decision variables.} 
Let $\yv$ denote the Boolean {\em placement} decision variables.
Namely, $y(u,s) = 1$ iff chain $\hv^u$ is scheduled to run on datacenter $s$ {in the following decision period}.
The {\em allocation} decision variables, $\muv$, determine, for every VM $v^u_k$ of  chain $\hv^u$, the amount of computing capacity $\mu^{u,s}_k$ to be allocated for this VM on datacenter $s$ hosting the chain.
Any choice for the values of the 
$\yv$- and $\muv$-variables comprises a {\em solution} to our problem, specifying 
(i) where new chains are deployed and what computing resources each of their VMs gets, and (ii) which existing chains are migrated, where they are migrated to, and what computing resources each of their VMs use.

{\bf Constraints.} 
The following constraints hold:
\begin{align}
\textstyle
\sum_{s \in \Sc} y(u,s)
&= 1
&\forall \hv^u \in \Hc \label{problem_constraint:single_placement} \\
\textstyle
y(u,s) \cdot 
d(\hv^u, \muv^{u,s}, s) 
&\leq \Delta(\hv^u)
&\forall \hv^u \in \Hc, \forall s \in \Sc \label{problem_constraint:target_delay}\\
\textstyle
y(u,s) \cdot \theta^u_k \cdot \lambda^u_k
&\leq \mu^{u,s}_k
&\forall \hv^u \in \Hc, \forall s \in \Sc \label{problem_constraint:finite_positive_delay}\\
\textstyle
\sum_{k=1}^{h_u} \mu^{u,s}_k
&\leq \hat{C}_u
&\forall \hv^u \in \Hc \label{problem_constraint:chain_computatioal_capacity}\\
\textstyle
\sum_{\hv^u \in \Hc} \sum_{k=1}^{h_u} \mu^{u,s}_k
&\leq C_s
&\forall s \in \Sc\,. \label{problem_constraint:datacenter_computational_residual_capacity}
\end{align}
Indeed, \eqref{problem_constraint:single_placement} ensures that each chain is associated with a single {\em scheduled} placement. 
~\eqref{problem_constraint:target_delay} guarantees that the target maximum delay of each chain is satisfied. 
~\eqref{problem_constraint:finite_positive_delay} ensures a finite delay for each VM. 
~\eqref{problem_constraint:chain_computatioal_capacity} verifies the bound of computing resources allocated for each chain. Finally,~\eqref{problem_constraint:datacenter_computational_residual_capacity} makes sure that the capacity of each datacenter is not exceeded.

{\bf Costs.}
The system costs are due to migration, as well as computation and bandwidth usage, as detailed below.

 Migrating chain $\hv^u$ from datacenter $s$ to datacenter $s'$ incurs a {\em computational migration cost}   $\chi^m(u,s,s')$. 
Let $x(u,s)$ denote the {\em current placement indicator parameters}, i.e., $x(u,s) = 1$ iff chain $\hv^u$ is currently placed on datacenter $s$.\tagged{long}{\footnote{Note that $x(u,s)$ are not decision variables to be determined, but rather represent the {\em current state} of the deployment.}}
\tagged{short}{Note that $x(u,s)$ is not a decision variable.}
The migration cost incurred by a critical chain $\hv^u$ is then:
\iftagged{short}{$\sum_{s\neq s' \in \Sc} x(u, s) \cdot y(u, s') \cdot \chi^m(u,s,s')$.}{ 
\begin{align}
\sum_{s\neq s' \in \Sc} x(u, s) \cdot y(u, s') \cdot \chi^m(u,s,s'). \notag
\end{align}
}

Each unit of computation allocated on datacenter $s$ incurs a {\em computation cost} $\chi^c(s)$. 
Finally, each unit of traffic being routed across link $(i,j)$ incurs a {\em bandwidth cost} $\chi^b(i,j)$. 

{\bf Objective.}
{Our objective is to minimize the cost function}
\begin{multline}
\label{Eq:def_obj_func}
\phi(\yv,\muv)
=\sum_{\hv^u \in \Hc} 
\sum_{{s\neq s' \in \Sc }}
x(u,s) \cdot y(u,s') \cdot \chi^m(u,s,s') \\ 
\quad+ \sum_{\hv^u \in \Hc} 
\sum_{s \in \Sc} y(u,s) \sum_{k=1}^{h_u} \mu^{u,s}_k \cdot \chi^c(s) 
+ \sum_{(i,j) \in \Lc} b(i,j, \yv) \cdot \chi^b(i,j)
\end{multline}
{where $b(i,j, \yv)$ denotes the overall amount of traffic (in bit/s) that traverses link $(i,j)$ when using placement $\yv$. Namely, considering the traffic towards the datacenters and back:}
\begin{align}
\label{Eq:BW_def}
b(i,j,\yv)
= 
\begin{cases}
\displaystyle
\sum_{\hv^u \in \Hc} ~~ \sum_{\substack{s \in \Sc \\ (i,j) \in \Pc(p^u,s)}} y(u,s) \cdot \lambda^u_1 & \mbox {uplink} \\
\displaystyle
\sum_{\hv^u \in \Hc} ~~\sum_{\substack{s \in \Sc \\ (i,j) \in \Pc(s,p^u)}} y(u,s) \cdot \lambda^u_{h_u+1} & \mbox{downlink}.
\end{cases}
\end{align}

Our problem, hereinafter referred to as the {\em Deployment and Migration Problem} (\migProb), is therefore given by:
\begin{equation}\label{eq:min-problem}
\min_{\yv,\muv} \phi(\yv,\muv) \
\textrm{subject to}~\eqref{problem_constraint:single_placement} -
\eqref{problem_constraint:datacenter_computational_residual_capacity}.
\end{equation}

\begin{table}
\centering
    \scriptsize
    \caption{Main notations
    \label{tab:notations}}
    \begin{tabular}{|M{0.17\columnwidth}|m{0.71\columnwidth}|}
        \hline 
	    Symbol & \multicolumn{1}{c|}{Description} \tabularnewline
	    \hline \hline
	    \multicolumn{2}{|c|}{Parameters: network (Sec.~\ref{sec:netw_model})}\\
        \hline
        $\Gc$ & Network graph \tabularnewline
        \hline
        $D(\Gc)$ & The diameter  of network graph $\Gc$ \tabularnewline
        \hline
        $\Sc$ & Set of datacenters \tabularnewline
        \hline
        $\Lc$ & Set of links \tabularnewline
        \hline
        $(i,j)$ & Directed link connecting datacenters $i$ and $j$ \tabularnewline
        \hline
        $\mathcal P(s,s')$ & Path connecting datacenter $s$ to datacenter $s'$\tabularnewline
        \hline \hline
	    \multicolumn{2}{|c|}{Parameters: services, delays and capacities (Secs.~\ref{sec:service_model}--\ref{sec:delay_model})}\\
        \hline
        $\Uc$ & Set of service requests \tabularnewline
        \hline
        $\Hc$ & Set of chains corresponding to $\Uc$ \tabularnewline
        \hline
        $\hv^u$ & Service chain (ordered list of VMs) serving  $u$ \tabularnewline
        \hline
        $h_u$ & Number of VMs in $\hv^u$ \tabularnewline
        \hline
        $p^u$ &PoA where request for $\hv^u$ is generated \tabularnewline
        \hline
        $\Delta (\hv^u)$ & Target delay [s] of chain $\hv^u$ \tabularnewline
        \hline
        $\hat{C}_u$ & Maximum CPU units that may be allocated to chain $\hv^u$ based on the SLA \tabularnewline
        \hline
        $C_s$ & Overall processing capacity of datacenter $s$ [cycles/s] \tabularnewline
          \hline 
        $\lambda^u_k$ & Input traffic load of VM ${v^u_k}$ [bits/sec]\tabularnewline
        \hline
        $\theta^u_k$ & Required processing capacity for VM $v^u_k$'s incoming traffic [cycles / bit]\tabularnewline
        \hline
        $\gamma^u_k$ & Bits per data units [bit]\\
          	\hline
        $D^u_k (\mu^{u,s}_k)$ & 
        Computational delay [s] exhibited by VM $v^u_k$~\eqref{def:per_VM_comp_delay}\tabularnewline
        \hline
        $d^c(\hv^u, \muv^{u,s})$ & Computational delay [s] of chain $\hv^u$~\eqref{Eq:def_chain_comp_delay}\tabularnewline
        \hline
        $\tau(i,j)$ & 
        Network delay experienced on link $(i,j)$ \tabularnewline
        \hline
        $d^n(\hv^u,s)$ & Network delay [s] of chain $\hv^u$ when located on server $s$~\eqref{Eq:def_chain_netw_delay}\tabularnewline
	    \hline \hline
	    \multicolumn{2}{|c|}{Parameters: costs (Sec.~\ref{sec:problem})}\\
        \hline
		\rule{0pt}{2.4ex}    
        $\chi^m(u, s, s')$ & Cost of migrating chain $\hv^u$  from datacenter $s$ 
        to datacenter $s'$\tabularnewline
        \hline
        $x (u,s)$ & Placement indicator: true iff chain $\hv^u$ is currently hosted on datacenter $s$\tabularnewline
        \hline 
        $\chi^c(s)$ & Cost of allocating one CPU unit on datacenter $s$\tabularnewline
        \hline
        $\chi^b(i,j)$ & Cost of having one unit of bandwidth traverse link $(i,j)$\tabularnewline
        \hline
        $\phi$ & Objective function~\eqref{Eq:def_obj_func} \tabularnewline
        \hline
        $b(i,j, \yv)$ & Amount of traffic traversing link $(i,j)$ [bits/s]~\eqref{Eq:BW_def}\tabularnewline
        \hline \hline
	    \multicolumn{2}{|c|}{Decision variables (Sec.~\ref{sec:problem})}\\
        \hline
        $y (u,s)$ & Placement indicator: true iff chain $\hv^u$ is scheduled to run on datacenter $s$\tabularnewline
        \hline 
        $\mu^{u,s}_k$ & Integer allocation: expressing the number of CPU units allocated for VM $v^u_k$ on datacenter $s$ \tabularnewline
        \hline \hline
 	    \multicolumn{2}{|c|}{Sec.~\ref{sec:alloc}} \tabularnewline
 	    \hline
        $\Sc_u$ & Set of delay-feasible datacenters of chain $\hv^u$ \tabularnewline
	    \hline
        $\mu^{u,s}_k$ & CPU allocation for the $k$-th VM of chain $\hv^u$ on datacenter $s$ \tabularnewline
        \hline        
        $\delta^u_k$        & Delay reduction function \eqref{eq:def_delay_reduc_f} \tabularnewline  
        \hline \hline        
	    \multicolumn{2}{|c|}{Sec.~\ref{sec:bu}} \tabularnewline
	    \hline 
        $\Tc(s)$ & The sub-tree rooted by datacenter $s$ \tabularnewline
        \hline
        $\overline{s}(u)$ & Top datacenter in $\Sc_u$\tabularnewline
        \hline
        $\Hc(s)$ & The set of chains whose PoAs are in $\Tc(s)$\tabularnewline
        \hline
        $\tilde{\mu} (u)$ & The minimum required CPU to serve request $u$  as in~\eqref{eq:def:min_required_CPU} \tabularnewline
        \hline        
        $T_{\Hc'}$ & Potential placement tree of $\Hc'$ \\ 
        \hline        
        $R$ & Multiplicative resource augmentation factor\\
        \hline        
        $a_s$ & Available processing capacity of datacenter $s$ \tabularnewline   
        \hline
        \end{tabular}
\end{table}

We can prove the following result on the \migProb\ complexity.
\begin{theorem}\label{thm:NPH}
The \migProb\ is NP-hard.
\end{theorem}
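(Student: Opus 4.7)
The plan is to reduce from \textsc{Bin Packing}---given $n$ positive integers $a_1,\dots,a_n$ with $a_i\le B$, and $k$ bins of capacity $B$ each, decide whether the items can be packed into the $k$ bins---which is well known to be NP-hard. Even the \emph{feasibility} version of the \migProb\ will turn out to be hard, which immediately gives NP-hardness of the optimization problem~\eqref{eq:min-problem}.

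I would construct a \migProb\ instance from a Bin Packing instance as follows. Let the network graph $\Gc$ be a single path of $k$ datacenters $s_1, s_2, \dots, s_k$, where $s_k$ is the root $r$, and attach a single PoA $p$ to $s_1$. Set $C_{s_j} = B$ for every $j$, and set all link propagation delays and burstiness parameters so that the network delay~\eqref{Eq:def_chain_netw_delay} is either zero or a common fixed constant for every candidate datacenter on the path. For each item $a_i$ I would create a single service request $u_i$ originating at $p$ whose chain $\hv^{u_i}$ consists of one VM with parameters $\theta^{u_i}_1, \lambda^{u_i}_1, \gamma^{u_i}_1$ and target delay $\Delta(\hv^{u_i})$ chosen so that inverting~\eqref{def:per_VM_comp_delay} yields a minimum feasible integer allocation of exactly $a_i$ CPU units, uniformly over all datacenters on the path. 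Finally, I set $\hat{C}_{u_i} = a_i$, take every bandwidth and migration cost to be zero, and start from an empty current placement $\xv$.

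By construction, every datacenter on the path from $p$ to $r$ is a delay-feasible host for every chain; constraint~\eqref{problem_constraint:chain_computatioal_capacity} forces each chain to consume exactly $a_i$ CPU units wherever placed; and~\eqref{problem_constraint:datacenter_computational_residual_capacity} requires the sum of allocations on each $s_j$ to be at most $B$. Consequently, a feasible \migProb\ solution corresponds precisely to a valid packing of the $n$ items into the $k$ bins, and conversely any valid packing yields a feasible solution by assigning chains to datacenters according to the packing. Since the reduction is clearly polynomial in the size of the Bin Packing instance, this shows that deciding feasibility of the \migProb\ is already NP-hard, whence the optimization problem~\eqref{eq:min-problem} is NP-hard as well.

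The main obstacle I anticipate is calibrating the chain parameters so that the minimum CPU required to meet the target delay is exactly $a_i$ \emph{independently of the hosting datacenter}. This requires fixing the network delay contribution to be identical across all of $s_1,\dots,s_k$ (e.g., by nullifying link latencies and the burstiness $\sigma_u$, which is consistent with the model), and then solving~\eqref{def:per_VM_comp_delay} for $\mu$, yielding $\mu \ge \theta^{u_i}_1 \lambda^{u_i}_1 + \gamma^{u_i}_1\theta^{u_i}_1/\Delta(\hv^{u_i})$; a suitable rescaling of $\gamma^{u_i}_1$ and $\Delta(\hv^{u_i})$ makes this right-hand side equal to $a_i$. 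This is a routine algebraic step that introduces no blowup in the instance description, and once it is in place the rest of the equivalence follows directly from the capacity constraint~\eqref{problem_constraint:datacenter_computational_residual_capacity}.
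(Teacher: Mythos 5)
Your reduction is correct and is essentially the same construction as the paper's: the paper reduces from \textsc{Partition} (i.e., bin packing with $k=2$ bins) using two datacenters whose capacities sum to $\sum_i n_i$, a single-VM chain per integer with target delay $\Delta(\hv^u)=1/(n_u-1)$ forcing at least $n_u$ CPU units, and zero network delay, so that feasibility of the \migProb\ is equivalent to the packing instance. The only detail to nail down in your version is the calibration for items with $a_i=1$ (the paper sidesteps this by assuming w.l.o.g.\ $n_i\geq 2$), and note that your use of $\hat{C}_{u_i}=a_i$ to cap each chain's allocation plays the role of the paper's counting argument that total capacity equals $\sum_i n_i$.
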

\begin{IEEEproof}
Consider the NP-hard partition problem~\cite{garey79computers}, where we are given a sequence of integers $n_1,\ldots,n_k$, and we seek a set $N \subseteq \set{1,\ldots,k}$ such that $\sum_{i \in N} n_i = \left( \sum_{i=1}^k n_i \right)/2$.
Without loss of generality, assume that $n_i \geq 2$ for all $i$ (otherwise, we may simply consider the integers $\hat{n}_i=2\cdot n_i$ for all $i$ as our input).

We now present a polynomial reduction from the partition problem to the \migProb.
Consider a network with two  datacenters, $s$ and $r$, where $r$ is the root, and $s$ is  co-located with the PoA of all requests.
The
processing capacity in both the root and the PoA is $C_s=C_r=\left( \sum_{i=1}^k n_i \right)/2$.
Define requests $u=1,\ldots,n$ where chain $\hv^u$ has delay constraint $\Delta(\hv^u)=\frac{1}{n_u-1}$.
Each requested chain has a single VM, {with rate $\lambda^u_1=\lambda^u_2=1$, and requires a
processing
capacity $\theta_1^u=\theta_2^u=1$}. The network delay is zero.
Observe that the delay constraint~\eqref{problem_constraint:target_delay} of chain $i$ is satisfied only if it is allocated at least $n_i$ CPU units. Furthermore, since $C_s + C_r=\sum_{i=1}^k n_i$, a feasible solution for \migProb\ may allocate for chain $i$ at most $n_i$ CPU units. It follows that any feasible solution allocates exactly $n_i$ CPU units for chain $i$. 
Hence, there exists a feasible solution for \migProb\ for this input iff there exists a solution to the partition problem.
The result follows.
\end{IEEEproof}

\section{Solution overview and main results}\label{sec:alg_concept}

The \migProb's objective \eqref{Eq:def_obj_func} combines the placement decision variables, $\yv$, and the allocation decision variables, $\muv$. 
A closer look shows that the chain placement and CPU allocation problems are entangled, since each  placement decision impacts the  CPU allocation required to satisfy the target  delay of the service. 
Our solution concept is based on {\em decoupling the chain placement and CPU allocation problems}, which allows applying a combinatorial approach to solving the \migProb, and studying the trade-offs inherent to our solutions. 
{
In more detail, our solution  comprises three steps: (i) solving the CPU allocation problem (Sec.~\ref{sec:alloc}), 
(ii) finding a feasible solution for the chain placement problem (Sec.~\ref{sec:bu}), 
and (iii) reducing cost (Sec.~\ref{sec:top_lvl}).
We now overview these steps.

\subsection{Solving the CPU allocation problem}
In Sec.~\ref{sec:alloc}, we define
a polynomial-time algorithm, called {\em GetFeasibleAllocations} (\cpuall), that identifies for each chain $\hv^u$ its set of {\em delay-feasible} datacenters, namely, the datacenters on which it is possible to place $\hv^u$, while meeting its target delay.
For each chain $\hv^u$ and {delay-feasible} datacenter $s$, \cpuall\  finds an allocation $\muv^{u,s}$ that is {\em provably minimal} in terms of the overall number of required CPU units.

\subsection{Solving the chain placement problem}
\label{sec:roadmap_placement}

{First, we note that given any allocation $\muv^{u,s}$ for all chains $\hv^u$ and datacenters $s$, the DMP~\eqref{eq:min-problem} becomes an integer linear program (ILP). 
We note that the optimal solution for the linear relaxation serves a lower bound for the DMP, and also serves as a witness of feasibility.
}

The proof of Theorem~\ref{thm:NPH} implies that even when the solution for the CPU allocation problem is known (e.g., allocating $n_i$ CPU units to VM $i$ in the proof of Theorem~\ref{thm:NPH}), the \migProb\ is NP-hard. Hence, the following proposition holds.
\begin{proposition}
\label{prop:NP-H}
Finding a feasible solution to the chain placement problem is NP-hard.
\end{proposition}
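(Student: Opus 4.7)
The plan is to show that the NP-hardness reduction used in the proof of Theorem~\ref{thm:NPH} already establishes Proposition~\ref{prop:NP-H}, because in that construction the allocation variables $\muv$ are effectively forced and so the residual difficulty lies entirely in the placement decision.

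First, I would revisit the partition instance used in Theorem~\ref{thm:NPH}: two datacenters $s$ and $r$, each of capacity $\left(\sum_{i=1}^{k} n_i\right)/2$, and one single-VM chain $\hv^u$ per integer $n_u$, with delay constraint $\Delta(\hv^u)=1/(n_u-1)$, unit rates, and unit per-bit processing requirement. The two key observations from that proof are: (i)~the delay constraint~\eqref{problem_constraint:target_delay} is satisfied only if $\mu^{u,s}_1 \geq n_u$, and (ii)~the total capacity $C_s+C_r=\sum_i n_i$ forces any feasible solution to allocate exactly $n_u$ CPU units to chain $\hv^u$. Therefore, the CPU allocation $\muv$ is \emph{uniquely determined} in every feasible solution of this instance, independently of the placement variables $\yv$.

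Given this, I would fix $\mu^{u,s}_1 = n_u$ for all $u$ and all $s\in\{s,r\}$ as the (given) input to the chain placement problem. What remains is to choose Boolean placement variables $y(u,s),y(u,r)$ satisfying~\eqref{problem_constraint:single_placement} and the per-datacenter capacity constraint~\eqref{problem_constraint:datacenter_computational_residual_capacity}, i.e., $\sum_{u:\, y(u,s)=1} n_u \leq \left(\sum_i n_i\right)/2$ and symmetrically for $r$. Combined with the fact that the two partial sums must equal $\sum_i n_i$, such a placement exists if and only if the index set $N=\{u:y(u,s)=1\}$ partitions the integers into two subsets of equal sum. Hence a feasible placement exists iff the partition instance is a Yes-instance, yielding the claimed NP-hardness of the chain placement problem.

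The only subtle point is arguing that the reduction stays valid when the allocation is treated as an input rather than a decision variable: constraints~\eqref{problem_constraint:target_delay},~\eqref{problem_constraint:finite_positive_delay}, and~\eqref{problem_constraint:chain_computatioal_capacity} hold trivially under the fixed values $\mu^{u,s}_1=n_u$ regardless of which datacenter is chosen, so the placement instance is well-posed for both Yes- and No-instances of partition. This is the step that requires care, but it follows directly from the observations already made in the proof of Theorem~\ref{thm:NPH} and requires no additional computation. The reduction is clearly polynomial, completing the argument.
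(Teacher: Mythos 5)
Your proposal is correct and follows essentially the same route as the paper: the paper's own justification for Proposition~\ref{prop:NP-H} is precisely the observation that in the reduction of Theorem~\ref{thm:NPH} the allocation is forced to be exactly $n_i$ CPU units per chain, so the hardness resides entirely in the placement decision. You have simply written out in full the one-line argument the paper gives, including the (correct) check that the fixed allocation keeps the instance well-posed for both Yes- and No-instances of partition.
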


In Sec.~\ref{sec:bu}, we address the hardness of the chain placement problem using {\em resource augmentation}, i.e., assuming that each datacenter has an augmented processing capacity. 
We develop a polynomial-time algorithm, dubbed {\em Bottom-Up (BU)}. Further, we show an upper bound on the amount of processing capacity augmentation required for \bu\ to find a feasible solution whenever one exists for the non-augmented case.

\subsection{Reducing cost}
In Sec.~\ref{sec:top_lvl} we present the {\em Push-Up (\pushUp)} algorithm, which aims at reducing the cost of any given feasible \migProb\ solution.
Then, we use \bu\ and \pushUp\ as building blocks of our integrated algorithm {\em Bottom-Up-and-Push-Up (\algtop)} for finding a minimal-cost solution to \migProb\ while minimizing the amount of resource augmentation. 
}

\section{Allocating computational resources}
\label{sec:alloc}

In this section, we address the CPU {\em allocation problem}.
In particular, for each chain $\hv^u$, we identify its set of {\em delay-feasible} datacenters, denoted by $\Sc_u$.
Then, for every chain $\hv^u$ and datacenter $s \in \Sc_u$, we calculate the minimal number of CPU units that one must assign to each VM in $\hv^u$ running on $s$, in order to satisfy the target delay. 
Fig.~\ref{fig:alloc_toy_example} provides some intuition on the allocation problem.

\newcommand{\inlinelist}[1]{\begin{inparaenum}[(i)] #1 \end{inparaenum}}

\begin{figure}[!tb]
\begin{center}
    \includegraphics[width=7cm]{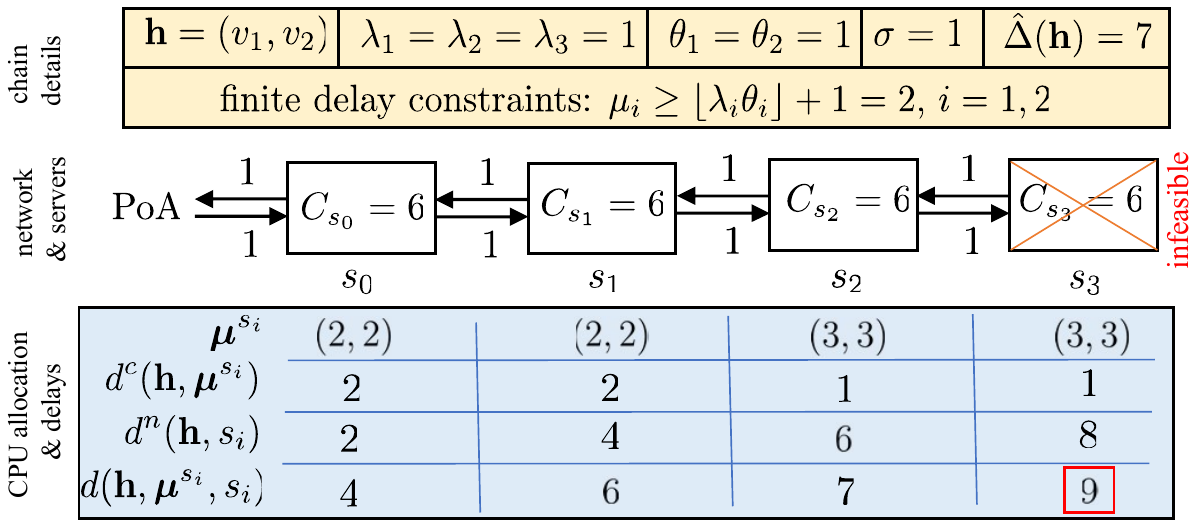}
    \caption{Example of the resource allocation dynamics and their effect on delay and feasibility.
    The input chain details are given at the top; the datacenter topology is given in the middle, with all link delays and
    datacenter capacities set to 1 and 6 (resp.). 
    The table at the bottom indicates:
    \inlinelist{%
     \protect\item the minimal allocation $\muv^{s_i}$ on datacenter $s_i$,
     \protect\item the induced computational delay $d^c(\hv,\muv^{s_i})$,
     \protect\item the network delay incurred for reaching datacenter $s_i$, and
     \protect\item the total delay $d(\hv,\muv^{s_i},s_i)$.
     }
     As can be seen, while datacenters $s_0,s_1,s_2$ are feasible (each with its CPU allocation), datacenter $s_3$ is infeasible.
    }
    \label{fig:alloc_toy_example}
    \end{center}
    \end{figure}

\subsection{The CPU allocation algorithm}\label{sec:alloc:descibe_alg}

Our polynomial-time CPU allocation algorithm, named GetFeasibleAllocations (or \cpuall, for short), takes as input the network graph $\Gc$ and a given set of chains $\tilde{\Hc}$.
\cpuall\ computes the set of delay-feasible datacenters $\Sc_u$ for each chain $\hv^u\in \tilde{\Hc}$, and, for each such datacenter $s$, calculates 
a CPU allocation vector $\muv^{u,s}$ satisfying the delay constraint and minimizing the  CPU  allocated to $\hv^u$ on  $s$. Formally, 
$\muv^{u,s} \in \argmin_{\muv} \{ \norm{\muv}_1 | \ d(\hv^u, \muv) \leq \Delta_s(\hv^u)\}$,
where $\norm{\cdot}_1$ is the $\ell_1$ norm; 
such a CPU allocation is referred to as {\em minimal allocation}.


\cpuall\ is detailed in Alg.~\ref{alg:alloc}.
For each chain $\hv^u\in\tilde{\Hc}$, initially  all datacenters in $\Pc(p^u,s)$ are assumed to be feasible (ln.~\ref{alg:alloc:init_Scfu}).
For each feasible datacenter, going from the PoA towards the root of the network, \cpuall\ initializes the CPU allocation for each VM in $\hv^u$ to the minimal necessary to ensure a finite computational delay (ln.~\ref{alg:alloc:verify_finite_delay_loop_begin}--\ref{alg:alloc:verify_finite_delay_loop_end}).
The algorithm then computes  (ln.~\ref{alg:alloc:if_within_slack}-\ref{alg:alloc:while_below_Cu_end}) the minimum amount of CPU  required to meet the delay constraint, while not violating the bound~\eqref{problem_constraint:chain_computatioal_capacity} on the computing resources allocated to the chain.
This is done using the method described below.
Finally, once the current datacenter $s$ is delay infeasible, which by our model also implies that all its ancestors are deemed infeasible,
$s$ and all its ancestors are removed from the set of feasible datacenters (ln.~\ref{alg:alloc:if_s_infeasible}-\ref{alg:alloc:loop_s_end}), and \cpuall\ returns the set of feasible datacenters, and the corresponding minimal allocations (ln.~\ref{alg:alloc:rtrn}).

{\bf Computing the minimal allocation.}
If the current allocation $\muv$ leads to a delay constraint violation, \cpuall\ increases the total number of CPU units it uses by one using a gradient method: it increments the number of CPU units allocated to the VM that, owing to this change, maximizes the delay reduction  (ln.~\ref{alg:alloc:argmax}--\ref{alg:alloc:inc_argmax}).
To this end, we define the 
{\em delay reduction function}, which captures the residual reduction in the computational delay corresponding to increasing the CPU allocation of $v^u_j$ by one. 
Formally, for  VM $v^u_k$ and CPU allocation $\mu$, we have:
\begin{equation}\label{eq:def_delay_reduc_f}
\delta^u_k (\mu) = D^u_k (\mu) - D^u_k (\mu+1)\,.
\end{equation}
As can be verified by  algebraic manipulation, $\delta^u_k (\mu)$ is monotonically decreasing for every $\mu > \floor{\theta^u_k \lambda^u_k}$.
Next, we prove that our approach indeed finds a minimal CPU allocation.

\begin{algorithm}[t!]\caption{\small \cpuall($\Gc, { \tilde{\Hc}}$)}
\scriptsize
\label{alg:alloc}
\begin{algorithmic}[1]
    \For {$\hv^u \in { \tilde{\Hc}}$}    \Comment{for each chain} \label{ alg:alloc:for_u_begin}
        \State $\Sc_u = \Pc(p^u,r)$ \label{alg:alloc:init_Scfu}
        \Comment{ordered list of datacenters from PoA to the root}
        \ForEach {$s\in \Sc_u $} \label{alg:alloc:loop_s_begin}\Comment{for each datacenter in $\Sc_u$, from PoA to the root}
            \For {$k = 1, \dots, h_u$} \label{alg:alloc:verify_finite_delay_loop_begin}\Comment{for each chain in the VM}
                \State $\mu^{u,s}_k = \floor*{\theta^u_k \lambda^u_k} + 1$ \label{alg:alloc:init_mu} 
                \Comment{ensure finite computation delay}
            \EndFor \label{alg:alloc:verify_finite_delay_loop_end}
            \While {     $d^c (\hv^u, \muv^{u,s}) > \Delta(\hv^u) - d^n(\hv^u,s)${\bf and} $\sum_{k=1}^{h_u} \mu^{u,s}_k \leq \hat{C}_u$  }

            \label{alg:alloc:if_within_slack}
                \Statex 
                \Comment{delay constraint is still unsatisfied and CPU is available}
                \State $k^* = \argmax_{1 \leq k \leq h_u} \set{\delta^u_k (\mu^{u,s}_k)}$ \label{alg:alloc:argmax}
                \Comment{find VM with max del.\ reduction}
                \State $\mu^{u,s}_{k^*} = \mu^{u,s}_{k^*} + 1$ \label{alg:alloc:inc_argmax}\Comment{increase CPU to reduce delays}
            \EndWhile \label{alg:alloc:while_below_Cu_end}
            \If {$\sum_{k=1}^{h_u} \mu^{u,s}_k > \hat{C}_u$}
            \label{alg:alloc:if_s_infeasible}
            \Comment{if not enough CPU capacity, $s$ is infeasible}
                \State $\Sc_u = \Sc_u \setminus \Pc(s,r)$\Comment{remove all the datacenters from $s$ to the root}
                \State {\bf break}\Comment{it is not worth anymore to go further towards the root}
            \EndIf
        \EndFor \label{alg:alloc:loop_s_end}
    \EndFor \label{alg:alloc:for_u_end}
    \State {\bf return} $\Sc_u, \forall \hv^u \in { \tilde{\Hc}}$ and $\muv^{u,s}$, $\forall \hv^u \in {\tilde{\Hc}},s \in \Sc$ \label{alg:alloc:rtrn}
\end{algorithmic}
\end{algorithm}

\subsection{Performance analysis}\label{sec:alloc:analysis}


We begin by defining the {\em $B$-minimal} CPU allocation for a given chain and CPU budget.
\begin{definition} 
A CPU allocation for chain $\hv^u$ on  datacenter $s$ is $B$-minimal for a given CPU budget $B$, if it minimizes the computational delay of $\hv^u$ on $s$ while using 
$B$ CPU units, i.e.,
\iftagged{short}{$\tilde{\muv}^{u,s} (\hv^u, B) = \argmin_{\muv} \{d^c(\hv^u, \muv^{u,s}) | \norm{\muv^{u,s}}_1 = B\}$}{ 
\begin{equation}
\tilde{\muv}^{u,s} (\hv^u, B) = \argmin_{\muv} \{d^c(\hv^u, \muv^{u,s}) | \norm{\muv^{u,s}}_1 = B\}.
\end{equation}
}
\end{definition}

To prove that \cpuall\ finds a minimal CPU allocation, we will use the following lemma on $B$-minimal CPU allocations. 

\begin{lemma}
\label{lem:alloc:opt_j_minus_1}
Let $\muv^*$ be a $B$-minimal allocation for chain $\hv^u$ on datacenter $s$. Then, for any $1 \leq i, j \leq h_u$: 
$\delta^u_{j} (\mu^*_j - 1) \geq \delta^u_i (\mu^*_i)$.
\end{lemma}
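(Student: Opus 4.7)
The natural approach is an exchange argument: if the claimed inequality fails, we can shift one CPU unit from VM $j$ to VM $i$, strictly decreasing $d^c(\hv^u,\cdot)$ while preserving the budget $B$, contradicting $B$-minimality of $\muv^*$.

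Concretely, I would proceed by contradiction. Suppose some pair $i,j$ satisfies $\delta^u_j(\mu^*_j-1) < \delta^u_i(\mu^*_i)$. Define the perturbed allocation $\muv'$ by setting $\mu'_i = \mu^*_i+1$, $\mu'_j = \mu^*_j-1$, and $\mu'_k = \mu^*_k$ otherwise. Clearly $\norm{\muv'}_1 = \norm{\muv^*}_1 = B$. Since only VMs $i$ and $j$ are affected, the resulting change in the computational delay is
\begin{align*}
d^c(\hv^u,\muv') - d^c(\hv^u,\muv^*)
&= \bigl[D^u_i(\mu^*_i+1) - D^u_i(\mu^*_i)\bigr] + \bigl[D^u_j(\mu^*_j-1) - D^u_j(\mu^*_j)\bigr] \\
&= -\delta^u_i(\mu^*_i) + \delta^u_j(\mu^*_j-1) \;<\; 0,
\end{align*}
where the last step invokes the definition~\eqref{eq:def_delay_reduc_f} of $\delta^u_k$ and our contradiction hypothesis. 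Since $\muv'$ has the same $\ell_1$ norm as $\muv^*$ but strictly smaller computational delay, $\muv^*$ cannot be $B$-minimal, a contradiction.

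The only subtle point (and the main obstacle) is ensuring that the perturbed allocation $\muv'$ is itself admissible, i.e., that $\mu'_j = \mu^*_j - 1 > \theta^u_j \lambda^u_j$, so that $D^u_j(\mu^*_j-1)$ is finite and the computation above is meaningful. I would handle this by observing that the contradiction hypothesis $\delta^u_j(\mu^*_j-1) < \delta^u_i(\mu^*_i)$ already implicitly requires $\delta^u_j(\mu^*_j-1)$ to be finite (since $\delta^u_i(\mu^*_i)$ is bounded by the positive and finite value $D^u_i(\mu^*_i)$), which by~\eqref{def:per_VM_comp_delay} forces $\mu^*_j - 1 > \theta^u_j \lambda^u_j$. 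In the opposite boundary case, $D^u_j(\mu^*_j-1) = +\infty$, so $\delta^u_j(\mu^*_j-1) = +\infty$ and the stated inequality holds vacuously. Either way, the lemma follows.
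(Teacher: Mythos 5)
Your proof is correct and follows essentially the same exchange argument as the paper: assume the inequality fails, shift one CPU unit from $v^u_j$ to $v^u_i$ to obtain an allocation with the same budget $B$ but strictly smaller computational delay, contradicting $B$-minimality. Your extra care with the boundary case $\mu^*_j - 1 \leq \theta^u_j \lambda^u_j$ (where $\delta^u_j(\mu^*_j-1)$ is infinite and the inequality holds vacuously) is a detail the paper's proof glosses over, and you handle it correctly.
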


\begin{proof}
Assume by contradiction that
\begin{multline}
\delta^u_{j} (\mu^*_j - 1)
= D^u_j (\mu^*_j - 1) - D^u_j (\mu^*_j) \\
< D^u_i (\mu^*_i) - D^u_j (\mu^*_i+1) 
=\delta^u_i (\mu^*_i).
\end{multline}
This implies that
$
D^u_j (\mu^*_j - 1) + D^u_j (\mu^*_i+1)
< D^u_j (\mu^*_j) + D^u_i (\mu^*_i).
$
Consider the allocation $\muv'$, defined as: $\mu'_j = \mu^*_j - 1$, $\mu'_i = \mu^*_i + 1$, and for any $k \neq i, j$: $\mu'_k = \mu^*_k$. Thus,  
$
d^c(\hv^u, \muv')
-
d^c(\hv^u, \muv^*) < 0,
$
which contradicts the $B$-minimality of $\muv^*$. 
\end{proof}

The following lemma shows that \cpuall\ never increases the CPU allocation to a VM above some level, before it exploits any chance to gain more delay reduction by increasing the CPU allocated to any other VM in that chain. 

\begin{lemma}
\label{lem:alloc_wont_pick_smaller}
If $\delta^u_j(a) > \delta^u_i(b)$,  \cpuall\ does not assign more than $b$ CPU units to $v^u_i$ 
before assigning at least $a+1$  units to $v^u_j$.
\end{lemma}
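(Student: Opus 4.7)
My plan is to argue by contradiction, leveraging two ingredients already in the text: (i) the monotonicity of $\delta^u_k$ for $\mu > \floor{\theta^u_k\lambda^u_k}$, noted just after~\eqref{eq:def_delay_reduc_f}, and (ii) the greedy selection rule of \cpuall\ in lines~\ref{alg:alloc:argmax}--\ref{alg:alloc:inc_argmax}, which at every iteration increments the coordinate $k^\ast$ that maximizes $\delta^u_k(\mu^{u,s}_k)$.

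Suppose, toward contradiction, that the conclusion fails. Then there is some iteration of the while-loop (for the fixed chain $\hv^u$ and datacenter $s$) at which \cpuall\ increments $\mu^{u,s}_i$ from $b$ to $b+1$, while at the same iteration the current allocation of $v^u_j$ is some value $c \le a$. First, I would observe that after the initialization phase (lines~\ref{alg:alloc:verify_finite_delay_loop_begin}--\ref{alg:alloc:verify_finite_delay_loop_end}) every coordinate satisfies $\mu^{u,s}_k \ge \floor{\theta^u_k\lambda^u_k}+1$, and the while-loop only increases allocations; hence at that iteration both $b$ and $c$ lie strictly above $\floor{\theta^u_i\lambda^u_i}$ and $\floor{\theta^u_j\lambda^u_j}$ respectively, so they lie in the region where $\delta^u_i$ and $\delta^u_j$ are monotonically decreasing. (The hypothesis that $\delta^u_j(a)$ is meaningful likewise places $a$ in that region, so $c \le a$ implies $\delta^u_j(c) \ge \delta^u_j(a)$.)

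Applying the greedy choice rule at that iteration gives $\delta^u_i(b) \ge \delta^u_j(c)$, and chaining with monotonicity yields $\delta^u_i(b) \ge \delta^u_j(c) \ge \delta^u_j(a)$, which directly contradicts the hypothesis $\delta^u_j(a) > \delta^u_i(b)$. Hence no such iteration exists, i.e., $v^u_i$ cannot be pushed to allocation $b+1$ before $v^u_j$ has already been pushed to at least $a+1$, as required.

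I do not expect a major obstacle here: the argument is essentially a one-line exchange/monotonicity argument, entirely parallel in spirit to Lemma~\ref{lem:alloc:opt_j_minus_1}. The only subtlety worth checking carefully is the domain issue above, namely that every value of $\mu^{u,s}_k$ visited by \cpuall\ is strictly above $\floor{\theta^u_k\lambda^u_k}$, so that the monotonicity of $\delta^u_k$ can indeed be invoked.
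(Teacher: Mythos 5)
Your proof is correct and takes essentially the same route as the paper's: both rest on the greedy selection rule of lines~\ref{alg:alloc:argmax}--\ref{alg:alloc:inc_argmax} combined with the monotonicity of $\delta^u_k$ above $\floor{\theta^u_k\lambda^u_k}$, with the initialization step guaranteeing all visited allocations lie in the monotone region. The only cosmetic difference is that you phrase it as a contradiction at the first offending iteration, whereas the paper treats the case $a \leq \floor{\theta^u_j\lambda^u_j}$ (where initialization already gives $v^u_j$ at least $a+1$ units) explicitly and then says the rest holds ``by construction''; your parenthetical observation covers that case implicitly and correctly.
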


\begin{proof}
If $a \leq \floor*{\theta^u_k \lambda^u_k}$,  \cpuall\ initializes $\mu^{u,s}_j$ to at least $a+1$  units (ln.~\ref{alg:alloc:init_mu}), and the claim holds true in the first iteration of the while loop.
For any subsequent iteration, 
 the above lemma holds by construction (ln.~\ref{alg:alloc:argmax}--\ref{alg:alloc:inc_argmax}) since \cpuall\ will not assign $k^*=i$ (since it picks the VM that maximizes the improvement) before $v_j^u$ is assigned at least $a+1$ CPU units.
Note that the inequality $\delta^u_j(a) > \delta^u_i(b)$ is independent of any change made to the allocation of CPU units to any VM distinct from both $i$ and $j$.
\end{proof}

The following lemma bounds the $\ell_\infty$ distance
between the allocations that \cpuall\ considers, and any minimal allocation using an identical budget.

\begin{lemma}
\label{lem:chebyshev_leq1}
Let $\muv$ be the allocation for chain $\hv^u$ on datacenter $s$ in 
some iteration of \cpuall's {\em while} loop, and let $B = \sum_{k=1}^{h_u} \mu_k^{u,s}$. 
Let $\muv^*$ denote a $B$-minimal allocation for $\hv^u$ on datacenter $s$.
{Then, $\|\muv-\muv^*\|_\infty\leq 1$.}
\end{lemma}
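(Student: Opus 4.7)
The plan is to argue by contradiction, using the greedy selection rule of \cpuall\ together with Lemma~\ref{lem:alloc:opt_j_minus_1} and the strict monotonicity of the delay-reduction function $\delta^u_k$.

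First I would dispose of the base case. After the initialization in ln.~\ref{alg:alloc:init_mu}, we have $\mu_k^{u,s}=\lfloor\theta^u_k\lambda^u_k\rfloor+1$, which is the smallest integer value compatible with the finite-delay requirement~\eqref{problem_constraint:finite_positive_delay}. Any $B$-minimal allocation with $B=\sum_k(\lfloor\theta^u_k\lambda^u_k\rfloor+1)$ must therefore coincide with $\muv$, and the claim holds trivially with equality. This observation also forces $\mu_j^*\geq\lfloor\theta^u_j\lambda^u_j\rfloor+1$ on every VM throughout the argument, which will matter below.

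For the inductive step, suppose there exists some iteration whose allocation $\muv$ (with $\sum_k\mu_k^{u,s}=B$) satisfies $\|\muv-\muv^*\|_\infty\geq 2$ for some $B$-minimal $\muv^*$. Without loss of generality there is an index $i$ with $\mu_i\geq\mu_i^*+2$; equality of the total budgets then forces some $j$ with $\mu_j\leq\mu_j^*-1$ (the symmetric case $\mu_i\leq\mu_i^*-2$ is handled identically by swapping roles of $i$ and $j$). The key step is to inspect the last iteration in which \cpuall\ incremented $v^u_i$: at that moment $v^u_i$ had exactly $\mu_i-1$ CPU units, and the greedy rule of lns.~\ref{alg:alloc:argmax}--\ref{alg:alloc:inc_argmax} together with the fact that allocations only grow during execution yields $\delta^u_i(\mu_i-1)\geq\delta^u_j(\mu_j)$, using the monotone decrease of $\delta^u_j$ to translate the past allocation of $j$ into the current one.

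The remainder of the proof is a short chain of inequalities. From $\mu_j\leq\mu_j^*-1$ and monotonicity of $\delta^u_j$, we get $\delta^u_j(\mu_j)\geq\delta^u_j(\mu_j^*-1)$, and Lemma~\ref{lem:alloc:opt_j_minus_1} applied to the $B$-minimal $\muv^*$ gives $\delta^u_j(\mu_j^*-1)\geq\delta^u_i(\mu_i^*)$. Finally, a direct check on~\eqref{def:per_VM_comp_delay} shows that $\delta^u_k$ is \emph{strictly} decreasing on $\{\lfloor\theta^u_k\lambda^u_k\rfloor+1,\lfloor\theta^u_k\lambda^u_k\rfloor+2,\ldots\}$; since $\mu_i^*\leq\mu_i-2<\mu_i-1$, this delivers the strict bound $\delta^u_i(\mu_i^*)>\delta^u_i(\mu_i-1)$. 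Concatenating everything produces $\delta^u_i(\mu_i-1)>\delta^u_i(\mu_i-1)$, the desired contradiction. The main obstacle I anticipate is not the chain of inequalities itself but the book-keeping around the pole of $D^u_j$: one must verify that $\delta^u_j(\mu_j^*-1)$ is well-defined, which follows from the per-VM lower bound enforced at initialization combined with $\mu_j^*\geq\mu_j+1\geq\lfloor\theta^u_j\lambda^u_j\rfloor+2$.
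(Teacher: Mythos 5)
Your proof is correct and follows essentially the same route as the paper's: a proof by contradiction that chains Lemma~\ref{lem:alloc:opt_j_minus_1}, the (strict) monotonicity of $\delta^u_k$, and the greedy selection rule into $\delta^u_i(\mu_i-1)\geq\delta^u_j(\mu_j)\geq\delta^u_j(\mu_j^*-1)\geq\delta^u_i(\mu_i^*)>\delta^u_i(\mu_i-1)$. The only cosmetic difference is that you re-derive the greedy property inline (by inspecting the last increment of $v^u_i$) where the paper invokes Lemma~\ref{lem:alloc_wont_pick_smaller}, and you observe correctly that in the symmetric case the strict link in the chain moves to a different inequality.
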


\begin{proof}
Assume by contradiction that the $\|\muv-\muv^*\|_\infty> 1$.
Then, there exists an index $i$ for which 
either (1) $\mu_i - 1  > \mu^*_i$, or (2) $\mu^*_i - 1  > \mu_i$.
We have now two cases.

{\em Case 1}: $\mu_i - 1  > \mu^*_i$ .
As $B=\sum_{k=1}^{h_u} \mu_k = \sum_{k=1}^{h_u} \mu^*_k$, there exists another index $j \neq i$ s.t.\ $\mu^*_j - 1 \geq \mu_j$. Namely, 
\begin{equation}\label{eq:lem:alloc_i_ks_opt_pairs_gt1}
\mu_i - 1 >  \mu_i^* \quad , \quad \mu^*_j - 1 \geq  \mu_j. 
\end{equation}
By Lemma~\ref{lem:alloc:opt_j_minus_1},
\begin{equation}\label{eq:lem:opt_alloc_uj_dig_gt_!}
\delta^u_j (\mu^*_j - 1) \geq \delta^u_i (\mu^*_i).
\end{equation}

Combining (\ref{eq:lem:alloc_i_ks_opt_pairs_gt1}) and the fact that $\delta^u(\cdot)$ is monotone decreasing, we have that
$\delta^u_i (\mu^*_i) > \delta^u_i (\mu_i - 1),\, 
\delta^u_j (\mu_j) \geq \delta^u_j (\mu^*_j - 1)$. 
Combining these inequalities with (\ref{eq:lem:opt_alloc_uj_dig_gt_!}), 
we have 
$\delta^u_j (\mu_j) > \delta^u_i (\mu_i-1)$.
However, applying Lemma~\ref{lem:alloc_wont_pick_smaller} on this latter inequality
implies that \cpuall\ does not assign 
$\mu_i$ CPU units for $v^u_i$ before assigning at least 
$\mu_j+1$ CPU units to $v^u_j$. Hence, this case is impossible. 

{\em Case 2}: $\mu^*_i - 1  > \mu_i$. As $B=\sum_{k=1}^{h_u} \mu_k = \sum_{k=1}^{h_u} \mu^*_k$, there exists an index $j \neq i$ s.t. $\mu_j - 1 \geq \mu^*_j$. Namely,
\begin{equation}\label{eq:lem:alloc_case_2b}
\mu_i^* - 1 >  \mu_i \quad , \quad \mu_j - 1 \geq \mu_j^*. 
\end{equation}
Applying Lemma~\ref{lem:alloc:opt_j_minus_1}, while exchanging the roles of $i$ and $j$, we have 
$\delta^u_{i} (\mu^*_i - 1) \geq \delta^u_j (\mu^*_j)$.
Combining (\ref{eq:lem:alloc_case_2b}) and the fact that $\delta^u(\cdot)$ is monotone decreasing, we obtain
$\delta^u_{i} (\mu_i) > \delta^u_i (\mu^*_i - 1)$.
Combining the latter two inequalities, 
we have 
\begin{equation}\label{eq:alloc_proof:case2a_eq3}
\delta^u_{i} (\mu_i) > \delta^u_j (\mu^*_j).
\end{equation}
By setting $a = \mu_i$ and $b = \mu^*_j$ in Lemma~\ref{lem:alloc_wont_pick_smaller}, we know that \cpuall\ does not assign $v^u_j$ more than $\mu^*_j$ units before assigning to $v^u_i$ at least $\mu_i+1$ units. However, this contradicts our assumption that vector $\muv$ assigns only $\mu_i$ CPU units to $v^u_i$, while $v^u_j$ is already allocated some $\mu_j > \mu^*_j$ units. Therefore, also case 2 is impossible, and the thesis
 follows.
\end{proof}

The following lemma shows that \cpuall\ considers only  $B$-minimal allocations.
\tagged{short}{
This can be proved by using Lemma~\ref{lem:chebyshev_leq1} to show that any $B$-minimal allocation can be transformed into $\muv$.
The full proof is omitted.
} 
\begin{lemma}
\label{lem:alloc_considers_only_opt}
Let $\muv$ be the allocation in some iteration  of \cpuall's {\em while} loop for chain $\hv^u$ on datacenter $s$, and let $B = \norm{\muv}_1$. 
Then, $\muv$ is $B$-minimal.
\end{lemma}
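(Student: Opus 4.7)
The plan is to prove this by an exchange argument: starting from any $B$-minimal allocation $\muv^*$ for $\hv^u$ on $s$, I will show how to transform $\muv^*$ into $\muv$ by a sequence of single-unit coordinate swaps, each of which preserves $B$-minimality. Since the end of this transformation is $\muv$, and $B$-minimality is preserved throughout, $\muv$ will itself be $B$-minimal.

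The first step is to invoke Lemma~\ref{lem:chebyshev_leq1}, which guarantees $\|\muv - \muv^*\|_\infty \leq 1$. Since both vectors have the same sum $B$, the coordinates where they disagree split evenly into $P = \{i : \mu_i = \mu^*_i + 1\}$ and $M = \{j : \mu_j = \mu^*_j - 1\}$ with $|P| = |M|$, so I can fix an arbitrary bijection $\pi : P \to M$. I will process these pairs one by one: for the current pair $(i,\pi(i))$, define $\tilde{\muv}$ from the current intermediate allocation by incrementing coordinate $i$ and decrementing coordinate $\pi(i)$. The induced change in computational delay is $\delta^u_i(\mu^*_i) - \delta^u_{\pi(i)}(\mu^*_{\pi(i)} - 1)$, which is $\leq 0$ by Lemma~\ref{lem:alloc:opt_j_minus_1} applied to the current $B$-minimal allocation. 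On the other hand, since the current allocation is $B$-minimal, no vector with sum $B$ can have strictly smaller delay, so the swap must in fact preserve delay exactly; in particular the new intermediate $\tilde{\muv}$ is still $B$-minimal.

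Iterating across all pairs of $\pi$ eventually produces $\muv$ itself, and by transitivity $d^c(\hv^u,\muv) = d^c(\hv^u,\muv^*)$, which proves the lemma. The main obstacle in writing the proof carefully is precisely the justification that Lemma~\ref{lem:alloc:opt_j_minus_1} can be reapplied to each intermediate allocation along the swap sequence; this requires the simultaneous argument (given above) that the first inequality forces the delay to stay at the $B$-minimum, so that the intermediate is $B$-minimal and the next invocation of Lemma~\ref{lem:alloc:opt_j_minus_1} is legitimate. Once this is observed, the rest is a short chain of equalities and the conclusion follows.
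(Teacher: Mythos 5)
There is a genuine gap, and it sits exactly where you flag the ``main obstacle'': the claim that each swap preserves the delay. First, the sign of the delay change is reversed. Incrementing coordinate $i$ (from $\mu^*_i$ to $\mu^*_i+1$) changes the computational delay by $-\delta^u_i(\mu^*_i)$, while decrementing coordinate $j=\pi(i)$ (from $\mu^*_j$ to $\mu^*_j-1$) changes it by $+\delta^u_j(\mu^*_j-1)$; the net change is therefore $\delta^u_j(\mu^*_j-1)-\delta^u_i(\mu^*_i)$, which Lemma~\ref{lem:alloc:opt_j_minus_1} makes $\geq 0$, not $\leq 0$. So your upper bound (from Lemma~\ref{lem:alloc:opt_j_minus_1}) and your lower bound (from $B$-minimality of the current iterate) point in the \emph{same} direction, and nothing forces the change to be zero: the delay may strictly increase along the swap sequence, which is precisely the scenario in which $\muv$ fails to be $B$-minimal. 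Second, and more fundamentally, your argument uses no property of $\muv$ beyond Lemma~\ref{lem:chebyshev_leq1}, so if it worked it would prove that \emph{every} allocation with sum $B$ at $\ell_\infty$-distance at most $1$ from a $B$-minimal allocation is itself $B$-minimal. That is false: take two VMs with negligible $\theta^u_k\lambda^u_k$ and $\gamma^u_1\theta^u_1=10$, $\gamma^u_2\theta^u_2=1$, so that $D^u_1(\mu)\approx 10/\mu$ and $D^u_2(\mu)\approx 1/\mu$, and let $B=4$; then $(3,1)$ is $B$-minimal with delay about $10/3+1$, while $(2,2)$ has the same sum, is at $\ell_\infty$-distance $1$, and has strictly larger delay about $5+1/2$.

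The missing ingredient is the algorithmic Lemma~\ref{lem:alloc_wont_pick_smaller}. The paper's proof sets up the same pairing of differing coordinates and derives the same one-sided inequality $\delta^u_{j_k}(\mu_{j_k}) \geq \delta^u_{i_k}(\mu_{i_k}-1)$ from Lemma~\ref{lem:alloc:opt_j_minus_1}; it then rules out strict inequality by observing that if $\delta^u_{j_k}(\mu_{j_k}) > \delta^u_{i_k}(\mu_{i_k}-1)$, the greedy rule of Lemma~\ref{lem:alloc_wont_pick_smaller} would have prevented \cpuall\ from ever reaching the allocation $\muv$ --- this is where the fact that $\muv$ is produced by the algorithm, rather than being an arbitrary nearby vector, enters. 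Only once all these marginal gains are forced to be equal do the pairwise delay differences telescope to zero. To repair your proof you would need to insert this invocation of Lemma~\ref{lem:alloc_wont_pick_smaller} for each pair; after that the exchange framing is fine, but it essentially collapses into the paper's argument.
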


\begin{proof}
Let $\muv^*$ denote a $B$-minimal allocation for chain $\hv^u$ on datacenter $s$, and assume by contradiction that $\muv$ is not $B$-minimal. 
By Lemma~\ref{lem:chebyshev_leq1}, 
\tagged{long}{
for any} $ 1 \leq i \leq h_u: \abs{\mu_i - \mu^*_i} \leq 1$. Then we can partition the non-equal indices in $\muv$ and
$\muv^*$ into $K$ pairs, where pair $k$ consists of two indices $1 \leq i_k, j_k \leq h_u$ s.t.\
\begin{equation}\label{eq:lem:alloc_i_ks_opt_pairs_leq1}
\mu^*_{i_k} = \mu_{i_k}-1, \quad \mu^*_{j_k} = \mu_{j_k}+1. 
\end{equation}
Applying Lemma~\ref{lem:alloc:opt_j_minus_1} with $i = i_k$ and $j = j_k$, we have $\delta^u_{j_k} (\mu^*_{j_k} - 1) \geq \delta^u_{i_k} (\mu^*_{i_k})$.
Combining this with (\ref{eq:lem:alloc_i_ks_opt_pairs_leq1}),
we have
\iftagged{short}{$\delta^u_{j_k} (\mu_{j_k}) = \delta^u_{j_k} (\mu^*_{j_k} - 1) \geq \delta^u_{i_k} (\mu^*_{i_k}) = \delta^u_{i_k} (\mu_{i_k} - 1)$, which }{ 
\begin{equation}\label{eq:lem:opt_alloc_uj_geq_ui}
\delta^u_{j_k} (\mu_{j_k}) = \delta^u_{j_k} (\mu^*_{j_k} - 1) \geq \delta^u_{i_k} (\mu^*_{i_k}) = \delta^u_{i_k} (\mu_{i_k} - 1). 
\end{equation}

Eq.~\eqref{eq:lem:opt_alloc_uj_geq_ui}} implies that either there exists $k$ s.t. $\delta^u_{j_k} (\mu_{j_k}) > \delta^u_{i_k} (\mu_{i_k} - 1)$, or for all $k$, it holds that $\delta^u_{j_k} (\mu_{j_k}) = \delta^u_{i_k} (\mu_{i_k} - 1)$. 
In the former case, 
by assigning $a = \mu_{j_k}$ and $b = \mu_{i_k}-1$ in Lemma~\ref{lem:alloc_wont_pick_smaller}, we know that \cpuall\ will not assign $\mu_{i_k}$ CPU units to $v^u_i$ before assigning at least 
$\mu_{j_k}+1$ CPU units to $v^u_j$. 
Since the current allocation is $\mu_i$ units to $v^u_i$, and $\mu_j$ units to $v^u_j$, a contradiction arises. 
Thus, we have that for all $k$, 
\begin{equation}\label{eq:alloc_proof:case_1b_def}
\delta^u_{j_k} (\mu_{j_k}) = \delta^u_{i_k} (\mu_{i_k} - 1).
\end{equation}

We now show that if~\eqref{eq:alloc_proof:case_1b_def} holds, then the total delay obtained by $\muv$ equals that obtained by $\muvstar$, thus contradicting the assumption that $\muv$ is not minimal.

By the definition of $\delta(\cdot)$ in~\eqref{eq:def_delay_reduc_f}, we get
\begin{equation}\label{eq:alloc_proof:case_1b_i}
{ \delta^u_{i_k}} (\mu_{i_k} - 1) 
 = { D^u_{i_k}}(\mu_{i_k}-1) - {D^u_{i_k}}(\mu_{i_k}) 
= {D^u_{i_k}}(\mu^*_{i_k}) - { D^u_{i_k}}(\mu_{i_k}),
\end{equation}
where the latter equation is by~\eqref{eq:lem:alloc_i_ks_opt_pairs_leq1}. 
Similarly, we have 
\begin{equation}\label{eq:alloc_proof:case_1b_j}
{\delta^u_{j_k}} (\mu_{j_k})  =
{ D^u_{j_k}}(\mu_{j_k}) - { D^u_{j_k}}(\mu_{j_k}+1) 
 = {D^u_{j_k}}(\mu_{j_k}) - { D^u_{j_k}}(\mu^*_{j_k}),
\end{equation}

Combining~\eqref{eq:alloc_proof:case_1b_def}, \eqref{eq:alloc_proof:case_1b_i} and~\eqref{eq:alloc_proof:case_1b_j}, we obtain: 
\iftagged{short}{$D^u_i(\mu_{i_k}) - D^u_i(\mu^*_{i_k}) + 
D^u_j(\mu_{j_k}) - D^u_j(\mu^*_{j_k}) 
= 0$.}{ 
\begin{align}\label{eq:alloc_proof:case_1b:material_anti_material}
{D^u_{i_k}} (\mu_{i_k}) - 
{D^u_{i_k}} (\mu^*_{i_k}) + 
{D^u_{j_k}} (\mu_{j_k}) - 
{ D^u_{j_k}}(\mu^*_{j_k}) 
= 0.
\end{align}
}
By definition of $K$, the difference in computational delay due to $\muv$ and $\muvstar$ is
\tagged{long}{
\begin{multline}\label{eq:alloc_proof:case_1b:final}
d^c (\hv^u, \muv) - d^c (\hv^u, \muv^*) = \sum_{i=1}^{h_u} 
\Big[D^u_{i}(\mu_i) - D^u_i(\mu^*_i) \Big]
=  \\
\sum_{k=1}^K 
\Big[
{D^u_{i_k}} (\mu_{i_k}) - 
{ D^u_{i_k}} (\mu^*_{i_k}) + 
{ D^u_{j_k}} (\mu_{j_k}) - 
{ D^u_{j_k}} (\mu^*_{j_k}) \Big]  = 0.
\end{multline}

By~\eqref{eq:alloc_proof:case_1b:final},
}
 the delay due to $\muv$ is the same as that due to $\muv^*$, thus contradicting our assumption on $\muv$ not being $B$-minimal.
\end{proof}

By Lemma~\ref{lem:alloc_considers_only_opt},  the following corollary holds.
\begin{corollary}\label{cor:alloc_finds_min_allocation}
For every chain $\hv^u$ and datacenter $s$, if there exists a feasible allocation for $\hv^u$ on $s$, then the allocation $\muv^{u,s}$ given by \cpuall\ satisfies
$
\muv^{u,s} \in \argmin_{\muv} \{ \norm{\muv}_1 | \ d^c(\hv^u, \muv) \leq \Delta_s(\hv^u)\},
$
i.e., it minimizes the  number of allocated CPU units over all feasible allocations for $\hv^u$ on $s$.
\end{corollary}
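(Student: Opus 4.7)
\begin{IEEEproof}[Proof plan]
The plan is to leverage Lemma~\ref{lem:alloc_considers_only_opt} as the main hammer: at every iteration of \cpuall's \emph{while} loop, the current allocation $\muv$ is $B$-minimal for $B=\norm{\muv}_1$, i.e., it realizes the \emph{smallest possible} computational delay achievable with exactly $B$ CPU units. The task is to lift this per-iteration optimality into a global statement about the total number of CPU units used.

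First, I would trace through the structure of \cpuall\ to pin down the sequence of budgets examined. The algorithm initializes $\muv^{u,s}$ (ln.~\ref{alg:alloc:init_mu}) to the minimum configuration ensuring finite per-VM delay, giving an initial budget $B_0=\sum_{k=1}^{h_u}(\floor{\theta^u_k\lambda^u_k}+1)$, and then in each iteration of the \emph{while} loop increases the total by exactly one CPU unit (ln.~\ref{alg:alloc:inc_argmax}). Let $\muv^{(B)}$ denote the allocation at the moment the total budget equals $B$; by Lemma~\ref{lem:alloc_considers_only_opt}, $\muv^{(B)}$ is $B$-minimal, hence $d^c(\hv^u,\muv^{(B)})\le d^c(\hv^u,\muv)$ for every allocation $\muv$ with $\norm{\muv}_1=B$ satisfying the finite-delay constraint~\eqref{problem_constraint:finite_positive_delay}. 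In particular, \cpuall's returned allocation $\muv^{u,s}$ equals $\muv^{(B^{\mathrm{alg}})}$ where $B^{\mathrm{alg}}$ is the first budget at which the loop guard becomes false because the delay constraint is met.

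Next, I would argue minimality by contradiction. Suppose some feasible allocation $\muv'$ exists for $\hv^u$ on $s$ with $\norm{\muv'}_1=B'<B^{\mathrm{alg}}$. Since $\muv'$ is feasible, $d^c(\hv^u,\muv')\le \Delta(\hv^u)-d^n(\hv^u,s)$, and $\muv'$ satisfies the finite-delay condition (otherwise $d^c$ would be infinite). Any $B'$-minimal allocation has computational delay no larger than $d^c(\hv^u,\muv')$, so it too meets the delay constraint. But $\muv^{(B')}$ is $B'$-minimal by Lemma~\ref{lem:alloc_considers_only_opt}, and since $B_0\le B'<B^{\mathrm{alg}}$, the iterate $\muv^{(B')}$ is actually reached during the execution. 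At that point the \emph{while} guard would evaluate to false and \cpuall\ would have exited with budget $B'<B^{\mathrm{alg}}$, contradicting the definition of $B^{\mathrm{alg}}$.

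The only subtle point I anticipate is handling the boundary where $B'<B_0$: this cannot happen because $B_0$ is forced by~\eqref{problem_constraint:finite_positive_delay}, so any allocation with fewer units would yield infinite delay and cannot be feasible. Once that edge case is dispatched, the contradiction is clean and the corollary follows directly; no additional machinery beyond Lemma~\ref{lem:alloc_considers_only_opt} and the monotone structure of the budget sequence is needed.
\end{IEEEproof}
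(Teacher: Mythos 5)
Your proposal is correct and follows essentially the same route as the paper: both rest on Lemma~\ref{lem:alloc_considers_only_opt} plus the fact that the \emph{while} loop increments the budget by exactly one unit, so any smaller feasible budget would have caused an earlier exit. The paper states this more tersely (explicitly ruling out only budget $B-1$), whereas you spell out the induction over all budgets below $B^{\mathrm{alg}}$ and the $B'<B_0$ edge case, but there is no substantive difference in the argument.
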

\begin{proof}
As \cpuall\ increments the total used CPU budget by one at each iteration (ln.~\ref{alg:alloc:inc_argmax}), we know that the budget that \cpuall\ used in the previous iteration, if exists, was $B-1$. 
By Lemma~\ref{lem:alloc_considers_only_opt},  no other allocation obtains lower delay for chain $\hv^u$ with budget $B-1$. Hence, no allocation satisfies the target delay constraint using  budget $B-1$. 
\end{proof}

\paragraph*{Run-time analysis} 
The time complexity of allocating CPU  to $\hv^u$ on $s$ is $\tilde{O}(\hat{C}_u+h_u)$.
For each of the $\Hc$ chains, \cpuall\  considers at most $D(\Gc)$ possible servers.
Thus, the overall time complexity of running \cpuall\ is
$\tilde{O}(\abs{\Hc} \cdot D(\Gc) \cdot \max_{\hv^u \in \Hc} (\hat{C}_u+h_u))$.

\section{Feasible solution to the placement problem}\label{sec:bu}
In light of the complexity of the placement problem  (see Proposition~\ref{prop:NP-H}), 
we now introduce the BU algorithm, which finds a feasible solution to chain placement with some resource augmentation.
In particular, after introducing some preliminaries, we prove that, using some bounded resource augmentation, BU always finds a feasible solution if such a solution exists in the non-augmented case.

\subsection{Preliminaries}

Let $\tilde{\mu}$ be a lower bound on the number of CPU units required to successfully serve any chain instance. By Corollary~\ref{cor:alloc_finds_min_allocation}, $\tilde{\mu}$ can be computed using the allocations $\muv$ found by \cpuall:
\iftagged{short}{$\tilde{\mu} = \min_{s \in \Sc, u \in \Uc} \norm{\muv^{u,s}}_1$.}{ 
\begin{equation}\label{eq:def:min_required_CPU}
\tilde{\mu} =
\min_{s \in \Sc, u \in \Uc} \norm{\muv^{u,s}}_1.
\end{equation}
}
Let $\Tc(s)$ denote the sub-tree rooted at datacenter $s$. 
Further, denote by $\Hc(s)$ the set of chains whose PoAs are in $\Tc(s)$. Denote by $\overline{s}(u)$ the top datacenter in $\Sc_u$ (i.e., the farthest datacenter from  PoA $p^u$ that is delay-feasible for $\hv^u$). 
Given a set of chains $\Hc' \subseteq \Hc$, we define the {\em potential placement tree} of $\Hc'$ as
\iftagged{short}{$\Tc_{\Hc'} = \bigcup_{\hv^u \in \Hc'} \Sc_u$.}{ 
\begin{equation}\label{eq:def_pot_loc_tree}
\Tc_{\Hc'} = \bigcup_{\hv^u \in \Hc'} \Sc_u.
\end{equation}
}
\begin{figure}[!t]
\begin{center}
    \includegraphics[width=6cm]{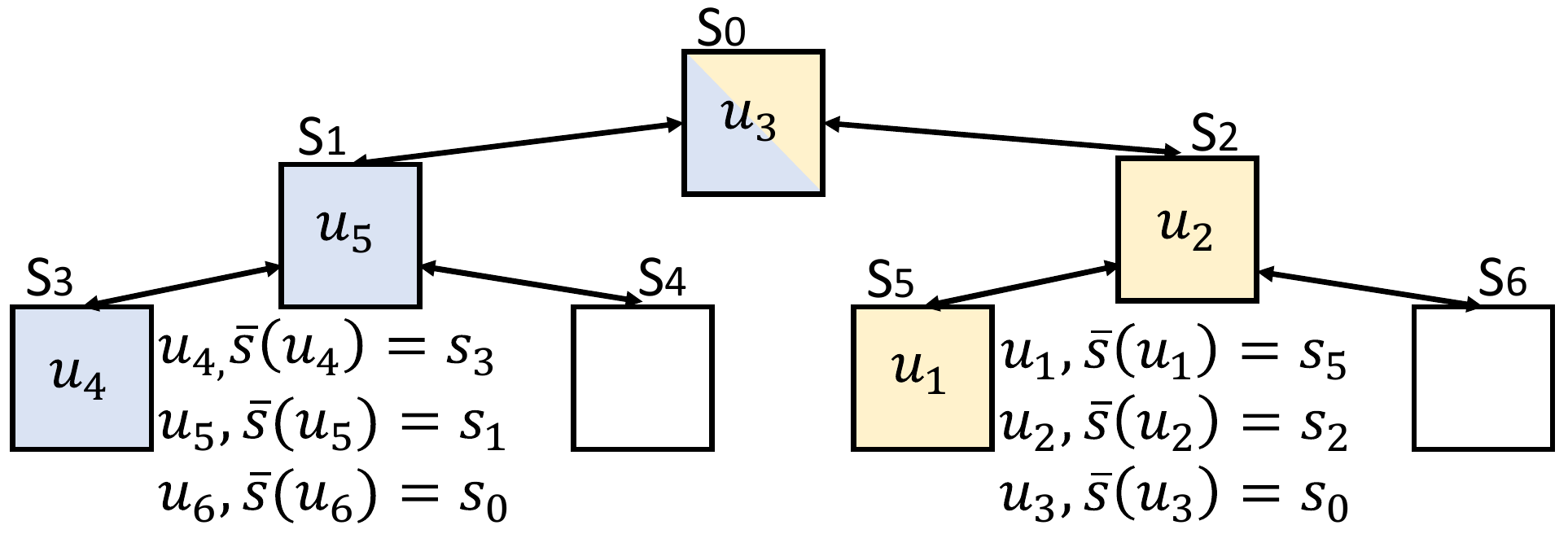}
    \caption{\label{fig:BU_fails_Tu'}
    Example of placement tree. 
    The PoA for requests $u_1$, $u_2$, and $u_3$ is $s_5$, while that for requests $u_4$, $u_5$, and $u_6$ is $s_3$.
    For each request,  the top datacenter of the relevant chain is denoted by $\overline{s}(u)$. 
    The potential placement trees 
    of $\set{u_1, u_2, u_3}$ 
    and of $\set{u_4,, u_5, u_6}$ are   highlighted in yellow and blue (resp.), while their union ($\Tc(s_0)$) is that of $\set{u_1, u_2, u_3, u_4, u_5, u_6}$. 
    }
\end{center}
\vspace{-4mm}
\end{figure}
An illustration clarifying the above notation is provided in Fig.~\ref{fig:BU_fails_Tu'}. 


\subsection{Motivation for a Bottom-Up approach}
\label{sec:naive_top_down}
When one looks at the problem of minimizing the placement cost, it may seem prudent to try and place chains as high as possible.
Indeed, processing costs are lower the farther the datacenter is from the chain's PoA. For instance, running a VM in a datacenter residing in the cloud is much cheaper than running it at a MEC datacenter attached to the PoA, where capacity is scarce and expensive.
Such an approach can be combined with a ``back-pressure'' mechanism, which tries to push previously placed chains lower in the hierarchy, whenever a request cannot be placed at its highest delay-feasible datacenter.
However, when targeting a feasible placement, such a mechanism might end up performing an exhaustive search.

To see this, consider a scenario where a chain $\hv^u$, originating at PoA $p^u$, cannot be accommodated in any of the datacenters along the sequence $\Sc_u$, due to previously placed chains.
In such a case, there might be a {\em unique} sequence of chains $\hv^{u_0},\ldots,\hv^{u_t}$ and a sequence of datacenters $s_0, \ldots s_{t+1}$, such that
\begin{inparaenum}[(i)]
\item $s_0 \in \Sc_u$,
\item $\hv^{u_i}$ is currently placed in $s_i$, for $i=1,\ldots,t$, and
\item re-placing $\hv^{u_i}$ in $s_{i+1}$ for $i=1,\ldots,t$, and placing $h_u$ in $s_0$ is feasible (given the placement of all other chains already handled).
\end{inparaenum}
Since there is no clear criteria for identifying such a sequence of chains and datacenters, simply testing all back-pressure adjustments may be prohibitively costly.
Since such a sequence may be unique (i.e., no other sequence can meet the above requirements), performing such an exhaustive search might be {\em necessary} to find a feasible solution.

The above scenario serves as a motivation for our {\em Bottom-Up (BU) algorithm} described in the sequel, which, by design, avoids such predicaments altogether.
We stress that the BU algorithm is meant to find a feasible solution efficiently. Once such a solution is found, we enhance it by pushing-up chains to decrease the total cost, as detailed in Sec.~\ref{sec:top_lvl}.

\subsection{The BU algorithm}

The \bu\ algorithm tries to place every chain as {\em low} (namely, closest to the corresponding PoA) as possible, and climbs higher only when there is insufficient processing capacity in any of the lower levels datacenters.
While this  approach may seem a poor choice in terms of cost, it is  effective in ensuring a feasible placement. 
Intuitively, if the \bu\ algorithm fails to place some chain, then there exists a sub-tree that is over-loaded with service requests, so that other algorithms are also unlikely to find a feasible solution.
Our algorithm will be using an {\em augmented processing capacity}, namely, \bu\ may allocate on datacenter $s$ up to $R \cdot C_s$ CPU units, where $R \geq 1$ is the multiplicative resource augmentation factor. 
We denote by $a_s$ the currently available (i.e., residual) processing capacity in datacenter $s$ during the execution of \bu.

\bu\ is detailed in Alg.~\ref{alg:BU}.
It gets as input the network graph $\Gc$;  a set {$\tilde{\Hc}$ of chains to be placed or migrated}; the CPU allocation and the set of delay-feasible datacenters of each chain; and the currently available processing capacity $\av$.
Given the above inputs, \bu\  computes placement $\yv$ and the new residual available processing capacity $\av$.
{\bu\ first releases the resources of all the chains, that will potentially be re-placed (ln.~\ref{alg:BU:rlz_rsrc_of_crit_chains}).}
Then, \bu\ scans the datacenters in a {Depth First Search (DFS)} order, and for each datacenter $s$ in this scan, it tries to deploy yet unplaced chains on $s$, while satisfying the (residual) capacity constraint on $s$ (ln.~\ref{alg:bu:if_enough_aviail_CPU}).
The yet unplaced chains are scanned in such a way that chains $\hv^u$ for which there are fewer remaining placement options 
{are considered first. The number of remaining placement options is the number of datacenters that are delay-feasible for $\hv^u$ but found above $s$, namely $\abs{\Sc_u \setminus \Tc(s)}$.}
In case a chain cannot be placed on any of its feasible datacenters, \bu\ returns an empty placement, which serves as a signal that the problem is infeasible (as proved in the sequel).

\begin{algorithm}[t!]\caption{\small\bu\ ($\Gc, {\tilde{\Hc}}, \set{\muv^{u,s}}_{u \in \Uc, s \in \Sc}, \set{\Sc_u}_{u \in \Uc}, \av$)}
\label{alg:BU}
\scriptsize
\begin{algorithmic}[1]
\State $\av \gets  \av $ after releasing resources of chains in ${\tilde{\Hc}}$\Comment{Release resources}\label{alg:BU:rlz_rsrc_of_crit_chains}
    \State $\yv \equiv 0$ \label{alg:bu:reset_y}\Comment{Init placement}
    \ForEach {datacenter $s$ in DFS order} \label{alg:bu:loop_over_s_begin}
            \For {$\hv^u \in { \tilde{\Hc}} \cap \Hc(s)$  in non-decreasing order of {
            ${\abs{\Sc_u \setminus \Tc(s)}}$}}  \label{alg:bu:loop_over_u_begin}
                \If {$a_s \geq \norm{\muv^{u,s}}_1$} \Comment{Check if enough residual capacity} \label{alg:bu:if_enough_aviail_CPU}
                    \State $y(u, s) = 1$ \Comment{Deploy the chain}
                    \label{alg:bu:set_y}
                    \State $a_s = a_s - \norm{\muv^{u,s}}_1$\Comment{Update the residual capacity}
                    \label{alg:bu:dec_avail_CPU}
                \ElsIf {$\overline{s}(u) = s$}\label{alg:bu:if_fail}\Comment{Check if the top datacenter has been reached}
                    \State return $\yv \equiv 0, \av \equiv 0$ \label{alg:bu:fail}\Comment{Infeasible}
                \EndIf
            \EndFor             
    \EndFor             \label{alg:bu:loop_over_s_end}
    \label{alg:BU:find_feasible_phase_end}
    \State return $\yv$, $\av$
\end{algorithmic}
\end{algorithm}


\subsection{Ensuring feasibility}

We now  upper-bound the amount of processing capacity resource augmentation, henceforth  
$R$,  that  guarantees that \bu\  finds a feasible solution, if such a solution exists in the case with no resource augmentation. 
Intuitively, our proof  shows that if \bu\ fails, then there exists a sub-tree $\Tc_{\Hc'}$ that is over-loaded by some set of chains $\Hc'$ that must be placed on $\Tc_{\Hc'}$.
Further, we show that in such a case {\em any} algorithm that does not use augmented processing capacities will fail as well.

We begin by characterizing the case where \bu\ may fail to place a chain on some datacenter $s$.

\begin{definition}
\label{def:suf_full}
A datacenter $s$ is almost-full if $a_s < \max_{u \in \Uc} \hat{C}_u$, and  
a set of datacenters $\Sc'$ is almost-full if every $s \in \Sc'$ is almost-full. 
\end{definition}

The notion of an almost-full datacenter helps to
upper-bound the amount of resource augmentation used by \bu. 
Further, our evaluation study in Sec.~\ref{Sec:sim} shows that the amount of resource augmentation used by \bu\ is significantly lower than our worst-case guarantees.
The following lemma provides a lower bound on the number of chains that \bu\ places on a datacenter before it becomes almost-full.
\begin{lemma}
\label{lem:num_of_chains_on_full}
If $s$ is almost-full, then \bu\ has placed on $s$ at least $\floor{R \cdot C_s  / \max_{u \in \Uc} \hat{C}_u}$ chains.
\end{lemma}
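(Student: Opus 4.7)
The plan is to argue via a simple capacity/pigeonhole accounting. Let $M = \max_{u \in \Uc} \hat{C}_u$. Under resource augmentation, every datacenter $s$ has total available processing capacity $R \cdot C_s$, and \bu\ only places a chain on $s$ when the residual capacity $a_s$ is at least the allocation needed (ln.~\ref{alg:bu:if_enough_aviail_CPU}). So the total CPU already committed on $s$ equals $R \cdot C_s - a_s$.

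First I would use the almost-full hypothesis to bound this committed amount from below: by Definition~\ref{def:suf_full}, $a_s < M$, hence the committed capacity is strictly greater than $R \cdot C_s - M$. Next I would bound, from above, the CPU consumed by any single chain already placed on $s$: by constraint~\eqref{problem_constraint:chain_computatioal_capacity}, $\sum_{k=1}^{h_u} \mu^{u,s}_k \leq \hat{C}_u \leq M$ for every chain $\hv^u$ that \bu\ placed on $s$. Letting $N$ denote the number of chains \bu\ has placed on $s$, we therefore obtain $N \cdot M \geq R \cdot C_s - a_s > R \cdot C_s - M$, so $N > R \cdot C_s / M - 1$.

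Finally, since $N$ is a nonnegative integer, the strict inequality $N > R \cdot C_s / M - 1$ yields $N \geq \floor{R \cdot C_s / M}$ (writing $y = R \cdot C_s / M - 1$: if $y$ is non-integer, then $N \geq \lfloor y \rfloor + 1 = \floor{y+1} = \floor{R \cdot C_s / M}$; if $y$ is an integer, then $N \geq y+1 = R \cdot C_s / M = \floor{R \cdot C_s / M}$). This gives the desired lower bound.

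There is no real obstacle here beyond being careful with the integrality rounding in the last step; all the nontrivial structure is packaged in the per-chain cap $\hat{C}_u \leq M$ coming from~\eqref{problem_constraint:chain_computatioal_capacity} and the definition of almost-full. The argument does not rely on how \bu\ chooses chains (ln.~\ref{alg:bu:loop_over_u_begin}) or on the DFS order, only on the invariant that placed chains consume at most $M$ CPU units each and that the residual capacity was decremented accordingly in ln.~\ref{alg:bu:dec_avail_CPU}.
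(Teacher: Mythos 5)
Your proof is correct and follows essentially the same capacity-counting argument as the paper, which simply notes that the augmented capacity is $R\cdot C_s$ and each chain consumes at most $\max_{u\in\Uc}\hat{C}_u$ CPU units. You merely spell out the pigeonhole accounting and the integrality rounding explicitly (the paper sidesteps the latter by assuming $R\cdot C_s/\max_{u\in\Uc}\hat{C}_u$ is an integer), which is a harmless elaboration.
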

\begin{proof}
The augmented processing capacity on datacenter $s$ is $R \cdot C_s$. Now \bu\ allocates at most $\max_{u \in \Uc} \hat{C}_u$ CPU units per chain. The result follows.
\end{proof}
In the sequel we assume that $R \cdot C_s  / \max_{u \in \Uc} \hat{C}_u$ is an integer.
The following claim follows from the fact that \bu\ attempts to deploy 
any unplaced chain $\hv^u$ as close as possible to $p^u$.

\begin{lemma}
\label{lem:bu_all_paths_are_full}
If \bu\ tries to place chain $\hv^u$ on datacenter $s$, then all the datacenters in $\Sc_u$ below $s$ are almost-full. 
\end{lemma}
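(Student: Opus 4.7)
My plan hinges on the DFS order used by BU's outer loop (ln.~\ref{alg:bu:loop_over_s_begin}). Any datacenter $s' \in \Sc_u$ strictly below $s$ is a proper descendant of $s$ in the fat-tree, since $\Sc_u \subseteq \Pc(p^u,r)$ is a prefix of the path from $p^u$ to the root. A (post-order) DFS therefore visits every such $s'$ before visiting $s$. I will show that when $s'$ was processed, BU actually attempted to place $\hv^u$ on $s'$ and failed, and that this failure already forces $s'$ to be almost-full.

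For the attempted placement, note that $p^u \in \Tc(s')$ (because $s'$ is an ancestor of $p^u$), hence $\hv^u \in \Hc(s')$. Moreover, $\hv^u$ was still unplaced when $s'$ was visited: otherwise, once deployed at ln.~\ref{alg:bu:set_y} a chain is implicitly removed from $\tilde{\Hc}$, contradicting the fact that BU is still trying to place $\hv^u$ at the later step $s$. Hence $\hv^u \in \tilde{\Hc} \cap \Hc(s')$ at that moment, and BU entered the inner loop (ln.~\ref{alg:bu:loop_over_u_begin}) with the pair $(\hv^u,s')$ and evaluated the guard of ln.~\ref{alg:bu:if_enough_aviail_CPU}. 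Since $\hv^u$ was not placed there, that guard failed, so the residual capacity satisfied $a_{s'} < \norm{\muv^{u,s'}}_1$ at that instant.

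For the almost-fullness, the chain-capacity constraint~\eqref{problem_constraint:chain_computatioal_capacity} yields $\norm{\muv^{u,s'}}_1 \leq \hat{C}_u \leq \max_{u' \in \Uc} \hat{C}_{u'}$. Throughout BU, $a_{s'}$ is non-increasing (updated only at ln.~\ref{alg:bu:dec_avail_CPU}), and once the DFS leaves the subtree rooted at $s'$, no further updates to $a_{s'}$ occur. Therefore, when $s$ is processed, $a_{s'} < \max_{u' \in \Uc} \hat{C}_{u'}$, which is exactly Def.~\ref{def:suf_full}, completing the argument.

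The only delicate point is confirming that BU's DFS is interpreted in post-order so that descendants precede ancestors, and that chains deployed at ln.~\ref{alg:bu:set_y} are implicitly removed from $\tilde{\Hc}$; both are the standard reading of Alg.~\ref{alg:BU} and, combined with the monotonicity of $a_{s'}$, are precisely what glues the three observations above together.
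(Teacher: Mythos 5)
Your proof is correct and takes essentially the route the paper intends: the paper states Lemma~\ref{lem:bu_all_paths_are_full} without a formal proof, remarking only that it ``follows from the fact that \bu\ attempts to deploy any unplaced chain $\hv^u$ as close as possible to $p^u$,'' and your argument is exactly that observation made rigorous. The three ingredients you isolate --- bottom-up (post-order) DFS visits every $s'\in\Sc_u$ below $s$ before $s$, the capacity guard must have failed there (else $\hv^u$ would already be placed), and $a_{s'}<\norm{\muv^{u,s'}}_1\le\hat{C}_u\le\max_{u'\in\Uc}\hat{C}_{u'}$ combined with the monotonicity of $a_{s'}$ --- are precisely what is needed, so no gap remains.
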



The following lemma follows directly from the order in which unplaced chains are considered at any datacenter.
\begin{lemma}
\label{lem:bu_located_earlier_so_within_Tu}
Consider two chains, $\hv^u$, $\hv^{u'}$, unplaced when considering datacenter $s \in \Sc_u \cap \Sc_{u'}$. Assume that
\begin{inparaenum}[(i)]
\item \bu\ places $\hv^{u'}$ on $s$ before it tries to place $\hv^u$ on $s$, and
\label{lem:bu_located_earlier_so_within_Tu:condition1}
\item $\Sc_u$ is almost-full after \bu\ terminates.
\end{inparaenum}
Then $\Sc_{u'}$ is also almost-full after \bu\ terminates. 
\end{lemma}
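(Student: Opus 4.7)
The plan is to decompose $\Sc_{u'}$ into three disjoint pieces with respect to $s$ — the datacenters strictly above $s$, the datacenter $s$ itself, and the datacenters strictly below $s$ — and argue that each piece is almost-full at termination, using a combination of the ordering criterion in Alg.~\ref{alg:BU}, the path structure produced by \cpuall, assumption~(ii), and Lemma~\ref{lem:bu_all_paths_are_full}.

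First, I would note that since \bu's inner loop (ln.~\ref{alg:bu:loop_over_u_begin} of Alg.~\ref{alg:BU}) scans the yet-unplaced chains at datacenter $s$ in non-decreasing order of $\abs{\Sc_u \setminus \Tc(s)}$, condition~(i) immediately yields $\abs{\Sc_{u'} \setminus \Tc(s)} \leq \abs{\Sc_u \setminus \Tc(s)}$. Next, I would invoke the structure of the delay-feasible sets computed by \cpuall: whenever a datacenter fails the CPU-budget test, the algorithm deletes it together with all its ancestors via $\Sc_u = \Sc_u \setminus \Pc(s,r)$, so both $\Sc_u$ and $\Sc_{u'}$ are contiguous sub-paths of $\Pc(p^u,r)$ and $\Pc(p^{u'},r)$ respectively, terminating at $\overline{s}(u)$ and $\overline{s}(u')$. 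Since both paths contain the common datacenter $s$ and the topology is a tree, the portions $\Sc_u \setminus \Tc(s)$ and $\Sc_{u'} \setminus \Tc(s)$ are both prefixes of the unique ascending path from the parent of $s$ toward $r$. Combined with the size inequality, this forces $\Sc_{u'} \setminus \Tc(s) \subseteq \Sc_u \setminus \Tc(s)$.

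With this inclusion in hand, the remainder is quick. By assumption~(ii), every datacenter in $\Sc_u$ is almost-full at termination, so the inclusion immediately shows that every datacenter in $\Sc_{u'} \setminus \Tc(s)$ is almost-full, and since $s \in \Sc_u$, the datacenter $s$ itself is almost-full as well. For the part of $\Sc_{u'}$ strictly below $s$, Lemma~\ref{lem:bu_all_paths_are_full} applied to $\hv^{u'}$ at datacenter $s$ already guarantees the almost-full property at the moment \bu\ considers $\hv^{u'}$ on $s$; because \bu\ only releases resources at initialization (ln.~\ref{alg:BU:rlz_rsrc_of_crit_chains}) and only allocates during its main loop, almost-fullness is preserved through termination.

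The main obstacle I expect is the structural step establishing $\Sc_{u'} \setminus \Tc(s) \subseteq \Sc_u \setminus \Tc(s)$: the size inequality alone is not enough, and one must explicitly invoke the contiguous-path form of the feasibility sets returned by \cpuall, together with the tree property, to conclude that these two sets are nested prefixes of a common path. Once this is justified, the rest is a direct bookkeeping of the three cases above.
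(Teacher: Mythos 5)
Your proof is correct and follows essentially the same route as the paper's: the same three-way case split on the position of a datacenter of $\Sc_{u'}$ relative to $s$ (below, equal, above), the same use of Lemma~\ref{lem:bu_all_paths_are_full} for the descendants, and the same ordering argument $\abs{\Sc_{u'} \setminus \Tc(s)} \leq \abs{\Sc_u \setminus \Tc(s)}$ combined with the contiguous-path structure of the feasibility sets for the ancestors. If anything, you are slightly more explicit than the paper in justifying the nesting $\Sc_{u'} \setminus \Tc(s) \subseteq \Sc_u \setminus \Tc(s)$ and the persistence of almost-fullness to termination, both of which the paper leaves implicit.
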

\begin{proof}
{    Let $s'$ be a datacenter in $\Sc_{u'}$. We will show that $s'$ is almost-full. Consider three cases, corresponding to $s'$ being a descendent of $s$, an ancestor of $s$, or equals to $s$.
    
    {\em Case 1:} $s'$ is a descendent of $s$. By condition (i), $s$ places $\hv^{u'}$. 
    Hence, by Lemma~\ref{lem:bu_all_paths_are_full}, all the datacenters in $\Sc_{u'}$ below $s$ are almost-full. It follows that $s'$ is almost-full.
    
   {\em Case 2:} $s'$ is an ancestor of $s$.
    By condition~\eqref{lem:bu_located_earlier_so_within_Tu:condition1} and the order in which \bu\ considers the chains (ln.~\ref{alg:bu:loop_over_u_begin} in Alg.~\ref{alg:BU}), we know that
    {    $\abs{\Sc_{u'} \setminus \Tc(s)} \leq \abs{\Sc_u \setminus \Tc(s)}$. In addition, $s \in  \Sc_u \cap \Sc_{u'}$. Combining the reasoning above, every ancestor of $s$ belonging to $\Sc_{u'}$ belongs also to $\Sc_u$. 
    }
    In particular, $s'$ is an ancestor of $s$ belonging to $\Sc_{u'}$, and therefore $s' \in \Sc_u$. As $\Sc_u$ is almost-full, $s'$ is almost-full. 

   {\em Case 3:} $s'=s$. As it is given that $s \in \Sc_u$ and $\Sc_u$ is almost-full, $s'$ is almost-full.
}
\end{proof}

The following lemma shows that if \bu\ fails, then 
there exists a set of chains requiring a total amount of CPU resources that is higher than the overall (augmented) processing capacity of the relevant datacenters. 

\tagged{short}{The proof, omitted for brevity, is by explicitly constructing a set of chains $\Hc'$ for which $\Tc_{\Hc'}$ is almost-full, and using Lemma~\ref{lem:num_of_chains_on_full} to obtain a lower bound on the number of chains in $\Hc'$.}

\begin{lemma}\label{lemma:if_bu_fails}
If \bu\ fails to place a chain, 
then there exists a set of chains $\Hc'$ s.t.\
\iftagged{short}{$ \abs{\Hc'} >    \frac{R}{\max_{u \in \Uc} \hat{C}_u}
    \sum_{s \in T_{\Hc'}} C_s $.}{ 
\begin{align}\label{eq:lemma:if_bu_fails}
    \abs{\Hc'} > 
    \frac{R}{\max_{u \in \Uc} \hat{C}_u}
    \sum_{s \in T_{\Hc'}} C_s \,.
\end{align}
}
\end{lemma}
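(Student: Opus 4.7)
The plan is to establish the contrapositive by exhibiting a concrete witness set $\Hc'$ whose placement tree $\Tc_{\Hc'}$ is entirely almost-full, and then invoking Lemma~\ref{lem:num_of_chains_on_full} to turn that almost-fullness into the chain count asserted by the lemma. Let $\hv^u$ be a chain that \bu\ fails to place; I would start from $\Hc'_0 = \{\hv^u\}$ and iteratively expand $\Hc'_{i+1} = \Hc'_i \cup \{\hv^{u'} : \text{\bu\ placed } \hv^{u'} \text{ on some } s \in \Tc_{\Hc'_i}\}$, taking the fixed point $\Hc'$, which is well-defined because $\tilde{\Hc}$ is finite.

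Next I would verify the base case that $\Sc_u$ itself is almost-full. Since \bu\ visits every $s \in \Sc_u$ in DFS (post-order) up to and including $\overline{s}(u)$, and the test in ln.~\ref{alg:bu:if_enough_aviail_CPU} fails at each of them, we have $a_s < \norm{\muv^{u,s}}_1 \leq \hat{C}_u \leq \max_{u'} \hat{C}_{u'}$ at the moment of the attempt; since $a_s$ only decreases thereafter, each such $s$ remains almost-full at termination.

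The heart of the argument is an induction on $i$ showing that $\Sc_{u'}$ is almost-full for every $\hv^{u'} \in \Hc'_i$, from which almost-fullness of $\Tc_{\Hc'}$ follows by definition. For the inductive step, a newly absorbed chain $\hv^{u'}$ is placed by \bu\ on some $s^* \in \Sc_{u''}$ with $\hv^{u''} \in \Hc'_i$ and $\Sc_{u''}$ already known to be almost-full. Datacenters of $\Sc_{u'}$ strictly below $s^*$ are handled directly by Lemma~\ref{lem:bu_all_paths_are_full} (applied to the placement of $\hv^{u'}$ at $s^*$); $s^*$ itself is almost-full since $s^* \in \Sc_{u''}$; and for the datacenters of $\Sc_{u'}$ above $s^*$ I would invoke Lemma~\ref{lem:bu_located_earlier_so_within_Tu}, applied with $\hv^{u''}$ in the role of the second chain and $\hv^{u'}$ in the role of the first, to transfer almost-fullness of $\Sc_{u''}$ onto the upper segment of $\Sc_{u'}$.

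The main obstacle will be justifying the hypotheses of Lemma~\ref{lem:bu_located_earlier_so_within_Tu} at the induction step, since the lemma demands that both chains be unplaced when \bu\ considers $s^*$ and an ordering constraint compatible with the tie-breaking rule on ln.~\ref{alg:bu:loop_over_u_begin}. If $\hv^{u''}$ is still unplaced at $s^*$, the ordering rule together with the fact that \bu\ actually places $\hv^{u'}$ on $s^*$ supplies both conditions; if $\hv^{u''}$ had already been placed below $s^*$, then the ancestral portion of $\Sc_{u'}$ at or above $s^*$ coincides with a segment of $\Sc_{u''}$ already covered by the inductive hypothesis, and any remaining upper datacenters reduce, by tracing the absorption chain back to $\hv^u$ (which is never placed and is therefore unplaced at every $s \in \Sc_u$ that \bu\ visits), to a direct application of the lemma. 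Once $\Tc_{\Hc'}$ is certified almost-full, Lemma~\ref{lem:num_of_chains_on_full} yields at least $R \cdot C_s / \max_{u'} \hat{C}_{u'}$ placed chains on each $s \in \Tc_{\Hc'}$ under the integrality assumption stated before the lemma; all such chains lie in $\Hc'$ by construction, and adding $\hv^u$ itself (in $\Hc'$ but unplaced by \bu) delivers $\abs{\Hc'} \geq 1 + \frac{R}{\max_{u'} \hat{C}_{u'}} \sum_{s \in \Tc_{\Hc'}} C_s$, strictly exceeding the right-hand side of the lemma.
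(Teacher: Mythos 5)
Your proposal is correct and follows essentially the same route as the paper: the same fixed-point construction of $\Hc'$ (seeded with the unplaced chain and closed under absorbing every chain placed on the current potential placement tree), the same induction certifying that $\Tc_{\Hc'}$ is almost-full via Lemma~\ref{lem:bu_all_paths_are_full} and Lemma~\ref{lem:bu_located_earlier_so_within_Tu}, and the same final count via Lemma~\ref{lem:num_of_chains_on_full} plus the one unplaced chain to get strictness. One small wording slip: to transfer almost-fullness from $\Sc_{u''}$ to $\Sc_{u'}$ you must cast $\hv^{u''}$ as the chain $\hv^u$ of Lemma~\ref{lem:bu_located_earlier_so_within_Tu} (the one whose feasible set is assumed almost-full) and the newly absorbed $\hv^{u'}$ as the chain placed earlier on $s$, i.e., the opposite of the role assignment you wrote, though your stated conclusion makes clear this is what you intend.
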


\begin{algorithm}[t!]\caption{Constructing $\Hc'$ satisfying~\eqref{eq:lemma:if_bu_fails}}
\label{alg:lemma:if_bu_fails}
\scriptsize
\begin{algorithmic}[1]
    \State $\hv^{\tilde{u}}$ = chain that \bu\ failed to place, unmarked
    \label{alg:lemma:if_bu_fails:init_failing_u}
    \State $\Hc' = \set{\hv^{\tilde{u}}}$ 
    \label{alg:lemma:if_bu_fails:init_U_prime}
    \ForEach {unmarked chain $\hv^u \in \Hc'$}
        \label{alg:lemma:if_bu_fails:for_u}
        \State mark $\hv^u$
        \ForEach {un-visited datacenter $s$ from $\overline{s}(u)$ to $p^u$}
        \label{alg:lemma:if_bu_fails:for_s}
            \State mark $s$ as visited
            \ForEach {unmarked chain $\hv^{u'}$ placed on $s$}
            \Statex \Comment{w.l.o.g., in reverse order of placement by \bu}
            \label{alg:lemma:if_bu_fails:for_u_prime}
                \State add $\hv^{u'}$ to $\Hc'$, unmarked
                \label{alg:lemma:if_bu_fails:add_u_prime}
            \EndFor
        \EndFor
    \EndFor
\end{algorithmic}
\end{algorithm}

\begin{proof}
Let $\hv^{\tilde{u}}$ be the chain that \bu\ fails to place. 
Our construction of $\Hc'$ is detailed in Alg.~\ref{alg:lemma:if_bu_fails}, which works as follows. 
First, it initializes $\Hc'$ to 
$\set{\hv^{\tilde{u}}}$; then, it  repeatedly visits all the datacenters that are delay-feasible for chains that already belong to $\Hc'$, and adds to $\Hc'$ all the chains placed on those datacenters.
The algorithm finishes once there are no further datacenters to visit and no further chains to add, and returns $\Hc'$.

We first provide some intuition for Alg.~\ref{alg:lemma:if_bu_fails}, and for the validity of the claim. Consider Fig.~\ref{fig:BU_fails_Tu'}, and assume for simplicity that the network delay is zero, $R=1$, and $\max_{\hv^u \in \Hc} \hat{C}_u = \max_{s \in \Sc} C_s = 1$.
Hence, once \bu\ places a single chain on a datacenter, the datacenter becomes 
(almost) full.

Fig.~\ref{fig:BU_fails_Tu'} depicts a scenario where \bu\ fails to place chain $\hv^{u_6}$. Hence, algorithm~\ref{alg:lemma:if_bu_fails} assigns
$\tilde{u} = u_6$ (ln.~\ref{alg:lemma:if_bu_fails:init_failing_u}), and $\Hc' = \set{\hv^{u_6}}$ (ln.~\ref{alg:lemma:if_bu_fails:init_U_prime}). 
Next (ln.~\ref{alg:lemma:if_bu_fails:for_u}), the algorithm visits every datacenter on the path from $\overline{s}(u_6) = s_0$ to $p_{u_6} = s_3$. 
Namely, the algorithm visits $s_0$, $s_1$ and $s_3$, and consequently adds to $\Hc'$ all the chains placed on these datacenters (ln.~\ref{alg:lemma:if_bu_fails:for_u_prime}-\ref{alg:lemma:if_bu_fails:add_u_prime}). 
At this stage, we have $\Hc' = \set{\hv^{u_6}, \hv^{u_3}, \hv^{u_5}, \hv^{u_4}}$.  

Next, the algorithm visits the datacenters belonging to $\Sc_{u_3}$ that were not visited yet, namely, $s_2$ and $s_5$. Consequently, the algorithm adds $\hv^{u_2}$ and $\hv^{u_1}$, that are placed on these datacenters, to $\Hc'$. 

At this stage, after marking each unmarked chain in $\Hc'$, there are no more un-visited datacenters.
The algorithm then halts with $\Hc' = \set{\hv^{u_1}, \hv^{u_2}, \hv^{u_3}, \hv^{u_4}, \hv^{u_5}, \hv^{u_6}}$. By the definition of the potential placement tree ~\eqref{eq:def_pot_loc_tree}, $T_{\Hc'}$ consists of all the datacenters that are delay-feasible for chains belonging to $\Hc'$, namely, $T_{\Hc'} = \set{s_0, s_1, s_2, s_3, s_5}$. 
Recall that we assume that $R=1$ and $\max_{u \in \Uc} \hat{C}_u = C_s = 1$.
Hence, $\abs{\Hc'} = 6 > 5 = \frac{R}{\max_{u \in \Uc} \hat{C}_u} \sum_{s \in T_{\Hc'}}  C_s$, and the claim holds true.

We now turn to prove the claim. We first show that $\Tc_{\Hc'}$ is almost-full. We use induction over the chains added to $\Hc'$ (ln.~\ref{alg:lemma:if_bu_fails:add_u_prime} in Alg.~\ref{alg:lemma:if_bu_fails}). 
For the base, we have
$\Hc' = \set{\hv^{\tilde{u}}}$. As \bu\ fails to place $\tilde{u}$, we know that 
$\Tc_{\Hc'} = \Tc_{\set{\hv^{\tilde{u}}}} = \set{\Sc_{\tilde{u}}}$ is almost-full. 

For the induction step, 
consider a chain $\hv^{u'}$ that is added to $\Hc'$ while considering datacenter $s$ (ln.~\ref{alg:lemma:if_bu_fails:add_u_prime} in Alg.~\ref{alg:lemma:if_bu_fails}).
Let $\hv^u$ denote the concrete chain considered in ln.~\ref{alg:lemma:if_bu_fails:for_u} of the algorithm at that iteration. 
 Alg.~\ref{alg:lemma:if_bu_fails} visits datacenters in a top-down order (that is, advancing towards the leaves), while \bu\ visits datacenters in DFS-order, i.e., in bottom-up order 
(that is, from the leaves towards the root). 
Furthermore, Alg.~\ref{alg:lemma:if_bu_fails} handles chains in reverse order of placement by \bu. 
It therefore follows that both $\hv^u$ and $\hv^{u'}$ were unplaced when \bu\ considered datacenter $s$, and $s \in \Sc_u \cap \Sc_{u'}$.
Hence, \bu\ placed $\hv^{u'}$ on $s$ before it tried to place $\hv^u$.
By the induction hypothesis, $\Sc_u$ is almost-full.
Hence, by Lemma~\ref{lem:bu_located_earlier_so_within_Tu}, $\Sc_{u'}$ is also almost-full. 
Hence, after $\hv^{u'}$ is added to $\Hc'$, $\Tc_{\Hc'}$ is still almost-full.
We therefore proved by induction that when Alg.~\ref{alg:lemma:if_bu_fails} halts, 
$\Tc_{\Hc'}$ is almost-full.

As $\Tc_{\Hc'}$ is almost-full, every datacenter $s \in \Tc_{\Hc'}$ is almost-full. Hence, by Lemma~\ref{lem:num_of_chains_on_full},  \bu\ placed at least $\frac{R \cdot C_s}{\max_{u \in \Uc} \hat{C}_u}$ chains on each such datacenter $s$.
Hence, the overall number of chains that \bu\ placed on $\Tc_{\Hc'}$ is at least
$\frac{R}{\max_{u \in \Uc} \hat{C}_u}\sum_{s \in \Tc_{\Hc'}} C_s $. 
By the construction of $\Hc'$, {\em every} chain that \bu\ placed on $\Tc_{\Hc'}$ belongs to $\Hc'$. Hence, $\Hc'$ contains all the $
\frac{R}{\max_{u \in \Uc} \hat{C}_u}
\sum_{s \in T_{\Hc'}} C_s$ chains that \bu\ placed on $\Tc_{\Hc'}$. In addition, $\Hc'$ includes $\hv^{\tilde{u}}$, and the result follows.
\end{proof}

After having characterized the scenarios where \bu\ fails, the following lemma shows
a sufficient condition for the problem being
infeasible without resource augmentation (recalling the definition of $\tilde{\mu}$ in Eq.~\eqref{eq:def:min_required_CPU}).

\begin{lemma}\label{lemma:then_opt_fails}
If there exists a set of chains $\Hc'$ s.t.\
$\abs{\Hc'} 
\cdot \tilde{\mu}
> 
\sum_{s \in T_{\Hc'}} C_s
$, 
then the problem is infeasible without resource augmentation.
\end{lemma}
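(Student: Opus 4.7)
The plan is a direct capacity–counting (pigeonhole) argument that contrasts the aggregate CPU demand of $\Hc'$ with the aggregate CPU supply available in $\Tc_{\Hc'}$ when no augmentation is permitted.

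First, I would argue that in any feasible (non-augmented) solution, every chain $\hv^u \in \Hc'$ must be placed on some datacenter in $\Sc_u$. Indeed, by constraint~\eqref{problem_constraint:single_placement} each chain is placed on exactly one datacenter, and by constraint~\eqref{problem_constraint:target_delay} the chosen datacenter must meet the target delay $\Delta(\hv^u)$ given \emph{some} allocation $\muv^{u,s}$ satisfying \eqref{problem_constraint:finite_positive_delay}–\eqref{problem_constraint:chain_computatioal_capacity}. By Corollary~\ref{cor:alloc_finds_min_allocation}, \cpuall\ produces a minimum-CPU feasible allocation for $\hv^u$ on $s$ whenever one exists; hence if $s \notin \Sc_u$ then no feasible allocation exists for $\hv^u$ on $s$. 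Consequently, the placement of every chain in $\Hc'$ must lie inside $\Tc_{\Hc'} = \bigcup_{\hv^u \in \Hc'} \Sc_u$.

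Next, I would lower-bound the CPU demand of $\Hc'$. By the definition of $\tilde{\mu}$ in~\eqref{eq:def:min_required_CPU}, any feasible allocation for any chain $\hv^u$ on any datacenter uses at least $\tilde{\mu}$ CPU units. Summing over the $|\Hc'|$ chains yields a total demand of at least $|\Hc'|\cdot\tilde{\mu}$ CPU units, all of which must be provisioned within the datacenters of $\Tc_{\Hc'}$.

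Finally, I would compare this demand to the total non-augmented supply in $\Tc_{\Hc'}$. Constraint~\eqref{problem_constraint:datacenter_computational_residual_capacity} caps the CPU allocated on each datacenter $s$ at $C_s$, so the total supply available to chains placed in $\Tc_{\Hc'}$ is at most $\sum_{s \in \Tc_{\Hc'}} C_s$. By the hypothesis of the lemma, the demand strictly exceeds this supply, which contradicts the existence of a feasible solution. There is really no hard step here: the main care needed is the first one, namely justifying that chains in $\Hc'$ cannot ``escape'' $\Tc_{\Hc'}$, which is immediate from the delay constraint together with the minimality guarantee of \cpuall. The result follows.
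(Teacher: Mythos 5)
Your proof is correct and follows essentially the same route as the paper's: a capacity-counting argument showing that every chain in $\Hc'$ must be placed (with at least $\tilde{\mu}$ CPU units) on some datacenter of $\Tc_{\Hc'}$, whose total non-augmented capacity is then exceeded. The paper states this in two sentences; you have merely filled in the same reasoning in more detail.
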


\begin{proof}
Any feasible solution must allocate for each chain $\hv^u \in \Hc'$ at least 
$\tilde{\mu}$
CPU units on some datacenter(s) in $T_{\Hc'}$. The result follows. 
\end{proof}


The theorem below, which is our main result in this section, now follows from combining Lemma~\ref{lemma:if_bu_fails} and Lemma~\ref{lemma:then_opt_fails}.

\begin{theorem}
\label{thm:max_resource_augmentation}
Assume \bu\ uses in each datacenter a multiplicative resource augmentation of processing capacity
\iftagged{short}{$R = 
{\max_{u \in \Uc}  \hat{C}_u}/
{\tilde{\mu}}$.}{ 
\begin{equation}\label{eq:def:bu_max_rsrc_aug_needed}
R = 
\frac
{\max_{u \in \Uc}  \hat{C}_u}
{\tilde{\mu}} \,.
\end{equation} 
}
Then, \bu\ finds a feasible solution whenever such a solution exists in a system without resource augmentation.
\end{theorem}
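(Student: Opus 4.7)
The plan is to prove the theorem by contrapositive: I will show that if \bu\ fails to find a feasible placement with the prescribed resource augmentation $R$, then no feasible solution exists even in the system without resource augmentation. Since the hypothesis assumes such a solution exists, \bu\ cannot fail.

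First I would invoke Lemma~\ref{lemma:if_bu_fails}: if \bu\ fails when each datacenter is given augmented capacity $R \cdot C_s$, then there exists a set of chains $\Hc'$ such that
\[
\abs{\Hc'} > \frac{R}{\max_{u \in \Uc} \hat{C}_u} \sum_{s \in \Tc_{\Hc'}} C_s.
\]
Next I would substitute the specific value $R = \max_{u \in \Uc} \hat{C}_u / \tilde{\mu}$ from the theorem statement. The $\max_{u \in \Uc} \hat{C}_u$ factor cancels, yielding
\[
\abs{\Hc'} > \frac{1}{\tilde{\mu}} \sum_{s \in \Tc_{\Hc'}} C_s, \qquad \text{i.e.,} \qquad \abs{\Hc'} \cdot \tilde{\mu} > \sum_{s \in \Tc_{\Hc'}} C_s.
\]
At this point I would apply Lemma~\ref{lemma:then_opt_fails}, whose premise is exactly this inequality, to conclude that the problem is infeasible in a system without resource augmentation. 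This contradicts the assumption of the theorem that a feasible solution exists in the non-augmented setting. Hence \bu\ must succeed.

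There is no substantive obstacle here since all the combinatorial work has been done in Lemmas~\ref{lemma:if_bu_fails} and~\ref{lemma:then_opt_fails}; this theorem is essentially the contrapositive composition of the two. The only thing worth being careful about is consistency of the quantity $\tilde{\mu}$: it is defined in Eq.~\eqref{eq:def:min_required_CPU} as the minimum over all chains and delay-feasible datacenters of $\norm{\muv^{u,s}}_1$, and since any feasible solution must assign some $\hv^u \in \Hc'$ to some $s \in \Tc_{\Hc'}$ using at least $\tilde{\mu}$ CPU units, the counting argument in Lemma~\ref{lemma:then_opt_fails} is tight for this plug-in. The proof is thus a short two-line deduction from the two preceding lemmas.
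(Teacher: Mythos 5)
Your proposal is correct and follows exactly the same route as the paper's own proof: assume \bu\ fails, substitute the value of $R$ from~\eqref{eq:def:bu_max_rsrc_aug_needed} into Lemma~\ref{lemma:if_bu_fails} to obtain $\abs{\Hc'} \cdot \tilde{\mu} > \sum_{s \in \Tc_{\Hc'}} C_s$, and then invoke Lemma~\ref{lemma:then_opt_fails} to conclude infeasibility of the non-augmented instance. Your added remark about the consistency of $\tilde{\mu}$ as defined in~\eqref{eq:def:min_required_CPU} is a sensible sanity check but does not change the argument.
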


\begin{proof}
Assume that \bu\ fails. 
Assigning the value of $R$ ~\eqref{eq:def:bu_max_rsrc_aug_needed} in Lemma~\ref{lemma:if_bu_fails}, there exists a set of chains $\Hc'$ s.t. 
\begin{align}
\abs{\Hc'} & > 
\frac
{1}
{\tilde{\mu}} 
\sum_{s \in T_{\Hc'}} C_s. \notag
\end{align}
Applying Lemma~\ref{lemma:then_opt_fails}, the result follows.
\end{proof}

\section{Algorithmic solution to the \migProb }\label{sec:top_lvl}

In this section, we use the algorithms for the CPU allocation problem (Sec.~\ref{sec:alloc}) and for the placement problem (Sec.~\ref{sec:bu}) as building blocks for solving the \migProb. %
We first present an algorithm that takes as input a feasible solution for \migProb, and greedily reduces its cost (Sec.~\ref{sec:push_up}).
Later, 
we present our algorithmic solution for \migProb, \algtop, which  targets  a minimal cost solution, using the minimum amount of resource augmentation (Sec.~\ref{sec:subsec:top-lvl-alg}).

\subsection{Reducing costs}
\label{sec:push_up}

\begin{algorithm}[t!]
    \caption{\small \pushUp($\Gc, {\tilde{\Hc}}, \set{\muv^{u,s}}_{u,s} 
    \set{\Sc_u}_{u}$, feasible $\yv, \av$)}
    \label{alg:push_up}
    \scriptsize
    \begin{algorithmic}[1]

        \While {True}\label{alg:push_up:while1_begin}
            \For {$\hv^u \in { \tilde{\Hc}}$ in non-decreasing order of $\yv(u,s) \cdot  \muv^{u,s}$}\label{alg:push_up:pick_chain}
                \State push $\hv^u$ as high as possible in the tree as long as this decreases $\hv^u$'s cost, and update $\yv$, $\av$ accordingly\label{alg:push_up:push_up}
            \EndFor
            \If {did not succeed to push up any chain} \label{alg:push_up:if_then_stop}
                \State {\bf break}
             \label{alg:push_up:break}
            \EndIf
        \EndWhile \label{alg:push_up:while1_end}
        \State return $\yv, \av$
    \end{algorithmic}
\end{algorithm}

We first note that the placement costs and latency constraints are {\em separable} between different chains. That is, for any two chains $\hv^u$, $\hv^{u'}$, if $\hv^u$ is placed on datacenter $s$ using CPU allocation $\muv^{u,s}$,  the cost of the placement and CPU allocation of $\hv^u$, and the latency inflicted on $\hv^u$, are independent of the placement and CPU allocation of $\hv^{u'}$.

Based on this observation, we devise our algorithm, {\em PushUp} (or {\em \pushUp} for short), for reducing the cost of a given feasible solution.
{\pushUp, formally described in Alg.~\ref{alg:push_up},  scans the provided feasible solution, and greedily tries to improve it by pushing each chain as high up as possible in the network topology (while reducing the cost).}
\pushUp\ considers chains in non-increasing order of the CPU units allocated to them, thus prioritizing chains that potentially offer the largest gain from being pushed-up.

Intuitively, this ``push-up'' operation serves two goals: 
\begin{inparaenum}[(i)]
\item decreasing the total cost and, 
\item decreasing the total number of migrations, as a datacenter located higher in the tree can serve users located in a larger physical area. \end{inparaenum}
Importantly, \pushUp\ always outputs a feasible solution. Hence, one may run \pushUp\ as a heuristic {to improve a feasible solution found by any algorithm solving the placement problem.}

\paragraph*{Run-time analysis}
\pushUp\ moves a chain to another datacenter only if this reduces the cost. Hence, once it moves chain $\hv^u$ from a datacenter, it never moves it back (due to separability). 
\tagged{short}{It follows that the running time of \pushUp\ is
$\tilde{O}(\abs{\Hc}^2\cdot D(\Gc)^2)$.
}
\tagged{long}{Thus, the number of iterations of the {\em while} loop is at most $\abs{\tilde{\Hc}} \times \text{height}(\Gc)$.
As each iteration requires $O(\abs{\tilde{\Hc}} D(\Gc))$ steps, the time complexity of \pushUp\ is    $O(\abs{\tilde{\Hc}}^2 D(\Gc)^2)$.}

\begin{algorithm}[t!]
    \caption{\small \algtop\ ($\Gc, \Hc, \tilde{\Hc}, \av$)
    }
    \label{alg:algtop}
    \scriptsize
    \begin{algorithmic}[1]
        \State $\set{\muv^{u,s}}_{{\hv^u \in \tilde{\Hc}}, s \in \Sc}, \label{bupu1}
        \set{\Sc_u}_{\hv^u \in \tilde{\Hc}}$ = \cpuall ($\Gc, { \tilde{\Hc}}$)
        \State $\yv, \av$ = \bu\ ($\Gc, {\tilde{\Hc}}, \set{\muv^{u,s}}_{{\hv^u \in \tilde{\Hc}}, s \in \Sc}, \set{\Sc_u}_{{ \hv^u \in \tilde{\Hc}}}, \av$)\label{bupu2}
        \If {$\yv \not\equiv 0$} \Comment{found a feasible solution}
            \State $\yv, \av$ = \pushUp($\Gc, { \tilde{\Hc}}, \set{\muv^{u,s}}_{{\hv^u \in \tilde{\Hc}}, s \in \Sc}, \set{\Sc_u}_{{ \hv^u \in \tilde{\Hc}}}, \yv, \av$)\label{bupu3}
        \Else
            \State $\yv, \av =$ feasible solution with minimal resource augmentation\label{bupu4}
            \Comment{binary search using \bu($\Gc, \Hc, \set{\muv^{u,s}}_{u \in \Uc, s \in \Sc}, \set{\Sc_u}_{u \in \Uc}, \av$)}
            \State $\yv, \av$ = \pushUp($\Gc, \Hc, \set{\muv^{u,s}}_{u \in \Uc, s \in \Sc}, \set{\Sc_u}_{u \in \Uc}, \yv,\av$)\label{bupu5}
 \label{alg:algtop:while_begin}
         \EndIf 
    \end{algorithmic}
\end{algorithm}

\begin{figure*}[tb!]
\centering
\includegraphics[width=15.5cm]{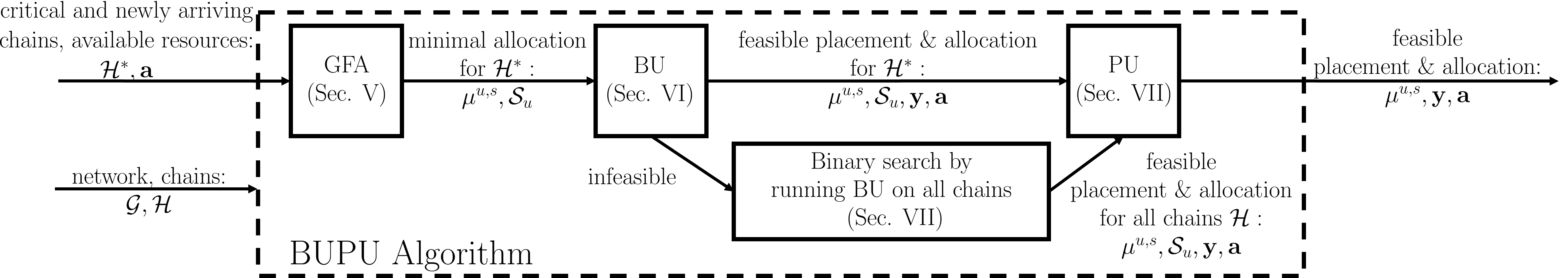}
\caption{Structure of the \algtop\ algorithmic solution.}\label{fig:toplevel}
\end{figure*}

\subsection{The \algtop\ algorithm}
\label{sec:subsec:top-lvl-alg}

Theorem~\ref{thm:max_resource_augmentation} provides an upper bound on the amount of resource augmentation that \bu\ requires to find a feasible solution for the placement problem. However,  the amount of resource augmentation needed in practice might  be significantly lower than that provided by the theorem.  
{Our algorithm for solving the \migProb, Bottom-Up Push-Up (\algtop), aims at finding a feasible, minimal-cost solution while using a minimal amount of resource augmentation. The algorithm 
is summarized in Fig.~\ref{fig:toplevel}, 
and formally defined in Alg.~\ref{alg:algtop}.

The algorithm gets as input the structure of the network $\Gc$, and the set of chains $\Hc$. These inputs are used by all the modules of the algorithm (in Fig.~\ref{fig:toplevel} we omit the arrows connecting $\Gc$ and $\Hc$ to each module to improve clarity). 
In addition, \algtop\ gets the set of critical and newly arriving chains $\Hc^* \subseteq \Hc$, and the currently available resources $\av$.

\algtop\ first runs \cpuall\ to obtain for every critical or newly arriving chain $\hv^u$ its list of delay-feasible datacenters $\Sc_u$, and the minimal CPU allocation required for placing $\hv^u$ on each server belonging to $\Sc_u$ (ln.~\ref{bupu1}). Recall that this minimal CPU allocation is denoted $\muv^{u, s}$.
Next, \algtop\ runs \bu\ to see if a feasible solution exists given the current resource augmentation (ln.\ref{bupu2}).
If so, then \pushUp\ is applied on the set of critical and newly arriving chains to reduce the cost (ln.~\ref{bupu3}).
However, if \bu\ does not find a feasible solution when considering (re)placing only critical and newly arriving chains, the algorithm turns to solve the placement problem for {\em all} chains in the system, to ensure feasibility.
In this case, \algtop\ might end up ``reshuffling'' the placement of many of the chains. 
To adjust the amount of resource augmentation, \algtop\ performs a binary search for the minimal amount of resource augmentation required for obtaining a feasible solution (ln.~\ref{bupu4}).
Given a feasible solution that minimizes the amount of resource augmentation, \algtop\ runs \pushUp\ on all the chains, to reduce the solution cost (ln.~\ref{bupu5}).

Intuitively, using more resource augmentation allows locating more chains in the cloud, thus reducing both the computation costs, and the need for future migrations. 
Hence, there exists a tradeoff between the amount of resource augmentation, and the solution cost. 
We study this tradeoff in Sec.~\ref{sec:sim:cost}.
}

\section{Numerical Evaluation}\label{Sec:sim}
In this section, we evaluate BUPU against existing alternatives and highlight some trade-offs, thus providing insights that go beyond our analytical results. 

\subsection{Simulation settings}\label{Sec:sim_settings}
\newcommand{\subfloatWidth}{0.445\columnwidth}
\newcommand{\subfloatNarrowWidth}{0.23\columnwidth}
\newcommand{\subfloatNarrowerWidth}{0.21\columnwidth}

We now describe the settings of our baseline scenario, and later vary some of them to study their impact on performance.

\begin{table*}
    \scriptsize
    \centering
    \caption{Simulated scenarios}
    {    \begin{tabular}{l|c|c|c|c|c|c|c|c|}
        \cline{2-9}
        & simulated & telecom & antennas & \#distinct & avg. traffic & avg.\ density & linear density & speed \\
        & area (km$^2$)& provider & (\#PoAs) & vehicles & (\#vehicles/s) & (\#requests/km$^2$) & (\#vehicles/km) & (km/h) \\
        \hline
        \multicolumn{1}{|l|}{Luxemburg} & $6.8 \times 5.7$ & Luxembourg Post & 1524 & 25,497 & 2191 & 56 & 2.48 & 15.4 \\
        \hline
        \multicolumn{1}{|l|}{Monaco} & $3.1 \times 1$ & Monaco Telecom & 231 & 13,788 & 7121 & 2297 & 32.7 & 9.0 \\
        \hline
    \end{tabular}
    }
    \label{tab:sim_settings}
\end{table*}

\textbf{Service area.}  
We consider two real-world scenarios, capturing mobility patterns with different characteristics.
We focus on the centers of the cities of
{{\em Luxembourg} and {\em Principality of Monaco},}
using mobility traces~\cite{Luxembourg, Monaco}
and real-world antenna locations, publicly available in~\cite{OpenCellid}. For each simulated area, we consider the antennas of the cellular telecom provider having the largest number of antennas in the simulated area.
The settings of the service areas are detailed in Table~\ref{tab:sim_settings}, which details also some traffic parameters, which we shortly explain.
{For both traces we consider the rush hour period between 7:30 am and 8:30 am.}

\textbf{Network and datacenters.}
Our simulated networks are illustrated in Fig.~\ref{Fig:sim_netw}, where each PoA is co-located with a leaf datacenter. 
At each decision period, each service chain is associated with the nearest PoA datacenter. Figures~\ref{fig:Lux_Voronoi} and~\ref{fig:Monaco_Voronoi} depict a Voronoi diagram of the simulated areas. 
The network topology connecting the datacenters is a 6-height tree, structured as follows. 
Denote a topology level by $\ell\in\{0, 1, \dots, 5\}$, with $\ell=0$ corresponding to the leaf datacenters (co-located with the PoAs), and $\ell=5$ corresponding to the root datacenter. 

We build levels $5, 4, 3, 2, 1$ in Luxembourg's network by recursively partitioning the simulated area into $1, 4, 16, 64, 256$ rectangles, respectively.
Namely, each datacenter at level $\ell \in \set{5,4,3,2}$ has 4 children, each of them responsible for 1/4 of its area.
The children of each datacenter at level $\ell=1$ are the PoAs (antennas and datacenters), of the telecom provider Luxemburg Post, located in its rectangle. 
Finally, if no PoAs exist in a certain rectangle, the respective datacenters are pruned from the tree.
Figures~\ref{fig:Lux_map} and~\ref{fig:Lux_tree} detail the four top levels in Luxembourg's network. 

Monaco's network is built in a similar fashion to that of Luxembourg. However, as Monaco's center makes a long, narrow rectangle (3.1 km $\times$ 1 km), at the top-level (level 5), we partition the simulated area into three horizontal almost-square rectangles. We build levels $3,2,1$ by recursively partitioning these squares into quadrants, so that levels $4,3,2,1$ comprise $3, 12, 48$, and 192 datacenters, respectively. Finally, the leaf datacenters are the 231 antennas (and co-located datacenters) of Monaco Telecom. 
Figures~\ref{fig:Monaco_map} and~\ref{fig:Monaco_tree} detail the three highest levels in Monaco's network. Note the pruned datacenters in  Fig.~\ref{fig:Monaco_tree}, corresponding to areas where no PoAs exist (e.g., areas in the sea).

The datacenter processing capacity depends upon the level $\ell$ and is set to $\ell \cdot C_\text{cpu}$, for a given  $C_\text{cpu}$, to reflect the increase of datacenter capacities when moving from the edge to the cloud.
The total link delay  $\tau(i,j)$ is set to 2~ms for every link $(i,j)$, resulting in a maximum round trip network delay of 20~ms from the PoA (level 0) to the root (level 5).
{Given the minimal allocations obtained by GFA, we use the ILP relaxation of the problem, as discussed in Sec.~\ref{sec:roadmap_placement}. 
Using this ILP, through binary search, one can find the minimal $C_\text{cpu}$ for which there exists a feasible solution for the relaxation. This capacity is denoted by $\hat{C}_\text{cpu}$, and serves as the baseline value of the CPU available at every server, where we consider resource augmentation with respect to this value.
We note that $\hat{C}_\text{cpu}$ serves as a lower bound on the required capacity for such capacity allocation settings.}

\begin{figure*}[tb!]
\centering
    \subfloat[\label{fig:Lux_Voronoi}Luxembourg's cell coverage]{
        \includegraphics[height=3.5cm]{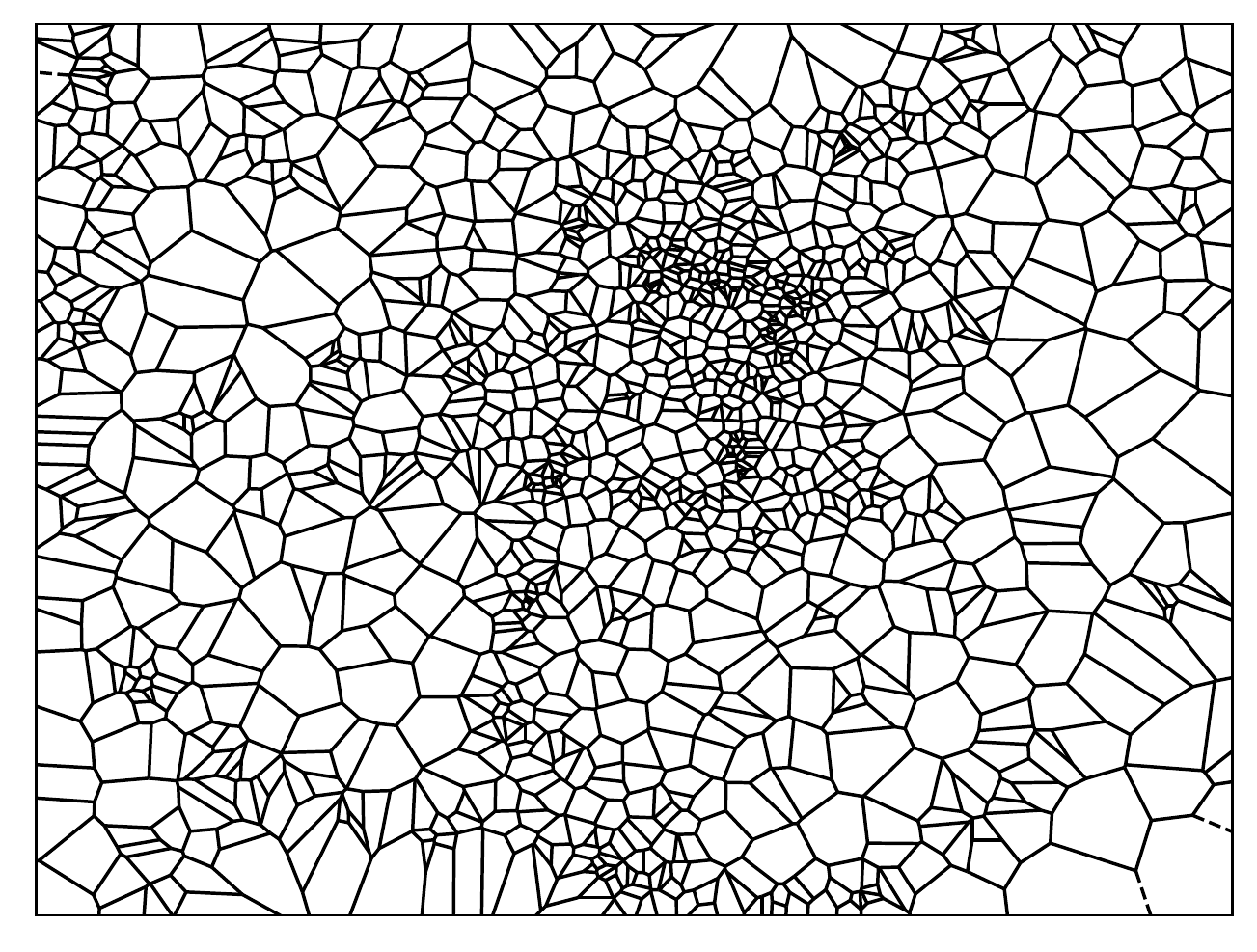}
    }
    \subfloat[\label{fig:Lux_map}Luxembourg's datacenters coverage area]{
        \includegraphics[height=3.5cm]{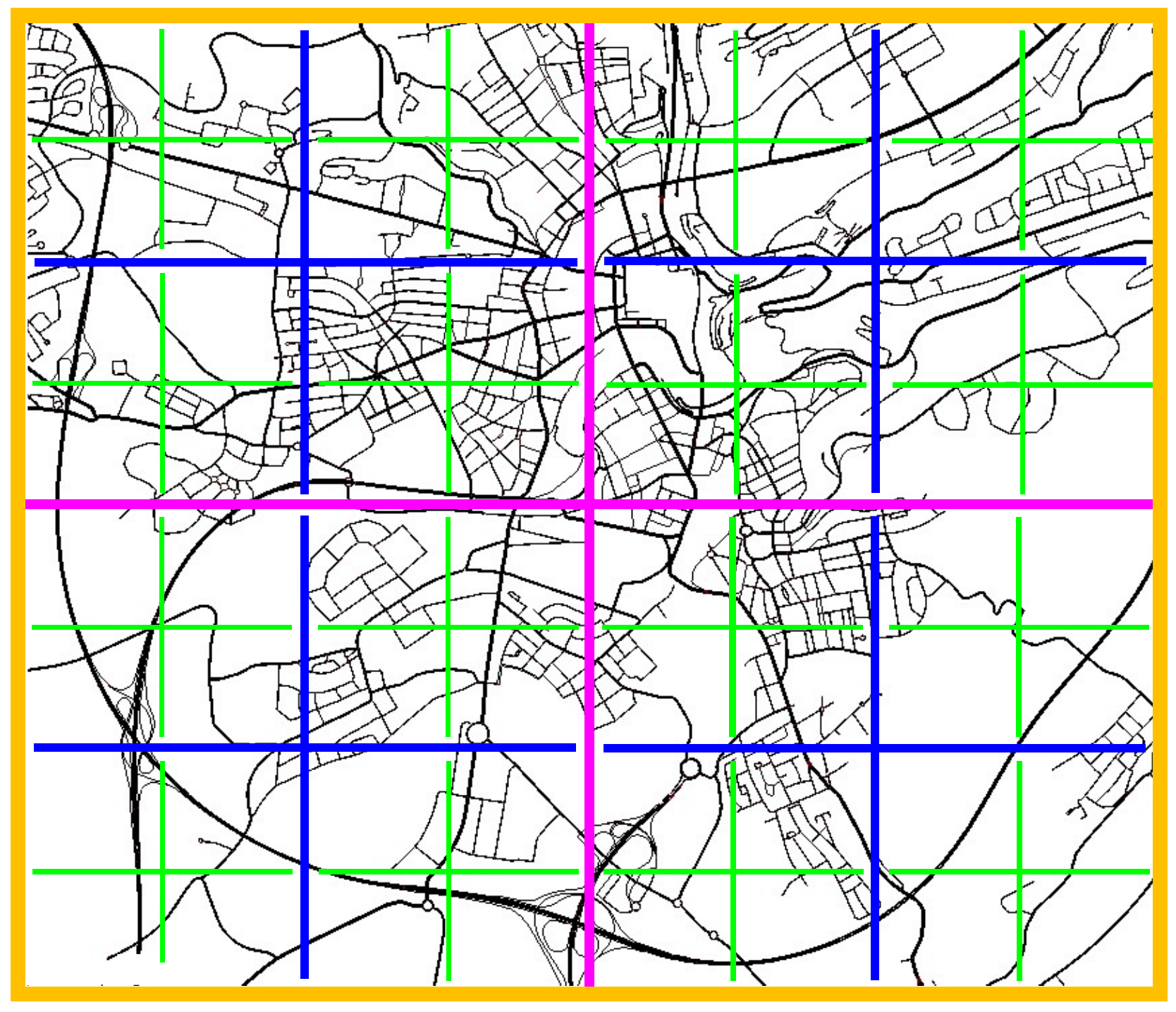}
    }
    \subfloat[\label{fig:Lux_tree}Luxembourg's datacenters network]{
        \includegraphics[height=3cm,width=7.5cm]{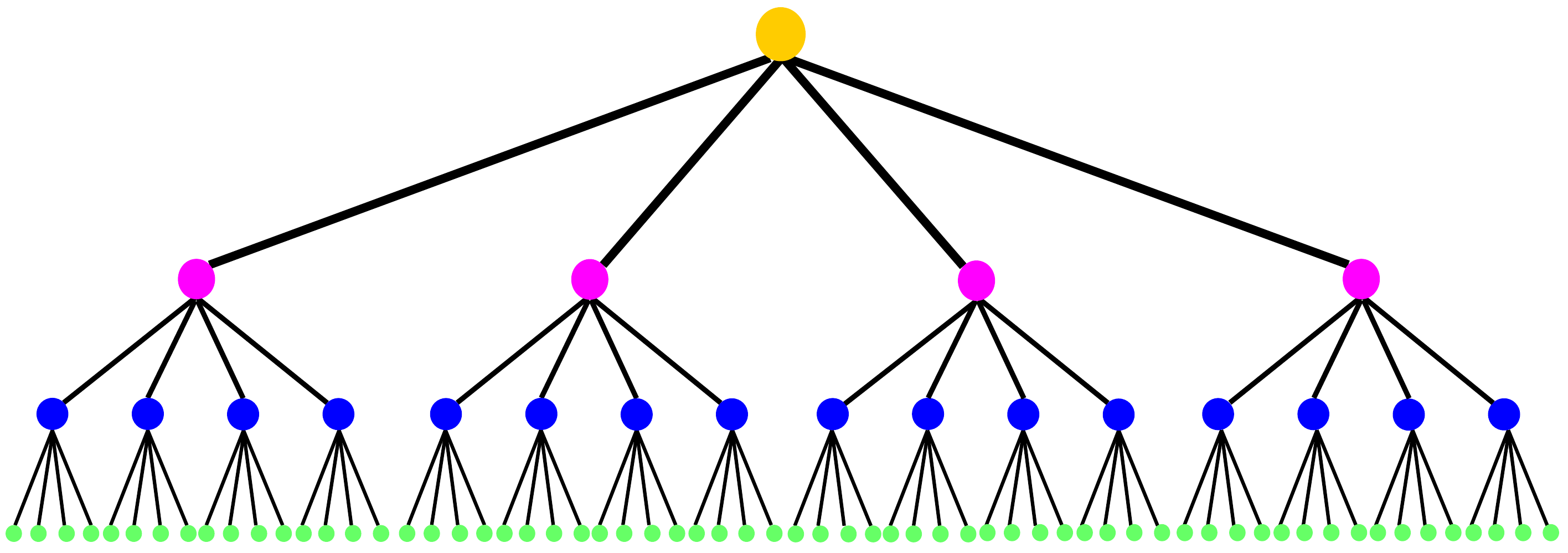}
    }
    
    \vspace{0.25 cm}
    
    \subfloat[\label{fig:Monaco_Voronoi}Monaco's cell coverage]{
        \includegraphics[width=6.1cm, height=2.3cm]{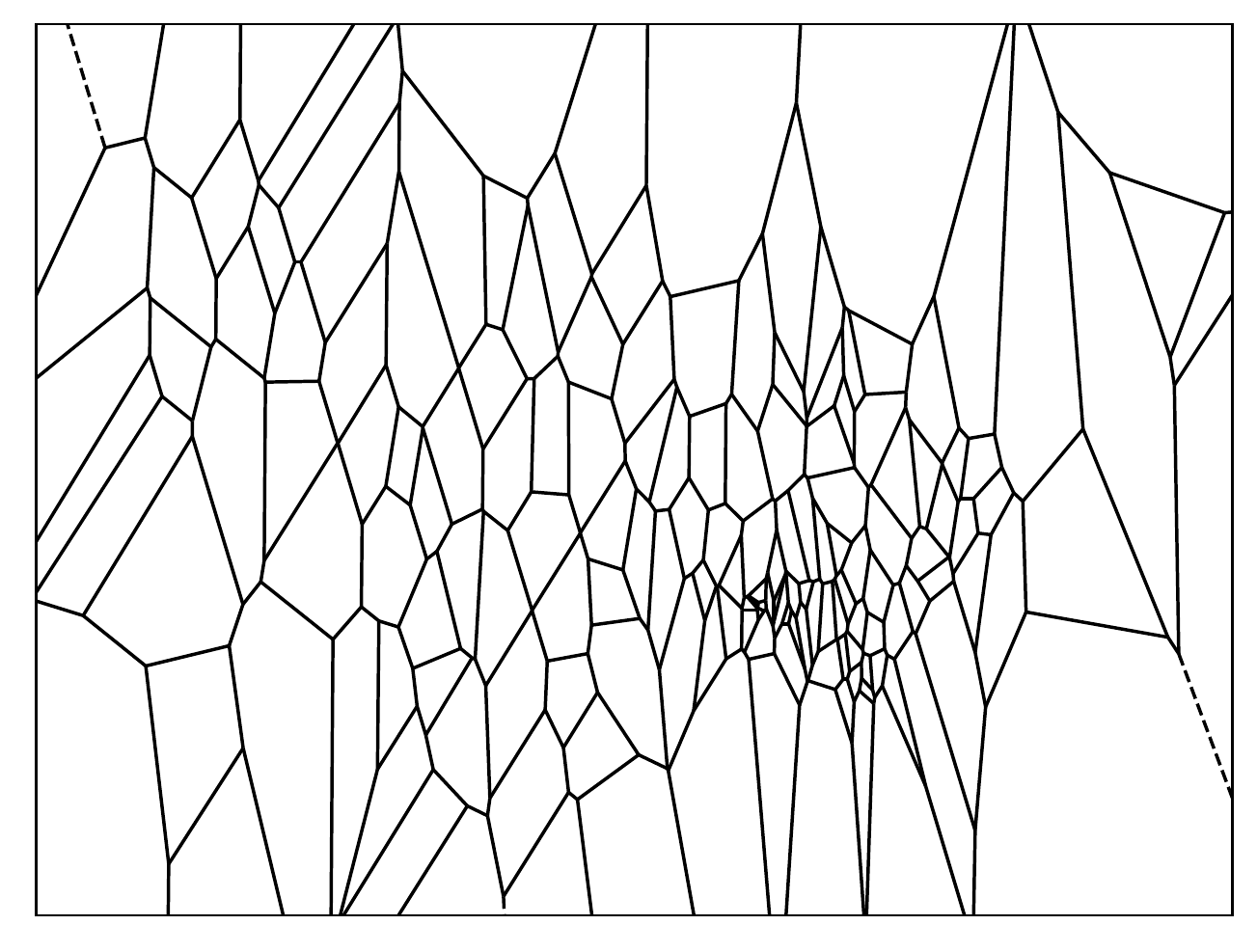}
    }
    \subfloat[\label{fig:Monaco_map}Monaco's datacenters coverage area]{
        \includegraphics[width=6.1cm,height=2.3cm]{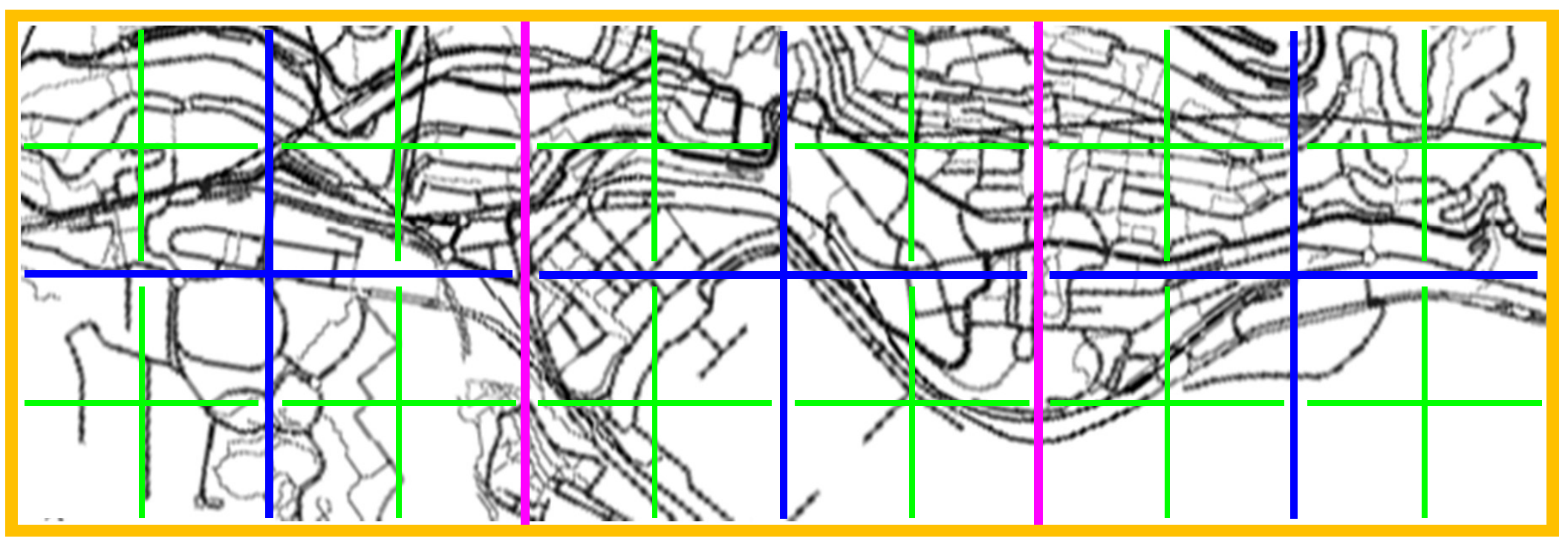}
    }
    \subfloat[\label{fig:Monaco_tree}Monaco's datacenters network]{
        \includegraphics[width=5cm,height=2.3cm]{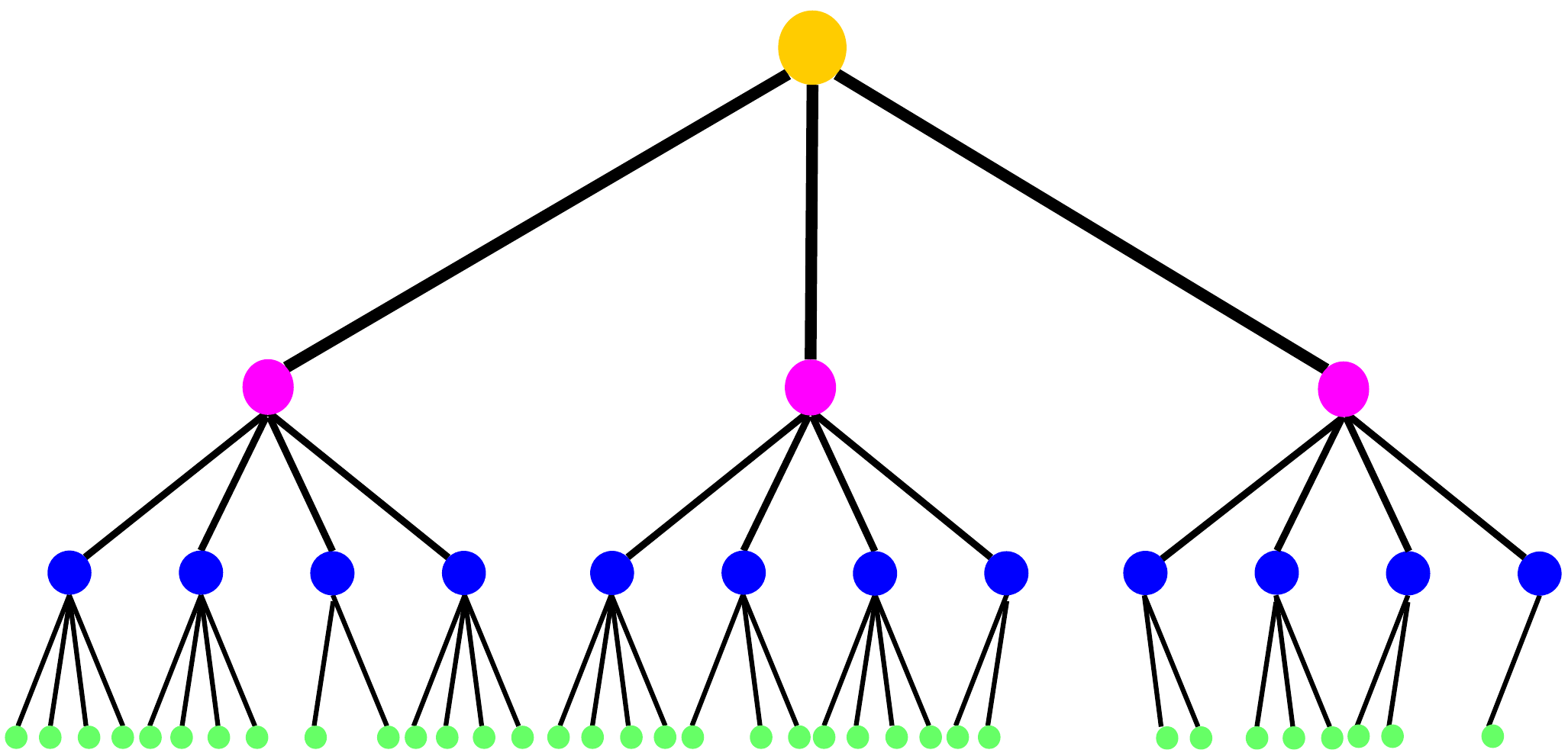}
    }
     \caption{\label{Fig:sim_netw}The service network in Luxembourg (top) and Monaco (bottom). Figures~\ref{fig:Lux_Voronoi} and~\ref{fig:Monaco_Voronoi} present Voronoi diagrams of the PoAs, corresponding to the leaves (level 0). Figures~\ref{fig:Lux_map} and~\ref{fig:Monaco_map} illustrate the iterative partition of the area into rectangles. The rectangles highlighted in yellow, pink, blue, and green, correspond to levels 5 (root), 4, 3, and 2, respectively, in the network, as depicted in Figures~\ref{fig:Lux_tree} and~\ref{fig:Monaco_tree}. Some of the leaves in Monaco's datacenters network are pruned from the tree, as no PoAs exist in the respective rectangles.}

\end{figure*}
\captionsetup[subfloat]{labelformat=parens}

\textbf{Traffic and mobility.}
To characterize the traffic in the simulated scenarios, we consider the {\em linear vehicle density}, defined as the average number of cars per km of lane\footnote{A {\em lane} is a uni-directional path on the road; a single road may contain one or more lanes in each direction.}.  
Fig.~\ref{fig:lin_density} depicts the linear vehicle density within the ``coverage area'' of each datacenter at levels $1,2,3,4$. 
Note the significantly higher values in Monaco's scenario, capturing the heavier traffic in this network. 

Fig.~\ref{fig:left} captures the average number of cars that have moved to another rectangle (or left the simulated area) during the sampling period. This can be seen as the offered migration rate, since a car changing cell may dictate migrating the corresponding service chain.
The higher traffic density in Monaco is translated to lower mobility. 

Consider again Table~\ref{tab:sim_settings}. The table presents the average number of active vehicles, and the average demand density, defined as the number of service requests (vehicles) per square kilometer.
Observe that Monaco's higher linear vehicle density results in a significantly lower average speed (9.0 km/h in Monaco vs. 15.4 km/h in Luxembourg). 

\captionsetup[subfloat]{labelformat=empty}
\begin{figure}[t!]
    \centering
    \subfloat [\label{fig:Lux_lin_density_L1}Lux., level 1] {    
        \includegraphics[height=1.75cm, width=\subfloatNarrowWidth]
            {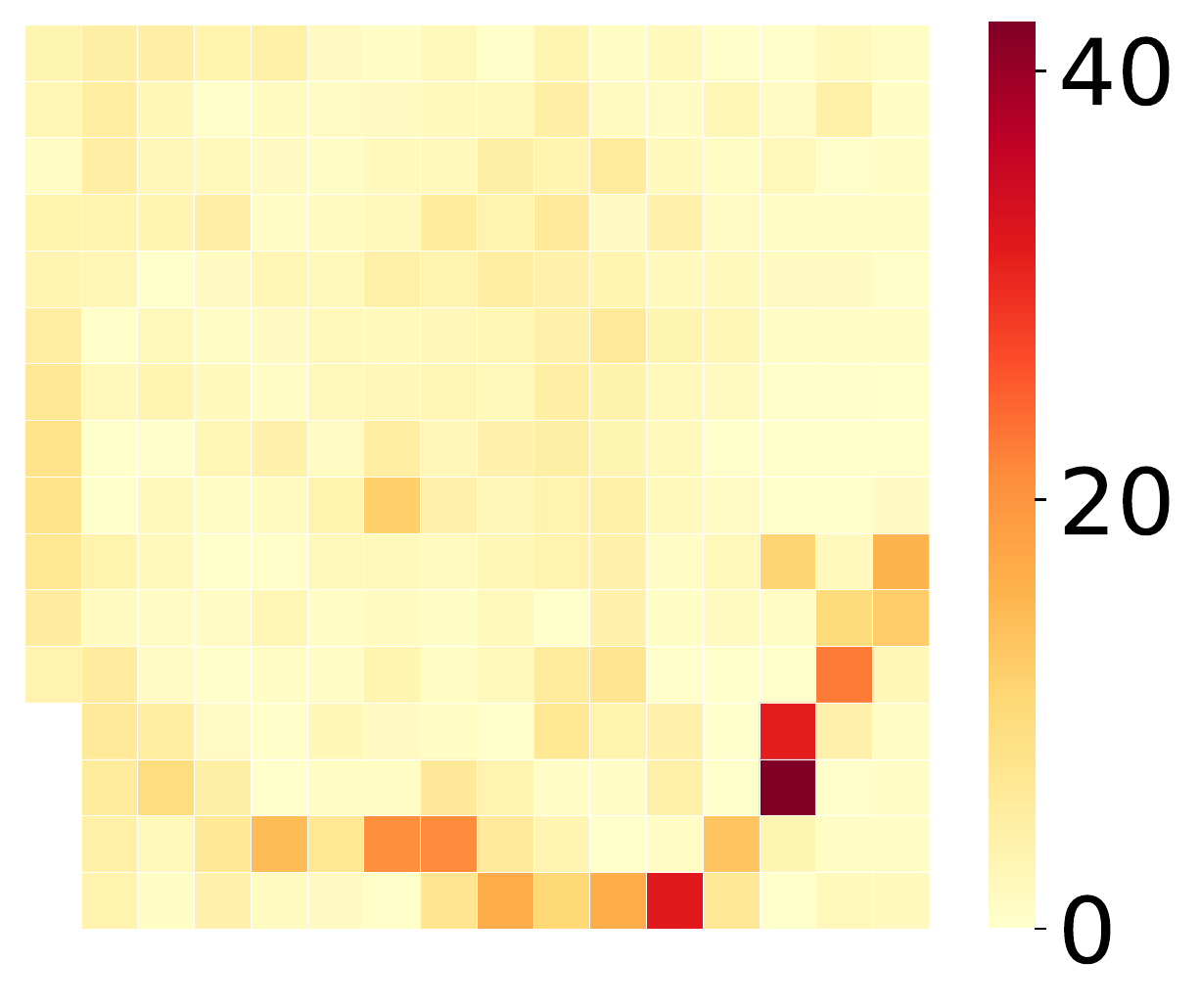}
    }
    \subfloat [\label{fig:Lux_lin_density_L2}Lux., level 2] {    
        \includegraphics[height=1.75cm, width=\subfloatNarrowWidth]
            {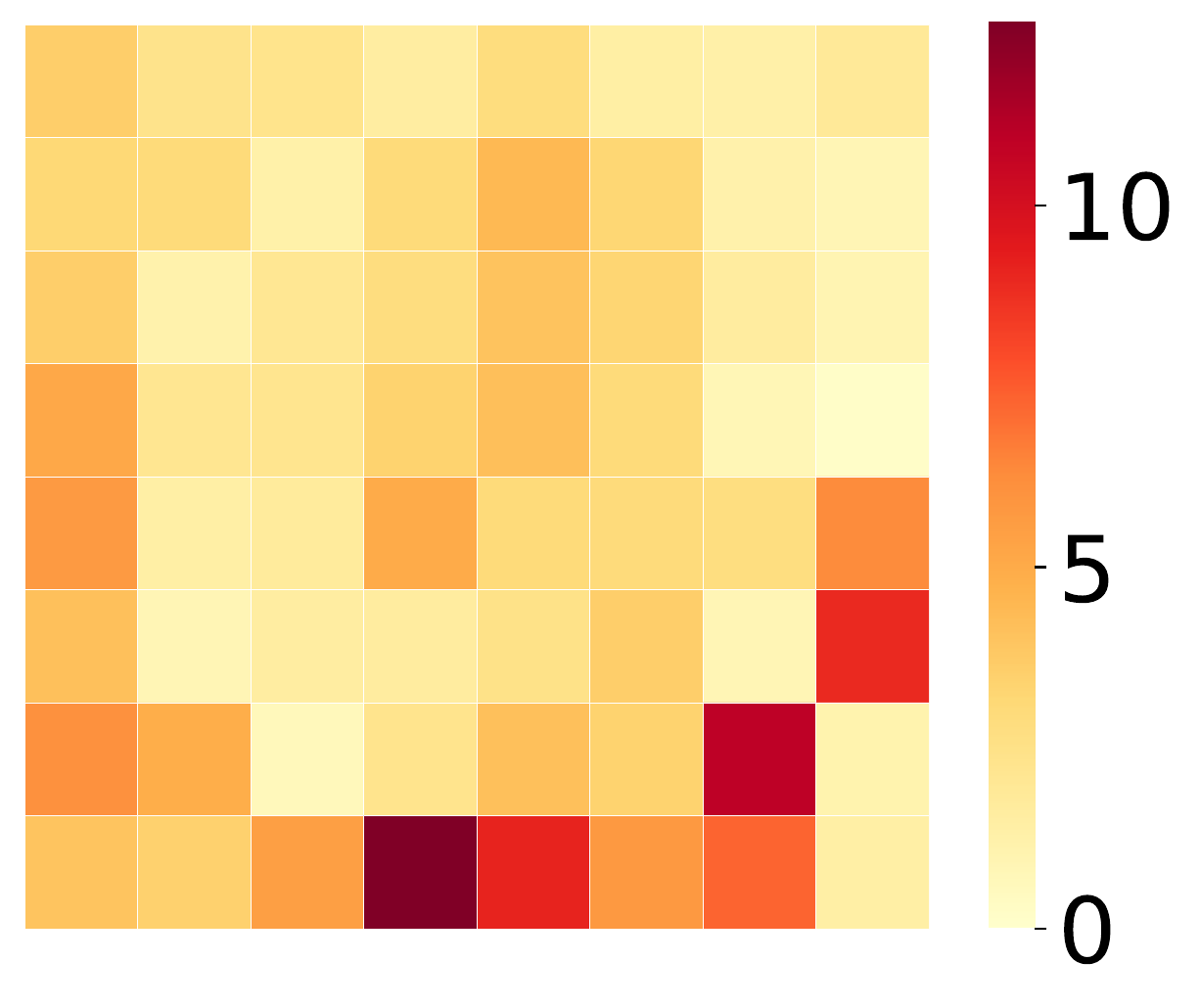}
    }
    \subfloat [\label{fig:Lux_lin_density_L3}Lux., level 3] {    
        \includegraphics[height=1.75cm, width=\subfloatNarrowWidth]
            {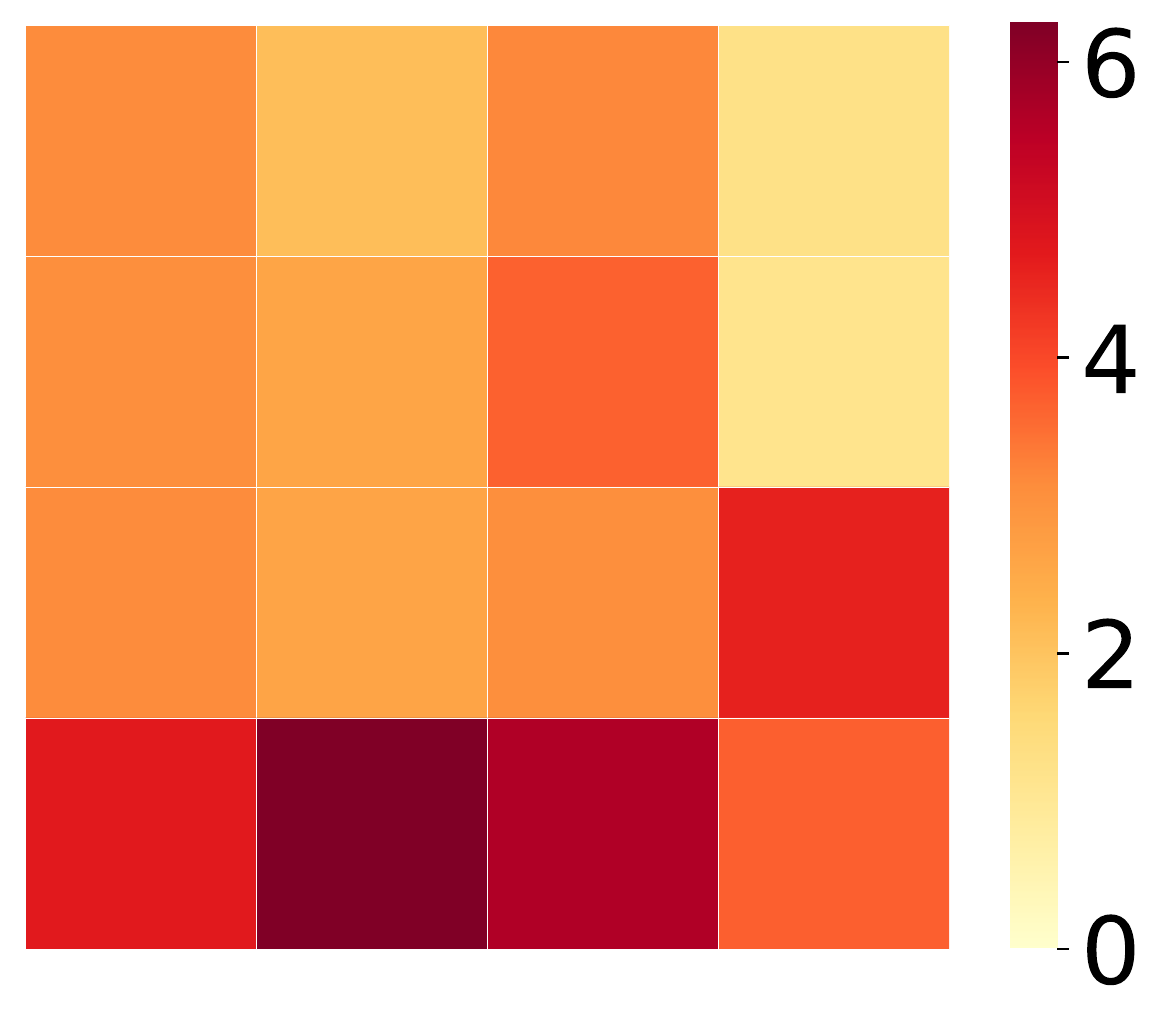}
    }
    \subfloat [\label{fig:Lux_lin_density_L4}Lux., level 4] {    
        \includegraphics[height=1.75cm, width=\subfloatNarrowWidth]
            {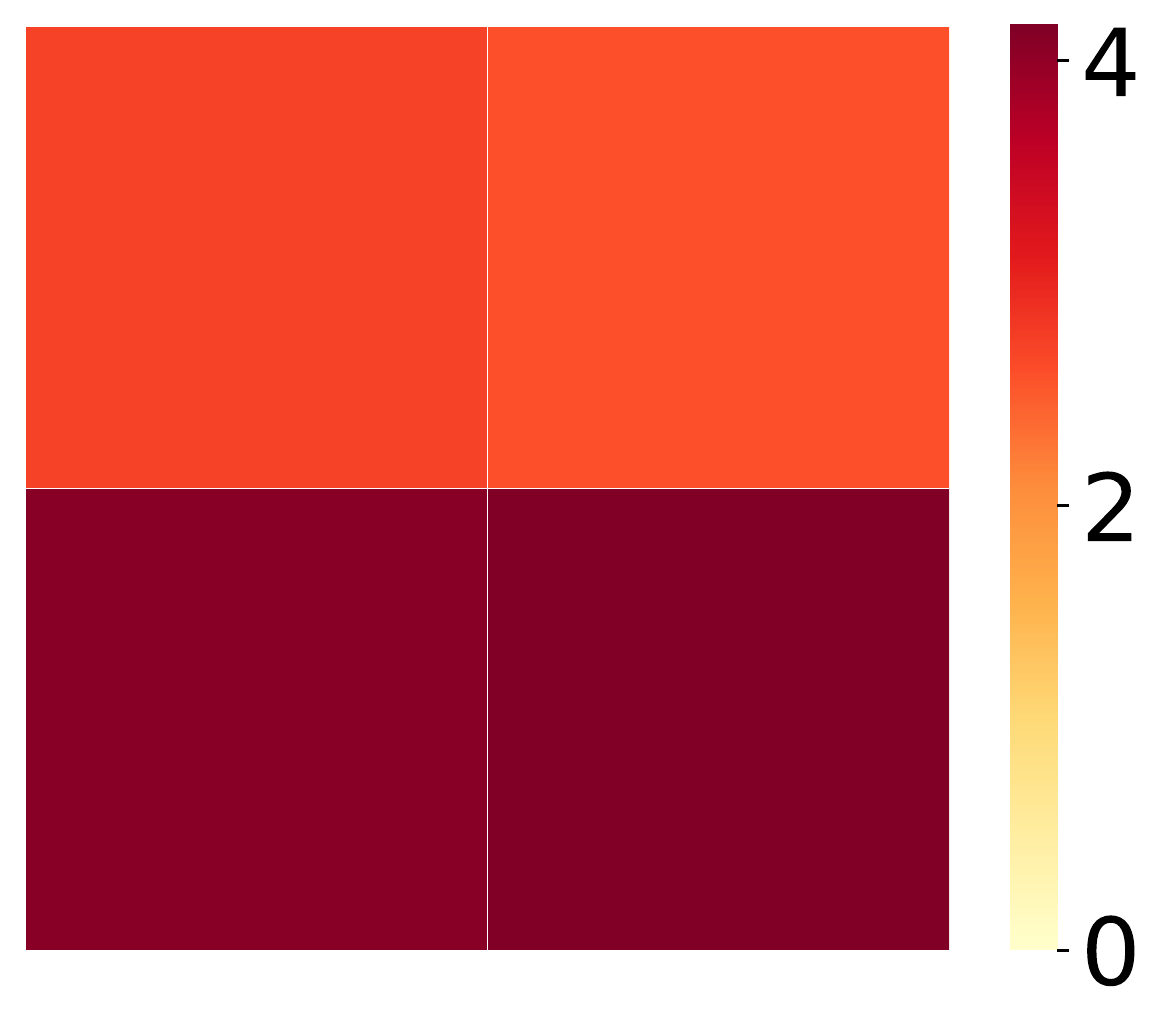}
    }
    
    \vspace{0.25 cm}

    \subfloat [\label{fig:Monaco_lin_density_L1}Monaco, level 1] {
        \includegraphics[height=0.75cm, width=\subfloatNarrowWidth]{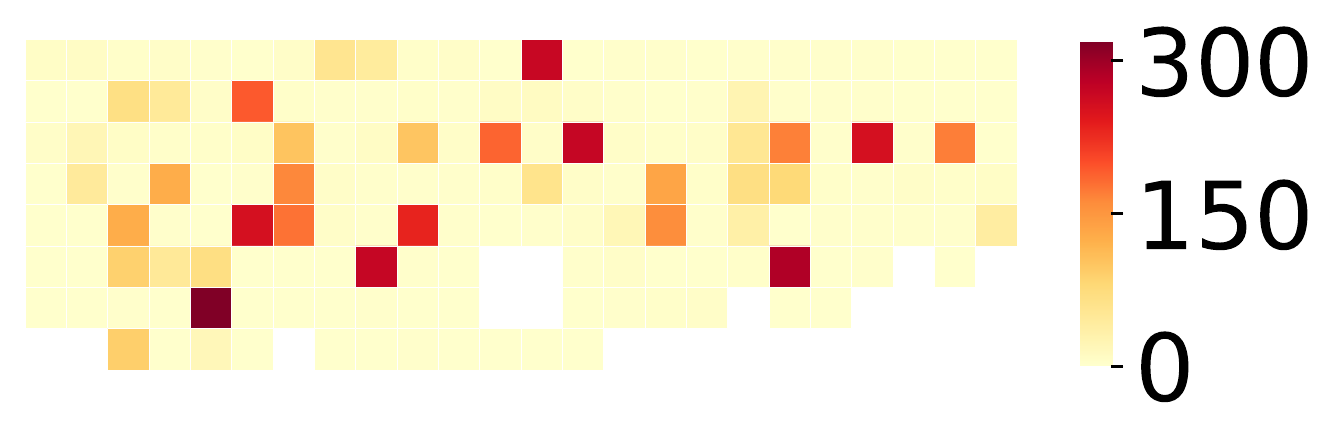}
    }
    \subfloat [\label{fig:Monaco_lin_density_L2}Monaco, level 2] {    
        \includegraphics[height=0.75cm, width=\subfloatNarrowWidth]{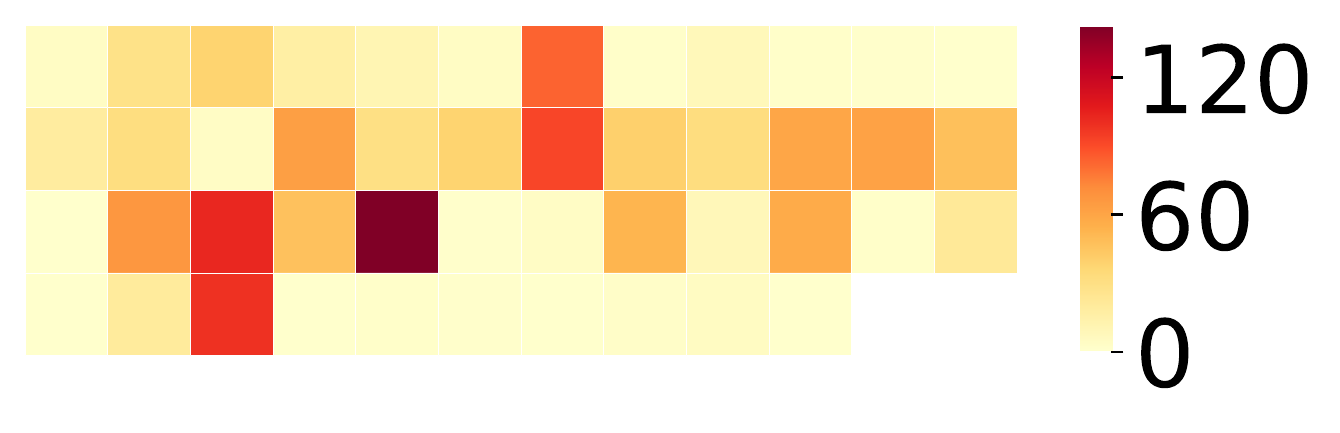}
    }
    \subfloat [\label{fig:Monaco_lin_density_L3}Monaco, level 3] {
        \includegraphics[height=0.75cm, width=\subfloatNarrowWidth]{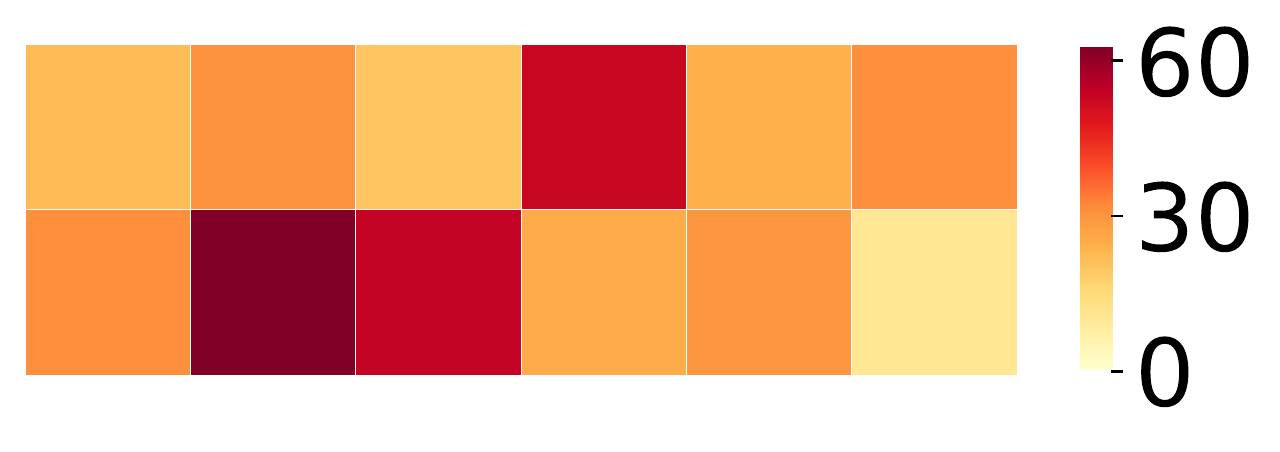}
    }
    \subfloat [\label{fig:Monaco_lin_density_L4}Monaco, level 4] {
        \includegraphics[height=0.75cm, width=\subfloatNarrowWidth]{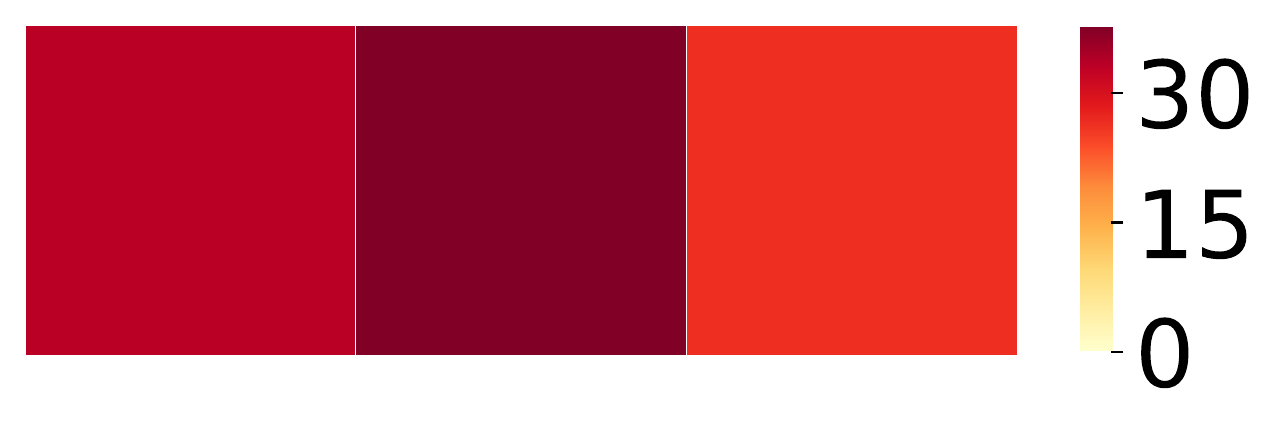}
    }
    \caption{Average linear vehicular density 
    [vehicles/km]
    during the 7:30-8:30\ am interval in each rectangle in Luxembourg (top) and Monaco (bottom). White rectangles correspond to areas where no roads exist.
}\label{fig:lin_density}
\end{figure}
\captionsetup[subfloat]{labelformat=parens}

\captionsetup[subfloat]{labelformat=empty}
\begin{figure}[t!]
    \centering
    \subfloat [Lux., level 1\label{fig:lux:left_L1}] {    
        \includegraphics[height=1.75cm, width=\subfloatNarrowerWidth]
        {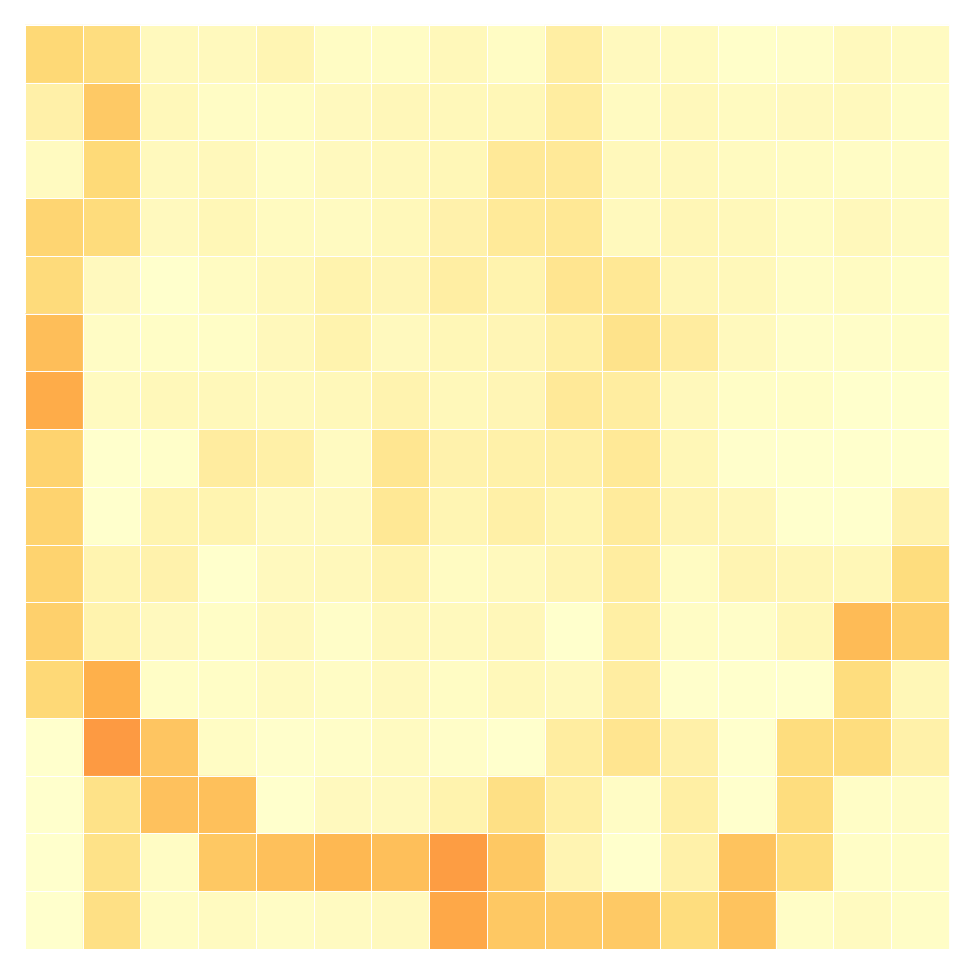}
    }
    \subfloat [Lux., level 2\label{fig:lux:left_L2}] {    
        \includegraphics[height=1.75cm, width=\subfloatNarrowerWidth]
            {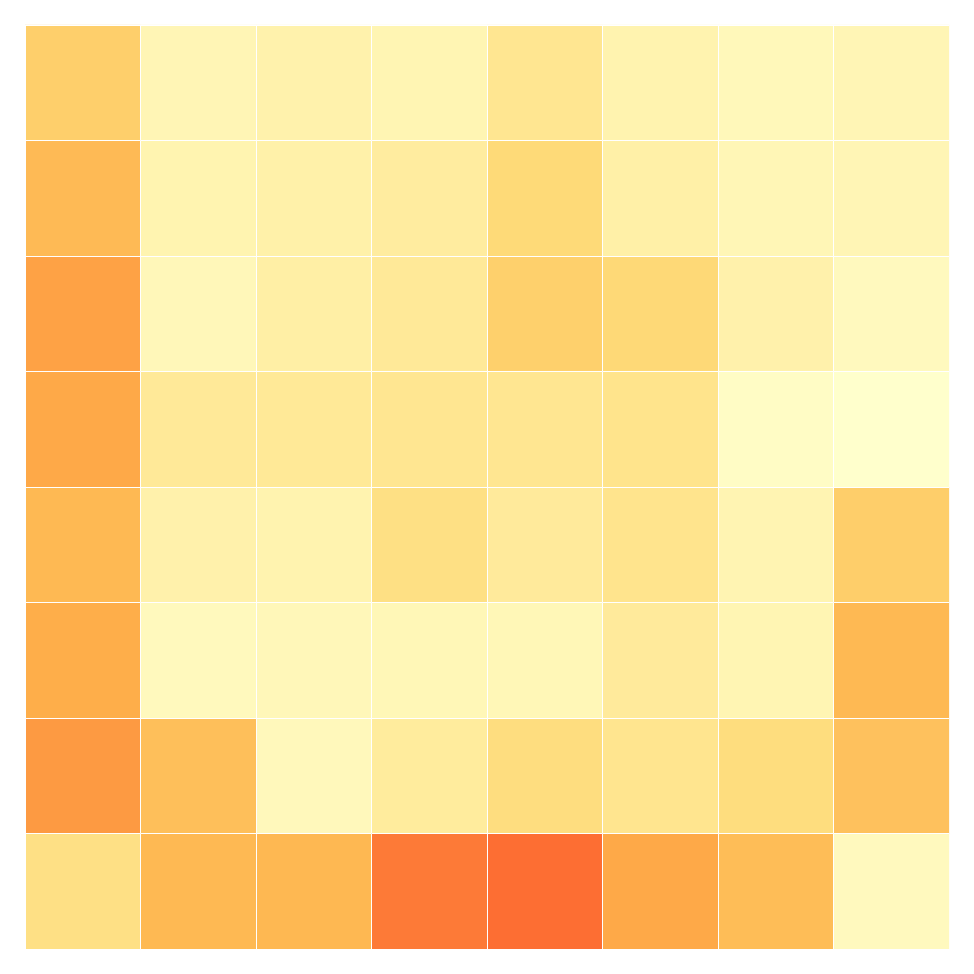}
        }
    \subfloat [Lux., level 3\label{fig:lux:left_L3}] {    
        \includegraphics[height=1.75cm, width=\subfloatNarrowerWidth]
            {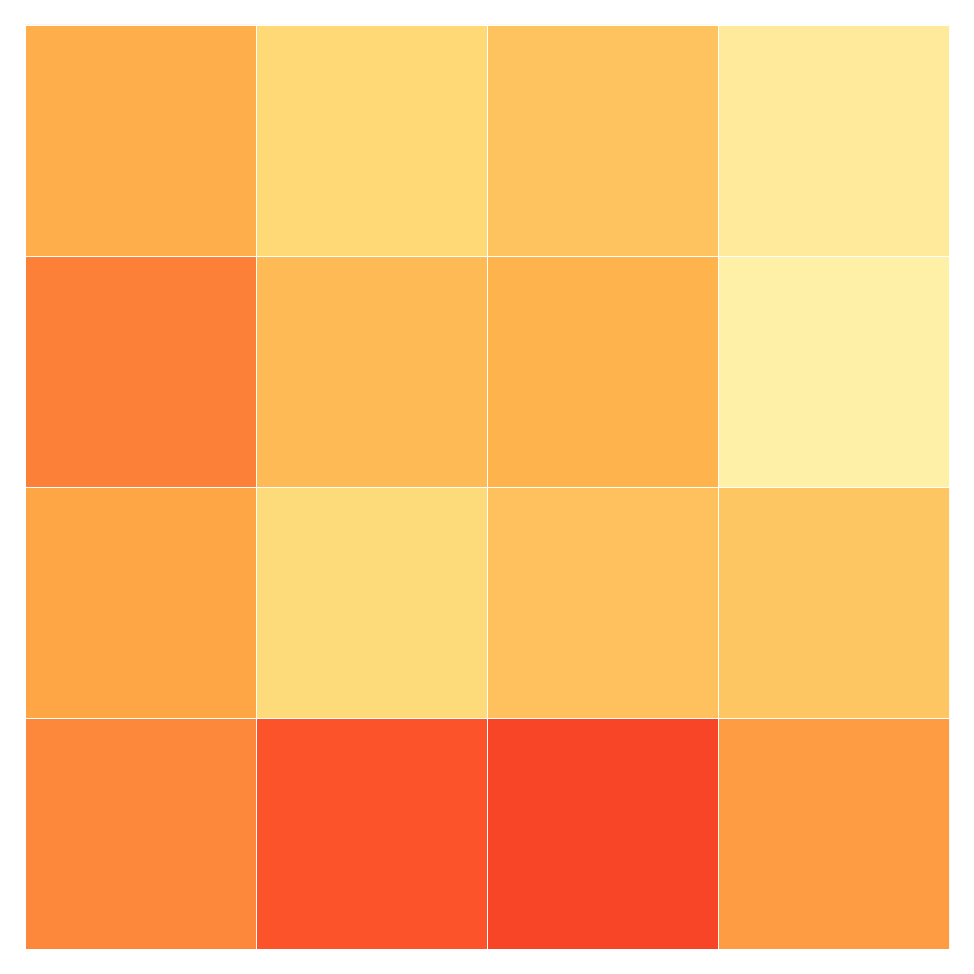}
    }
    \subfloat [Lux., level 4\label{fig:lux:left_L4}] {    
        \includegraphics[height=1.75cm, width=\subfloatNarrowerWidth]
            {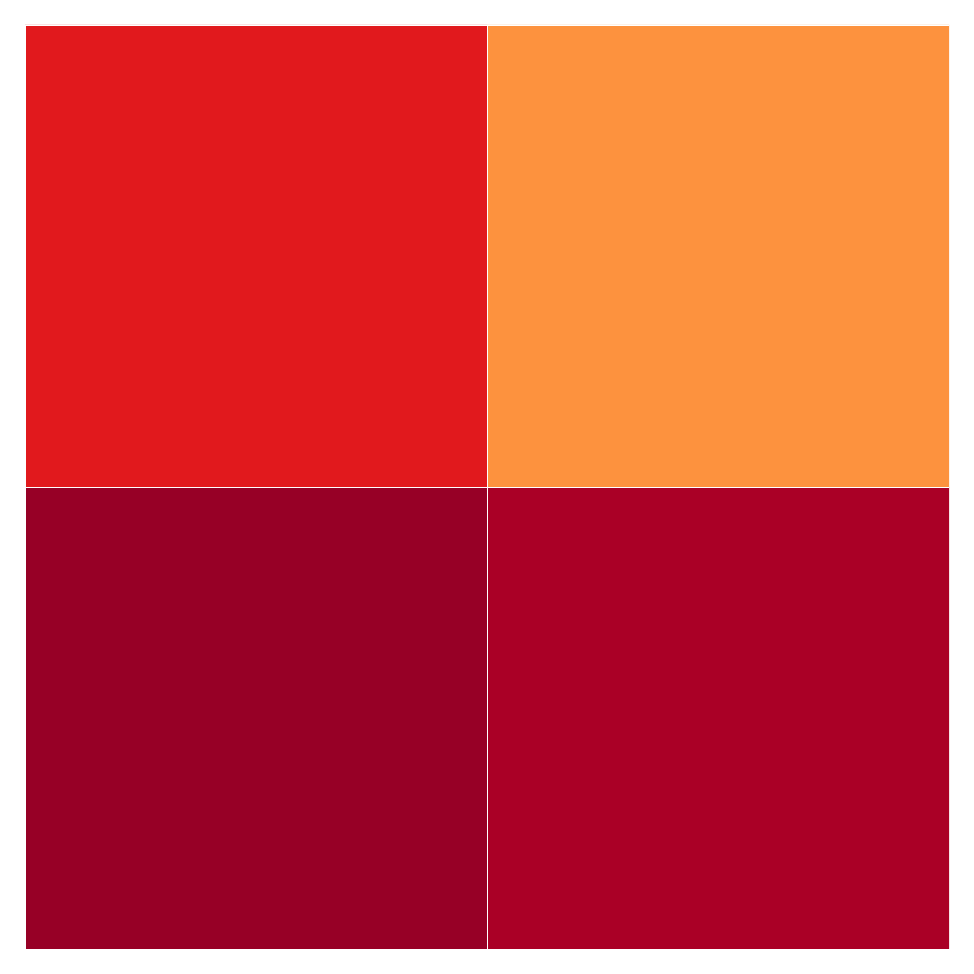}
    }
    \subfloat{    
        \includegraphics[height=1.7cm]{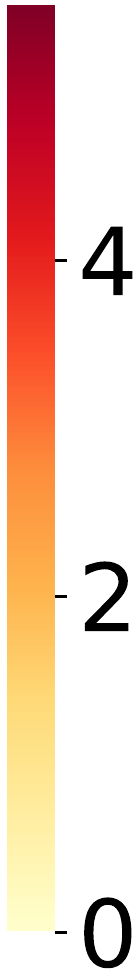}
    }
    
    \vspace{0.3cm}
    \subfloat [\label{fig:monaco:left_L1}Monaco,\\ level 1] {    
        \includegraphics[height=0.75cm, width=\subfloatNarrowerWidth]
            {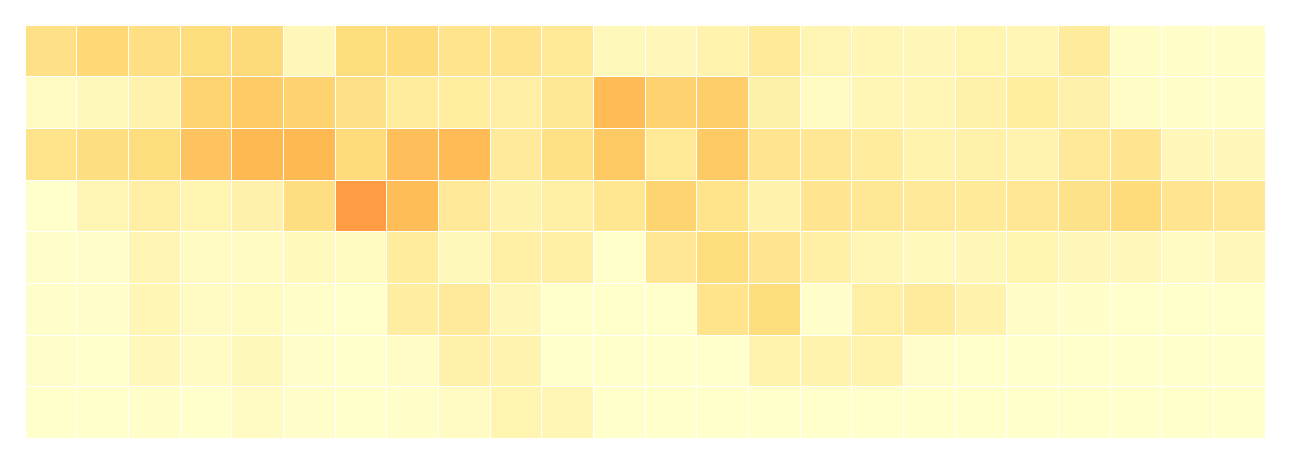}
    }
    \subfloat [\label{fig:monaco:left_L2}Monaco,\\ level 2] {    
        \includegraphics[height=0.75cm, width=\subfloatNarrowerWidth]
            {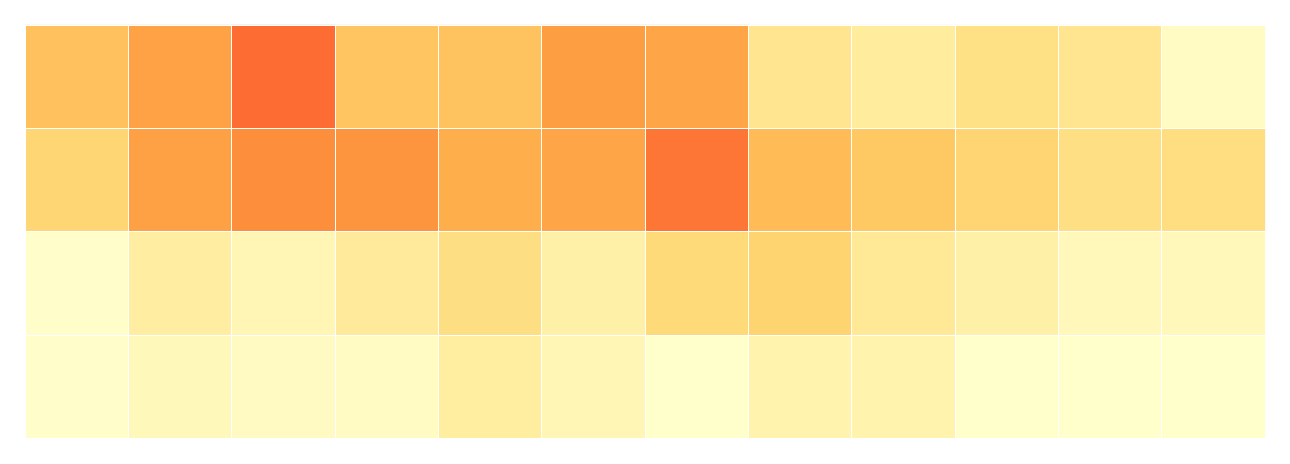}
    }
    \subfloat [\label{fig:monaco:left_L3}Monaco,\\ level 3] {    
        \includegraphics[height=0.75cm, width=\subfloatNarrowerWidth]
            {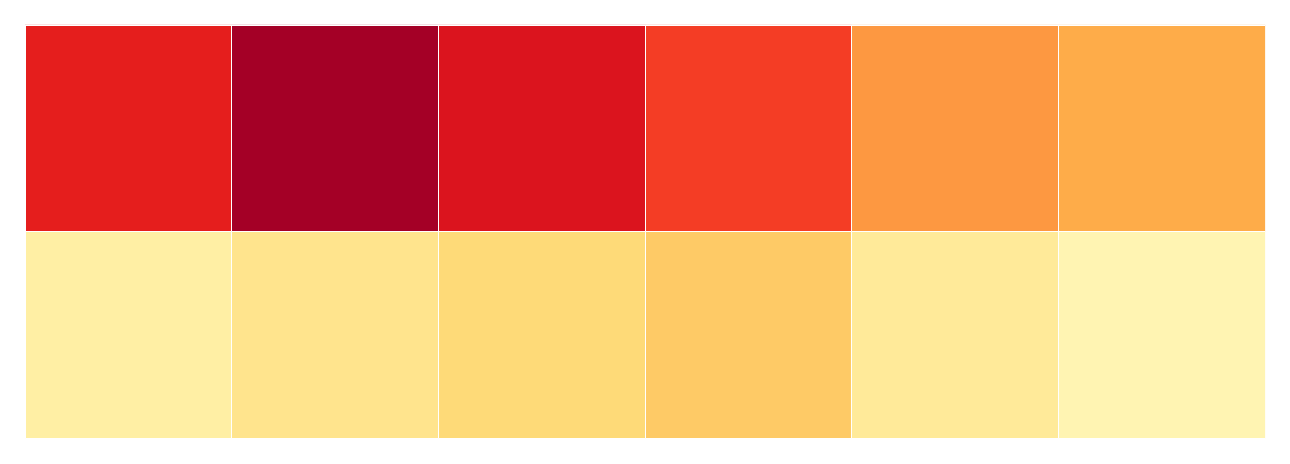}
    }
    \subfloat [\label{fig:monaco:left_L4}Monaco,\\ level 4] {    
        \includegraphics[height=0.75cm, width=\subfloatNarrowerWidth]
            {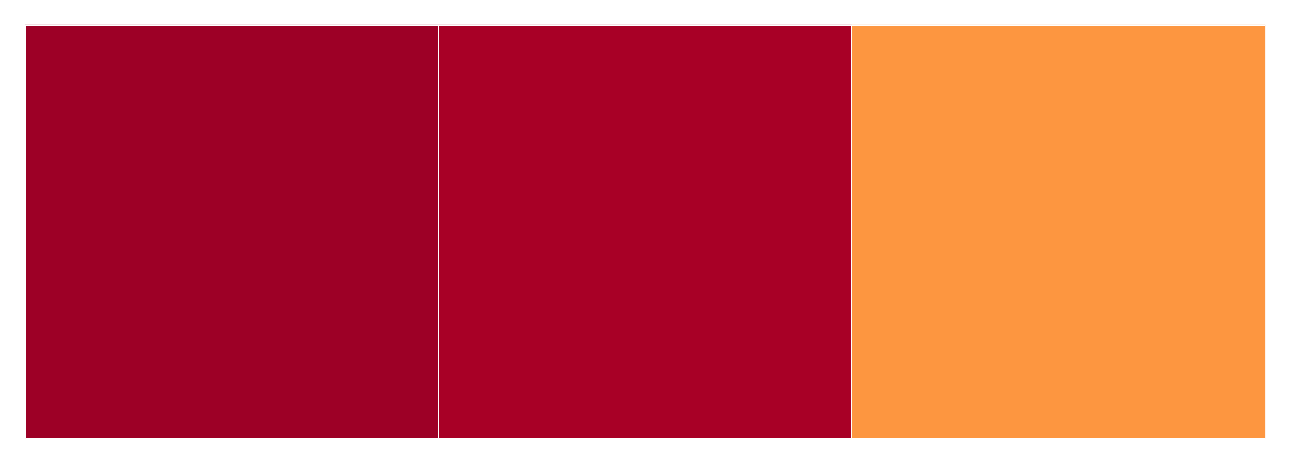}
    }
    \subfloat {    
    \includegraphics[height=0.71cm]{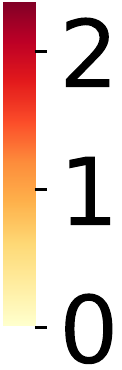}
    }
    \caption{Average number of vehicles leaving a rectangle every second during the 7:30-8:30 am interval in Luxembourg (top) and Monaco (bottom).
 }\label{fig:left}
\end{figure}

\captionsetup[subfloat]{labelformat=parens}

\textbf{Services and service chains.}
We consider two types of  time-critical automotive safety services,  one requiring a maximum delay of 10~ms (e.g., collision avoidance \cite{ca}) and the other a maximum delay of 100~ms (e.g., see-through \cite{see-through}). 
{For simplicity of description, we hereinafter refer to the services with the tighter delay constraints of 10~ms as {\em RT} (real-time) services, and we refer to the corresponding chains as {\em RT chains}.} As reported in \cite{ca}, a service chain consists of 3~VMs. Also,  
for each chain $\hv^u$, $\theta^u_1 \lambda^u_1 = \theta^u_3 \lambda^u_3 = 200$~MHz, and $\theta^u_2 \lambda^u_2 = 1$~GHz, so as to reflect a chain with a front-end and back-end VM with low computation load and a central VM with high computation load.
{For simplicity, we assume $\gamma_k^u\theta_k^u$ to be a constant.}
Upon entering the considered geographical area, each vehicle requests one of the two services at random, with some probability, to be defined later.

\textbf{Cost parameters.} 
We choose the cost values so that all the three components of our objective function  in~\eqref{Eq:def_obj_func} -- namely, computation cost,  bandwidth cost, and migration cost -- are no more than an order of magnitude apart, and none of them becomes negligible. 
The cost  of 100~MHz of CPU at level $\ell$  is $\chi^c= 2^{5-\ell}$ cost units, to reflect the decrease in computation costs when moving from the edge to the cloud~\cite{tong2016, SFC_mig}.
The bandwidth cost is $\chi^b=3$ cost units, while the migration cost is $\chi^m=600$ cost units.

{Note that by this choice of parameters, we have
the computation cost of a chain greater than $14 \cdot 2^{5-\ell}$ (due to our choice of $\theta \cdot \lambda$ above), which ranges between 14 and 448, depending on the level. Furthermore,
 the bidirectional bandwidth cost of a chain placed on level $\ell$ is $3 \cdot 2 \cdot \ell$, which is between 0 and 30, depending on the level.
} 

\textbf{Benchmark algorithms.}
We set the default decision period to $T = 1$\,s, and assume that there exists a mobility prediction scheme that associates each vehicle  with the closest PoA in the next decision period. 
Based on this prediction, we run GFA
to get the CPU allocation $\muv$ for each  chain, as detailed in Sec.~\ref{sec:alloc}. 
This allows us to fix the CPU allocations and compare our approach to existing solutions for the placement problem. 
Notably, if running placement without GFA, alternative deployment solutions (not dealing with CPU allocation and unaware of all the constraints in the \migProb) will get infeasible solutions most of the time.  
For our comparison, we consider the following benchmarks:

{\em Lower-bound\,(LBound):}
{Given the allocation, we use the ILP discussed in Sec.~\ref{sec:roadmap_placement} and solve the linear relaxation of this problem. 
This provides a lower bound on the cost of any feasible solution for the placement problem.
}%
In contrast to our algorithm \algtop, the fractional solution may place parts of the same chain (or even ``fractions of VMs'') on distinct datacenters. Furthermore, the LP formulation considers at each iteration all the chains in the system (not only the critical or newly arriving chains), thus significantly increasing the possible solution space.
Hence, LBound provides a lower bound on the minimal cost for the placement problem.

{\em First-fit (\ffit):} This scheme places each chain on the first delay-feasible datacenter with sufficient available capacity on the path from the root to the chain request's PoA. 

{\em CPVNF~\cite{CPVNF_proactive_place_in_CDN}:} This algorithm orders the critical and newly-arriving chains in a non-increasing order of the CPU capacity they require, if placed on an edge datacenter. It then places each such chain on the feasible available datacenter incurring the lowest cost according to~\eqref{Eq:def_obj_func}. This benchmark is an adaptation to our problem of the CPVNF algorithm~\cite{CPVNF_proactive_place_in_CDN}, which was used as a benchmark also in~\cite{VNF_place_in_public_cloud}. 

\textbf{Feasibility.}
Our \algtop\ algorithm,  as well as \ffit\ and \cpvnf, first considers only critical and newly arriving chains; if merely (re)placing these chains does not yield a feasible solution, the algorithm considers placing from scratch {\em  all} the chains in the system. Thus, referring to the framework presented in Fig.~\ref{fig:toplevel}, the benchmark algorithms (\ffit\ and \cpvnf) are used instead of \bu\ and \pushUp.
{Note that our comparison methodology 
over-estimates the performance of  our benchmark algorithms, as we give them 
the same (optimal) solution for the CPU allocation problem (found by GFA) ``for free".}

\textbf{Simulation methodology.}
The performance of LBound is deterministic. However, the performance of \ffit, \cpvnf\ and \algtop\ depends upon the arbitrary order of handling requests to which the algorithm in question gives the same priority. Hence, in each experiment we run each of these three algorithms 20 times, considering a random order of handling the requests.

The simulator has been developed in  Python and it is fully available on~\cite{SFC_mig_Github}.

\subsection{Resources required for finding a feasible solution}\label{sec:sim:feasibility}
As the first  step, we study the resource augmentation required by each algorithm to find a feasible solution. 
We focus on ten minutes of the trace, referring to a busy morning rush hour (08:20-08:30), with 6,859 and 9,351 distinct vehicles passing in the simulated area in Luxembourg and Monaco, respectively. 
We vary the ratio of RT chains, i.e., corresponding to requests with a maximum delay of 10\,ms. 
For each setting, we use binary search to find the minimum amount of resources (captured by the minimum CPU at the leaf datacenters) required by the considered algorithm  to find a feasible solution for {\em every} 1-second slot along the 10-minute trace.

Fig.~\ref{fig:cpu_vs_RT_prob} shows the results of this experiment. 
The amount of processing capacity required by \algtop\ is extremely close to the lower bound (LBound) in the Luxembourg scenario (Fig.~\ref{fig:cpu_vs_RT_prob_Lux}), and perfectly matches it in the Monaco one (Fig.~\ref{fig:cpu_vs_RT_prob_Monaco}). In contrast, the processing capacity required by \cpvnf\ and \ffit\ for finding a feasible solution is much higher: in Luxembourg's scenario, \cpvnf\ and \ffit\ typically need a processing capacity that is 50\%-100\% higher than the capacity required by \algtop.

As expected, the amount of resources required for obtaining a feasible solution consistently increases for a larger fraction of RT chains.  This happens because tighter timing constraints may require  allocating more CPU resources for each chain, to decrease the computational delay. Further, RT service requirements also dictate placing the chain close to the edge datacenter, thus reducing the use of processing capacities at higher-level datacenters.
However, in the Luxembourg scenario, the processing capacity required by \algtop\ with 100\% of RT chains is still lower than that required by \cpvnf\ and \ffit\ when no RT chain is present. 
Finally, comparing the Luxembourg  and Monaco scenarios, we note that the higher car density in Monaco,  and the smaller number of leaf-datacenters in that scenario (only 231 in Monaco, compared to 1,524 in Luxembourg), dictate using higher computation capacity in the leaf datacenters for finding a feasible solution. 

\subsection{Cost comparison}\label{sec:sim:cost}
 We now consider 30\% of RT chains and compare the costs obtained by different algorithms. Similarly to Sec.~\ref{sec:sim:feasibility}, here we also
 focus on ten minutes of the trace, referring to a busy morning rush hour (08:20-08:30). 

\begin{table}[tb!]
    \centering
{\scriptsize
    \caption{\label{tbl:cost_vs_rsrc}{Normalized cost of chain deployment and migration  vs.\  resource augmentation. The costs are normalized with respect to the cost obtained by the lower bound (LBound). An infinite cost indicates that the algorithm cannot find a feasible solution}} 
    \begin{tabular}{|c|c|c|c|}
        \hline
        \multicolumn{4}{|c|}{Luxembourg}\\
        \hline
        \multirow{2}{*}{$C_{cpu} / \hat{C}_{cpu}$} &
        \multicolumn{3}{c|}{Normalized Cost}\\ 
        \cline{2-4}
                & BUPU          & F-Fit         & CPVNF \rule{0pt}{2ex} \\ \hline
        1.00	& $\infty$	  	& $\infty$	 & $\infty$	 \\ \hline 
        1.06	& 1.76 			& $\infty$	 & $\infty$	 \\ \hline 
        1.50	& 1.91 			& $\infty$	 & $\infty$	 \\ \hline 
        2.00	& 1.17	 		& $\infty$	 & $\infty$	 \\ \hline 
        2.35	& 1.06	 		& 1.06	 	 & $\infty$	 \\ \hline 
        2.40	& 1.06	 		& 1.06	 	 & 1.06	  	 \\ \hline 
        2.50	& 1.05	 		& 1.05	 	 & 1.05	  	 \\ \hline 
      \multicolumn{4}{c}{}\\
      \hline
        \multicolumn{4}{|c|}{Monaco}\\
        \hline
        \multirow{2}{*}{$C_{cpu} / \hat{C}_{cpu}$} &
        \multicolumn{3}{c|}{Normalized Cost}\\ 
        \cline{2-4}
                & BUPU      & F-Fit      & CPVNF \rule{0pt}{2ex} \\ \hline
        1.000	& $\infty$	& $\infty$	 & $\infty$	 \\ \hline 
        1.002	& 1.36	 	& $\infty$	 & $\infty$	 \\ \hline 
        1.50	& 1.08	 	& $\infty$	 & $\infty$	 \\ \hline 
        1.58	& 1.09 		& 1.10		 & 1.10		 \\ \hline 
        2.00	& 1.05		& 1.05	 	 & 1.05		 \\ \hline 
        2.50	& 1.01		& 1.01	 	 & 1.01		 \\ \hline 
    \end{tabular}
}
\end{table}

\begin{table}[tb!]
    \centering
{\scriptsize    
    \caption{\label{tbl:migs_cost_vs_rsrc}{Normalized migration cost vs.\ resource augmentation}}
    \centering
    \begin{tabular}{|c|c|c|c|}
        \cline{2-4}
	    \multicolumn{1}{c|}{} &
        \multicolumn{3}{c|}{$C_{cpu} / \hat{C}_{cpu}$}\\  \cline{2-4}
	    \multicolumn{1}{c|}{} 
                   & 1.10   & 1.50    & 2.00    \\ \hline  
        Luxembourg & 639,609 & 556,739  & 84,138   \\ \hline
        Monaco     & 194,236 & 12,032   & 8,702    \\ \hline
    \end{tabular}
}
\end{table}
Tab.~\ref{tbl:cost_vs_rsrc} shows the average cost per second for each algorithm
{when varying the  resource augmentation factor, $C_\text{cpu}/\hat{C}_\text{cpu}$.
To make a meaningful comparison, the costs are normalized with respect to the cost obtained by LBound for the same amount of resources.
The table shows the normalized costs when the resource augmentation factor varies between 1 and 2.5.
}
In addition, for each scenario and  algorithm, the table presents the minimal amount of resource augmentation for which the considered algorithm finds a feasible solution with a confidence level of at least 99\%. When no feasible solution is found, the corresponding cost is infinite.

Interestingly, for a wide range of resource augmentation, {\em only} \algtop\ finds a feasible solution. 
When the resource augmentation is very small (e.g., only 100.2$\%$ of the resources required by LBound for finding a feasible solution in the Monaco scenario), the cost of \algtop's solution is significantly higher than that of \opt. However, when increasing the amount of resources, the gap between \algtop\ and \opt\ reduces very substantially. 
As for  \ffit\ and \cpvnf\, they need a significant resource augmentation for finding a feasible solution, e.g., 2.4 times  the processing capacity used by LBound in the Luxembourg scenario. 
Finally, for a high amount of resource augmentation (i.e., when resources are abundant), all strategies provide a solution of comparable cost.

{Tab.~\ref{tbl:migs_cost_vs_rsrc}, instead, captures the impact of the resource augmentation on the {\em migration cost} experienced by \algtop: the larger the resource augmentation, the lower the migration cost is.
Indeed, a tight resource budget forces \algtop\ to migrate many, even non-critical, chains to find a feasible solution.
On the other hand, high resource augmentation allows \algtop\ to find a feasible solution while placing more chains at a higher network level, thus reducing the current overall cost and mitigating the need for future migrations when the users move. 
}

\begin{figure}
    \centering
    \subfloat[\label{fig:cpu_vs_RT_prob_Lux}Luxembourg] {
        \includegraphics[height=3cm]{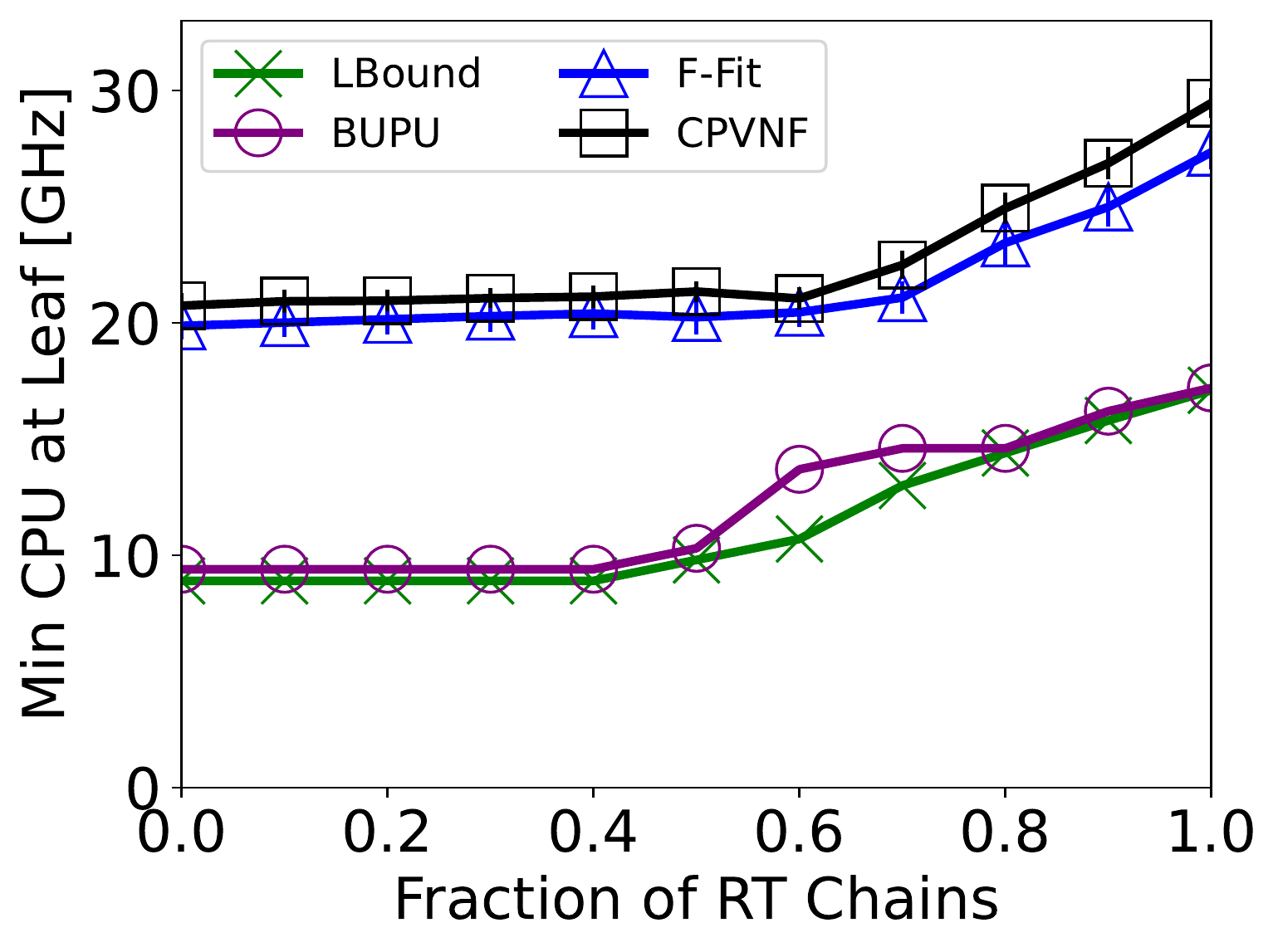}
    }
    \subfloat[\label{fig:cpu_vs_RT_prob_Monaco}Monaco] {
    \includegraphics[height=3cm]{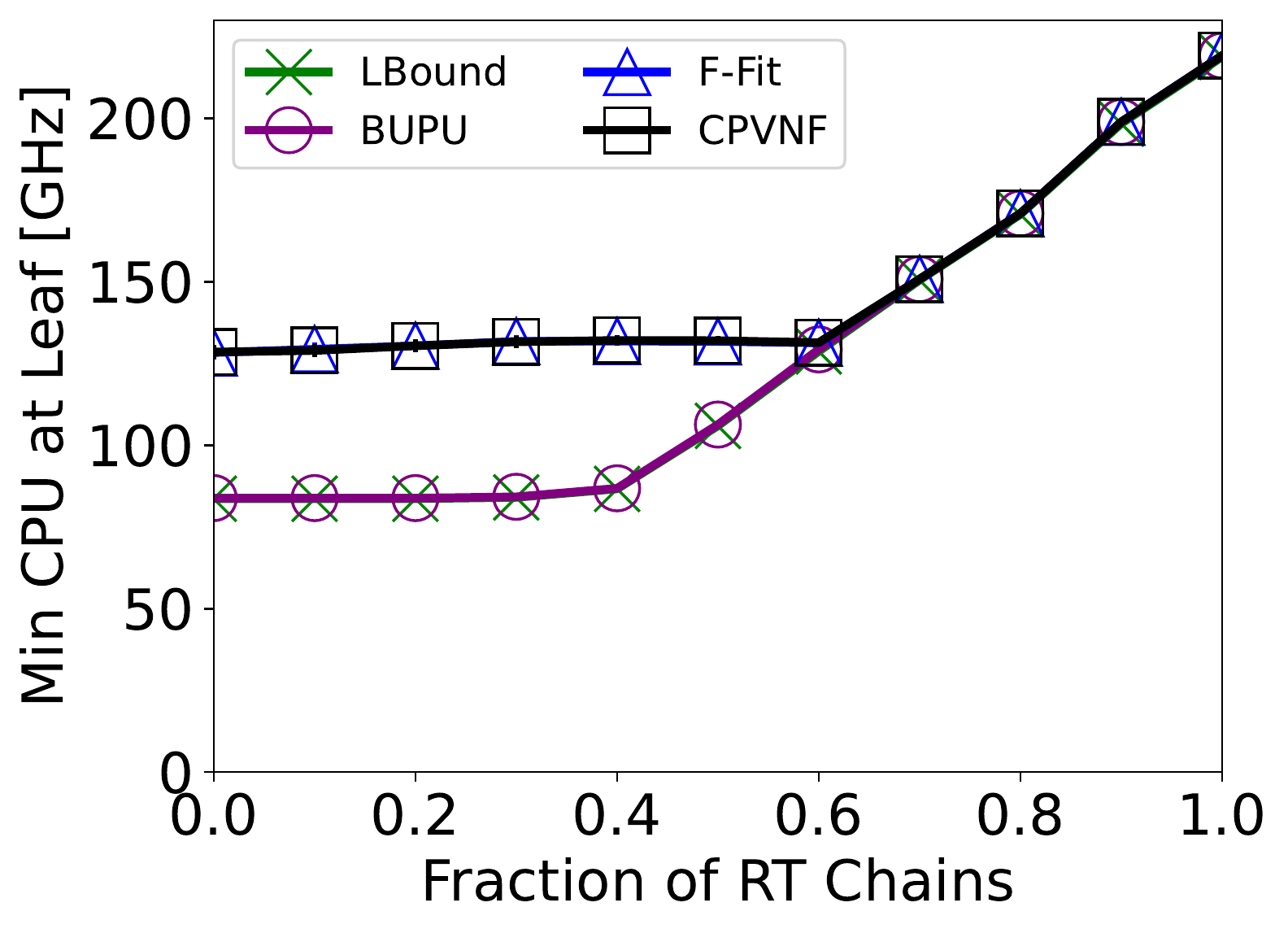}
    }
    \caption{Minimum required processing capacity for finding a feasible solution when varying the ratio of RT service requests.}
    \label{fig:cpu_vs_RT_prob}
\end{figure}

\subsection{Decision period}\label{sec:sim:decision_period}

Next, we focus on the impact of  decision period $T$ on the migration cost and on the SLA violation. To this end, we consider the 07:30-08:30 trace, and vary $T$.  The amount of CPU is set to 110\% the minimal capacity that \algtop\ needs for finding a feasible solution when $T=1$. This choice enforces tight capacity constraints, while being sufficient for allowing \algtop\ to find a feasible solution, even when varying  decision period $T$.

Fig.~\ref{fig:mig_cost_vs_T} depicts  
{the cost of migrating critical and non-critical chains,
}%
as $T$ varies. 
The overall migration cost is governed by  {\em non-compulsory migrations} (namely, migrations of non-critical chains) that the algorithm occasionally performs, as it cannot find a feasible solution otherwise.
In such a case the algorithm is forced to ``reshuffle'' the placement of possibly many chains, incurring a high overall migration cost (well beyond that imposed by critical chains alone).
Clearly, increasing $T$ leads to a lower cost of such non-compulsory migrations, as a smaller $T$ implies reducing the number of times the algorithm runs during the 1-hour trace, and in particular it reduces the number of such potential ``reshuffles''.
The migration of critical chains, instead, may be considered {\em compulsory}, and is  determined mainly by the user mobility, and not by decision period $T$.
Correspondingly, in the Monaco scenario, the cost of migrating critical chains hardly changes when varying $T$.
In the Luxembourg scenario, however, the cost of critical chains migration slightly decreases when increasing $T$. 

To better understand this phenomenon, consider Fig.~\ref{fig:mig_cost_vs_T_per_slot}, showing {the cost of migrating critical and non-critical chains, but now normalized per decision (i.e., per single run of \algtop). 
We consider that 
every second, some number $X$ of chains become critical. One can consider a {\em pessimistic} estimator, such that when using a decision period of $T$ seconds, the number of chains that become critical during the decision period is $X \cdot T$. 
In the Monaco scenario, the per-decision migration cost is indeed very close to our pessimistic estimator. This phenomenon can be attributed to the low mobility of vehicles in this trace, captured by the low average speed (recall Tab.~\ref{tab:sim_settings}).
However, when vehicles move faster, as in the Luxembourg scenario,
a sufficiently large  $T$ may translate to a single migration (per vehicle, per decision period), as opposed to several migrations of that vehicle when using a decision period of one second. This translates to a lower migration cost, compared to our pessimistic estimator. However, this comes at the cost of SLA violation. 

To study such violations incurred by increasing the decision period $T$, we considered the 
average violation time of critical chains. 
As expected, with a decision period $T$  an average SLA violation lasting roughly $T/2$ will be experienced. 
This is consistent with a model where each 
critical chain starts experiencing an SLA violation at a random time instant, uniformly distributed in $(0, T]$. 
We verified this behaviour in our evaluation for  both the Monaco and Luxembourg scenarios.}


Finally, we note that at each run, there is some probability that \algtop\ also migrates non-critical chains for finding a feasible solution, and this probability need not depend upon $T$. This indeed follows by considering the light-blue zone which is an almost constant additive increase of the migration cost, on top of that related to critical chains. 

\begin{figure}
    \centering
    \subfloat[Cost of migrating critical and non-critical chains during the 1-hour trace in Luxembourg (left) and in Monaco (right).\label{fig:mig_cost_vs_T}]{
        {\includegraphics[width=\subfloatWidth]{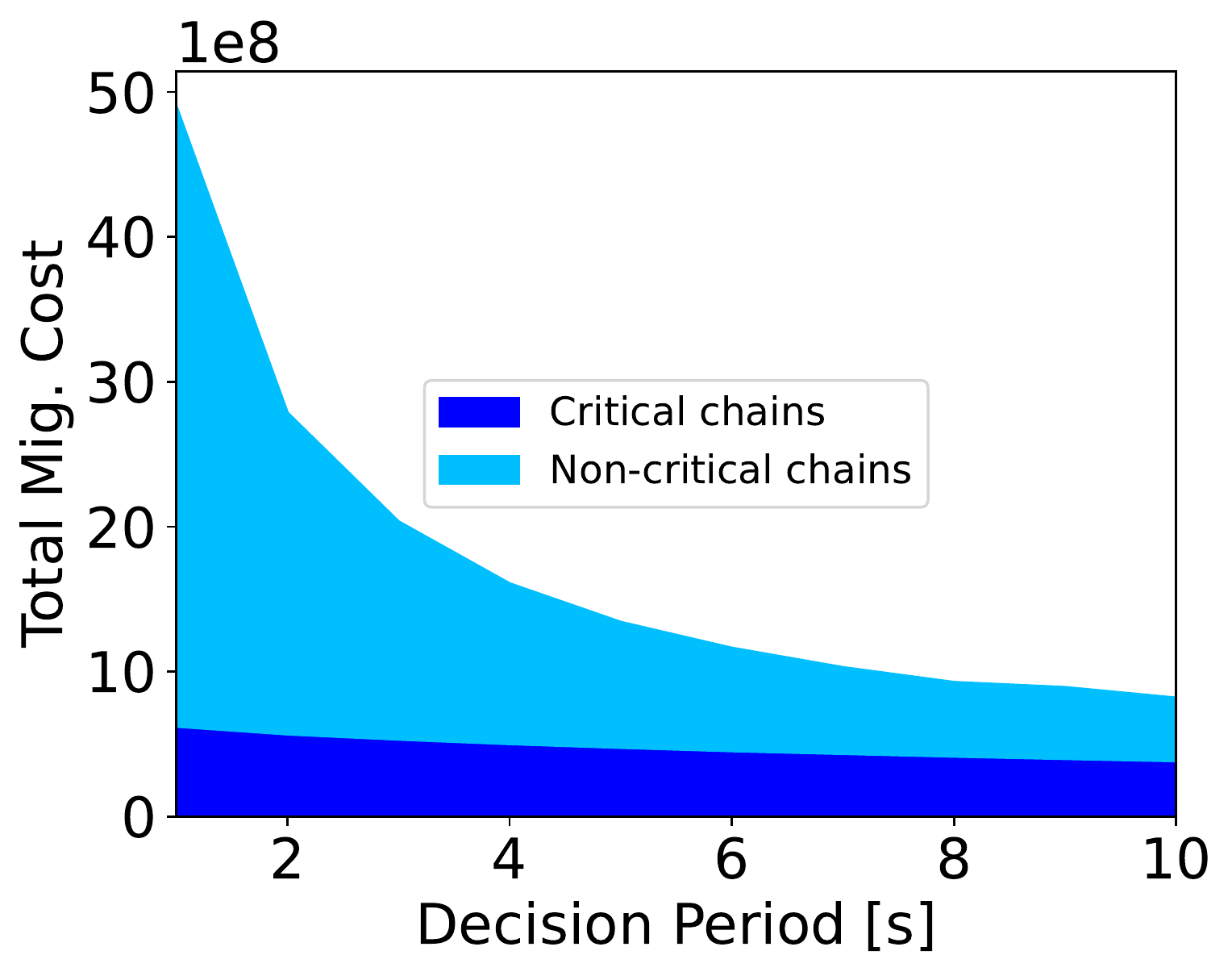}
        {
        \includegraphics[width=\subfloatWidth]{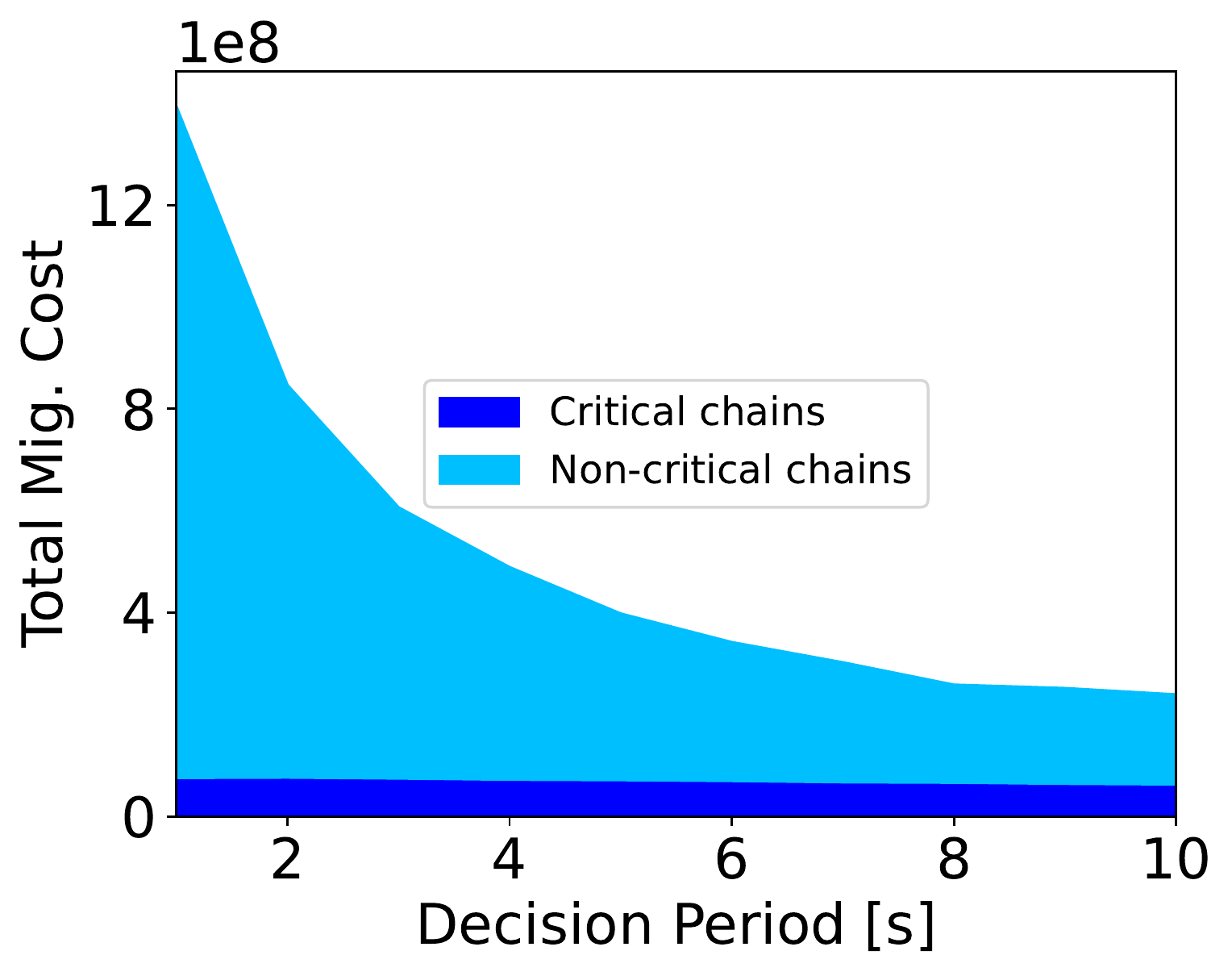}}
    }}

    \subfloat[The per-decision cost of migrating critical and non-critical chains in Luxembourg (left) and Monaco (right). The black line depicts our pessimistic estimator.\label{fig:mig_cost_vs_T_per_slot}] {
        {\includegraphics[width=\subfloatWidth]{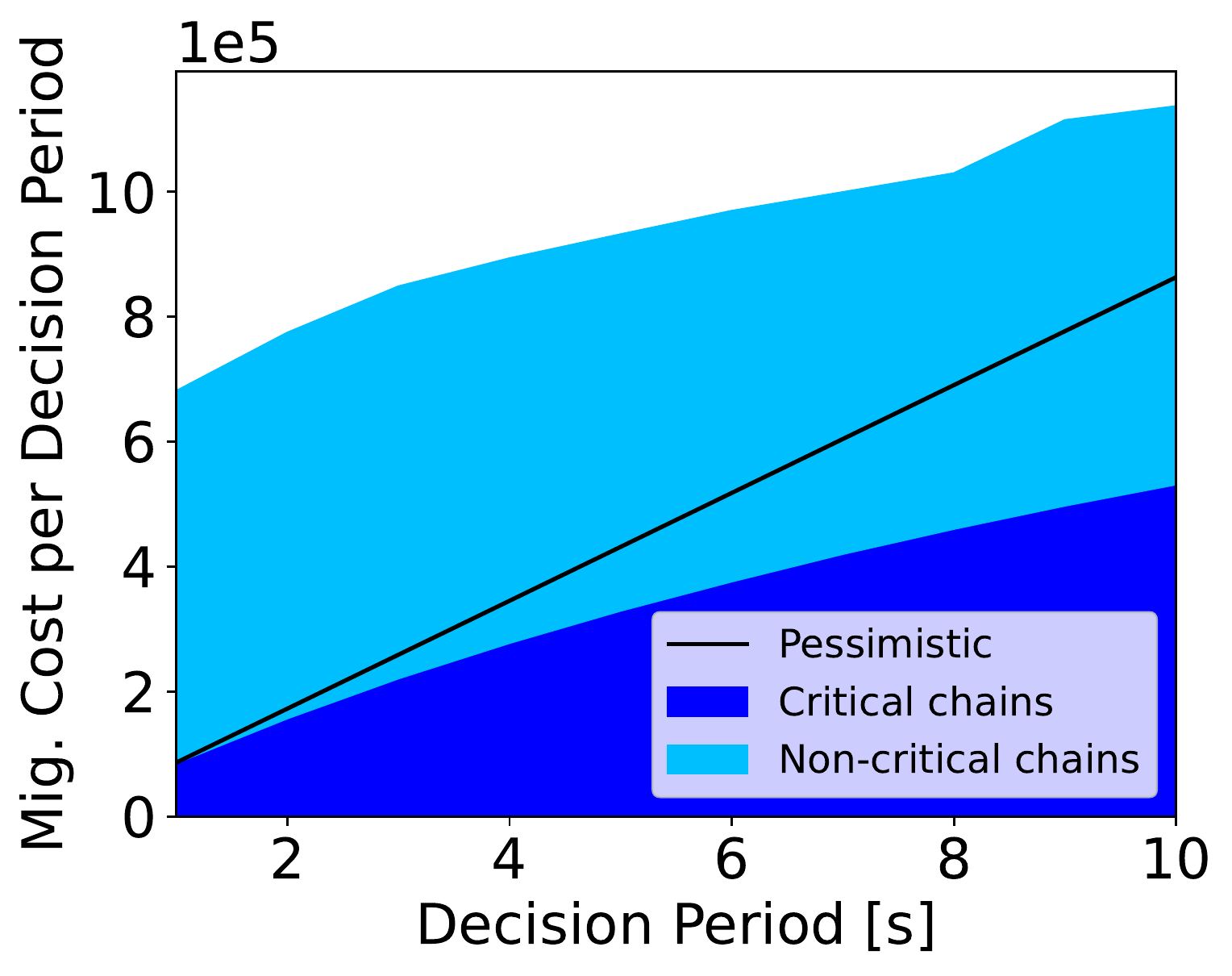} }
        {\includegraphics[width=\subfloatWidth]{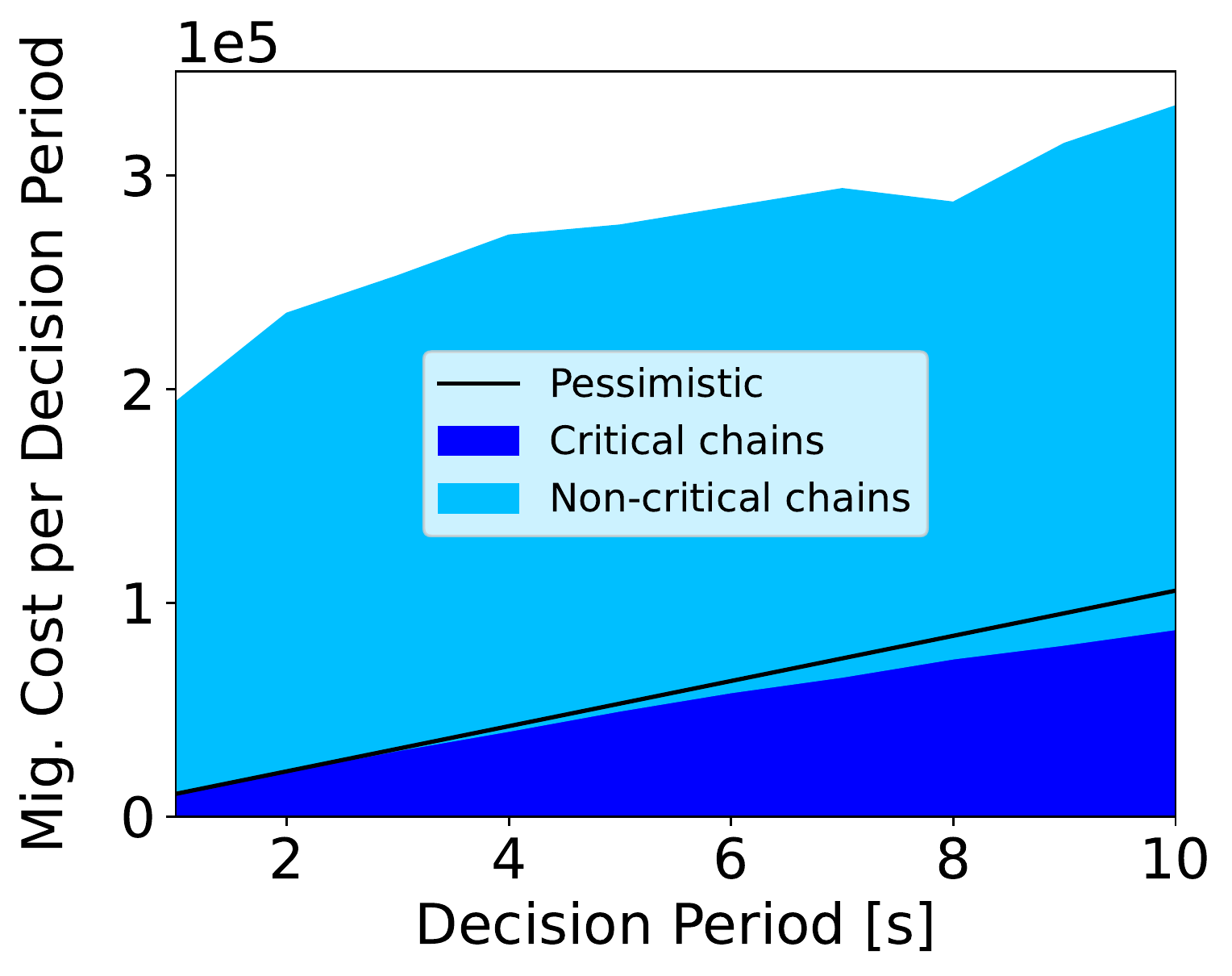}}
    }

    \caption{Impact of $T$ on migration cost.
    \label{fig:crit_n_mig_vs_T}}    
\end{figure}

\section{Related work}\label{sec:related}
Service migration has been extensively investigated in recent years.
The works~\cite{Dynamic_facility_location_Raz, Dynamic_facility_location_interval_graphs} address
    a dynamic VNF placement, where chains are migrated to serve mobile users. These studies disregard the computational delay, focusing on network delay solely. 
\cite{Dynamic_facility_location_Raz} designs efficient algorithms with strong performance guarantees -- e.g., $O(1)$ approximation ratio for the network cost, (i.e., the average user-datacenter distance), while using an $O(1)$ resource augmentation on the datacenter's capacities.
However,~\cite{Dynamic_facility_location_Raz} interprets the network delay as a part of the objective function, and not as a constraint. 
 Furthermore, \cite{Dynamic_facility_location_Raz} assumes that all current and future users' locations are known in advance. 
 
More realistic assumptions on the knowledge of users' locations are made in~\cite{FollowMe_J, Mig_modeling_containers_ML}, where the migration problem is modeled as a Markov decision process. 
Nonetheless, such an analysis relies on some assumptions about users' mobility (e.g., a stationary system with known, or predictable, transition probabilities), which do not conform with realistic scenarios.

Highly heterogeneous and unpredictable user mobility is considered in~\cite{CPVNF_proactive_place_in_CDN, MoveWithMe, mig_correlated_VMs, Companion_Fog}, which envision algorithms that consider various optimization criteria. 
CPVNF~\cite{CPVNF_proactive_place_in_CDN} is a greedy approach that we use as a benchmark and discuss in Sec.~\ref{Sec:sim_settings}.
\cite{MoveWithMe} decreases the migration overhead by clustering users, thus shrinking the amount of data migrated between datacenters. However, the model used in~\cite{MoveWithMe} assumes that the migration destination always has enough resources, and ignores computational delay. 
This model may conform with cloud computing where computational resources are abundant, and the primary source of delay is network delay, but not with edge computing, where the scarcity of resources and the tight delay constraint dictate considering the computational delay.~\cite{mig_correlated_VMs} considers multiple optimization criteria, i.e., the amount of resources consumed by migration, migration time, and service downtime. However,  the model used  in~\cite{mig_correlated_VMs} allows declining a migration request, or handling a request while breaking its target delay. 
In contrast, we address the problem of handling all the migration requests, while satisfying target delay constraints.
In a fog computing scenario, \cite{Companion_Fog} selects for each migration request a destination, based on the topological distance from the user, the availability of resources in the destination, and the data protection level in the destination.
However,~\cite{Companion_Fog} focuses on a selfish optimization of the migration destination for a single user, while we target finding a feasible global solution, while minimizing the overall system cost. 

Other works formulate the migration problem as an ILP
\cite{dynamic_sched_and_reconf_t, SFC_mig}, MILP~\cite{mig_or_reinstall}, or Mixed-Integer Quadratic Program ~\cite{Avatar}, but  
 none of them guarantees to find a feasible solution (possibly using more resources).
In particular, these solutions usually handle requests in parallel (the ILP/MIQP solvers), or in an arbitrary order, e.g., based on the request time~\cite{mig_correlated_VMs}. As a result, requests with relaxed delay constraints may be deployed on a scarce resource, possibly making the resource unavailable for tighter-delay applications, that cannot be deployed elsewhere. Our solution, on the contrary, orders requests based on their delay constraints, before handling them.
    
Some studies address predicting future users' mobility and traffic fluctuation~\cite{VNF_mig_w_pred, MultiScaler_Kassler, NFV_ego_learning}. However, these works differ from ours by both the objective function, and the tools used.
The works in~\cite{Dance_elephants_VM_mig, Linux_Containers_mig_18}  
decrease the migration overhead 
 (migration time, service downtime, and quantity of resources consumed by migration) by optimally deciding which data to migrate and in what order and pace. 
\cite{Multiple_mig_sched_Buyya} considers multiple simultaneous migration requests, and focuses on scheduling the  migrations and determining the bandwidth allocated to each migration process to minimize the total migration time and the service level objective violation.
\cite{dynamic_sched_and_reconf_t}
uses the optimal stopping theory to optimize the length of the decision period between subsequent runs of the migration algorithm. 

Some of the optimizations mentioned above are orthogonal to our work, and hence could be incorporated into our solution to boost performance.
Finally, implementation issues and performance overhead of VM live migration are discussed in~\cite{Mig_VNF_study}.

\section{Conclusions}\label{sec:conc}
We tackled mobile service provisioning in the edge-cloud continuum, envisioning algorithmic solutions with provable guarantees in terms of solution feasibility and resource usage. Besides addressing service chain deployment and resource allocation, our solution fulfills the service delay requirements and tackles service migration  by deciding which chains should be migrated and, if migrating, towards which datacenter. Our numerical results, derived in  large-scale, vehicular  scenarios, highlight interesting trade-offs and show that our approach may provide a feasible solution by using half the quantity of computing resources required by  state-of-the-art alternatives. We also show the robustness of the proposed approach, when varying the considered scenarios and the decision period of the algorithm.
\bibliographystyle{IEEEtran}
\bibliography{Refs}

\end{document}